\newcommand{\red}[1]{\textcolor{black}{#1}}
\newcommand{\blue}[1]{\textcolor{black}{#1}}
\begin{document}
\setlength{\pdfpagewidth}{8.5in}
\setlength{\pdfpageheight}{11in}
\catchline{0}{0}{2013}{}{}

\markboth{Zirui Niu, Giordano Scarciotti, and Alessandro Astolfi}{Model Reduction for Hybrid Systems with State-dependent Jumps}

\title{Interconnection-based Model Reduction for Linear Hybrid Systems}

\author{Zirui Niu$^{a}$, Giordano Scarciotti$^a$, and Alessandro Astolfi$^{a, b}$}

\address{$^a$Department of Electrical and Electronic Engineering, Imperial College London, London SW7 2AZ, UK\\
E-mails: [zirui.niu20,g.scarciotti,a.astolfi]@imperial.ac.uk}

\address{$^b$Dipartimento di Ingegneria Civile e Ingegneria Informatica, Universit{\`a} di Roma ``Tor Vergata'', Via del Politecnico 1, 00133 Roma, Italy}

\maketitle

% \begin{doublespace}
\begin{abstract}
In this paper, we address the model reduction problem for linear hybrid systems via the interconnection-based technique called moment matching. We consider two classical interconnections, namely the direct and swapped interconnections, in the hybrid setting, and we present families of reduced-order models for each interconnection via a hybrid characterisation of the steady-state responses. By combining the results for each interconnection, the design of a reduced-order model that achieves moment matching simultaneously for both interconnections is studied. In addition, we show that the presented results have simplified counterparts when the jumps of the hybrid system are periodic. A numerical simulation is finally given to illustrate the results.
\end{abstract}

\keywords{Model order reduction; hybrid systems; moment matching.}

\begin{multicols}{2}

\section{Introduction}

\textit{Hybrid systems} are dynamical models that integrate the dynamics of continuous-time systems and those of discrete-time systems, see~\cite{GoeSanTee:09}. Such dynamical models appear in various engineering fields, such as walking robots and mechanical systems with impacts~\cite{Bro:99,MorPanStr:88}, switching circuits~\cite{ref:erickson2007fundamentals}, and neural models~\cite{Buc:88,PikRosKur:01}.
% Many mechanical systems, \textit{e.g.}, walking robots and billiards, are subject to impacts that change the evolution of the system instantaneously, see \textit{e.g.},  \cite{Bro:99,MorPanStr:88}. Biological systems, in which, for instance, neural impulsive behaviour can be coupled with continuous cell dynamics, constitute another area of application, see \textit{e.g.},  \cite{Buc:88,PikRosKur:01}. 
As a result of being studied and analysed by different research communities, hybrid systems have been studied under different terminologies, such as \textit{hybrid automata}~\cite{Hen:96}, \textit{switching systems}~\cite{LibMor:99}, and \textit{impulsive systems}~\cite{BaiSim:89,ref:zavalishchin2013dynamic,ref:liu2019impulsive}. While many control problems regarding hybrid systems have been recently solved (see, \textit{e.g.},~\cite{ref:Goebel2012hybrid,ref:sanfelice2021hybrid} and references therein), others, such as the problem of model reduction, remain open.

Given a dynamical system, the problem of model reduction consists in the construction of a simplified mathematical description with respect to some notion of complexity, while preserving part of the behaviour and properties of the original description. The literature on model reduction for continuous-time and discrete-time systems is rich and the field can be considered mature, see, \textit{e.g.}, \cite{Glo:84,Moo:81,Kim:86,AntBalKanWil:90,SchGra:00,Ant:05,GraVer:06,HinVol:05,WilPer:02,Ast:10,ref:scarciotti2017nonlinear}. However, only few attempts have been made to solve the problem of model reduction for hybrid systems. In particular,~\cite{HabVaS:02} has addressed the problem of reduction on polytopes, and~\cite{MazVinBalBic:08}, based on the concept of abstraction, has presented a different approach by using balanced truncation for the continuous-time subsystems and pseudo-equivalent location elimination for the discrete-time subsystem. Many model reduction methods were also proposed for discrete-time switching systems~\cite{GaoLamWang:06,WuZhe:09,ZhaShiBouWan:08}, switching systems of Markovian type~\cite{ZhaHuaLam:03}, and switching systems with average dwell time~\cite{ZhaShi:08,ZhaBouShi:09}.~\cite{ShaWis:09,ShaWis:11} have presented an approach based on the notion of generalized Gramians to address the problem of model reduction of switching systems,~\cite{ref:peitz2019koopman} has introduced Koopman operator-based model reduction to simplify the control of nonlinear systems by switched-system. Finally, the problem of reducing a hybrid system to a continuous-time model near a periodic orbit has been investigated in~\cite{BurRevSas:11}. 

\vspace{3mm}
\begin{figurehere}
\centering
\includegraphics[width=\columnwidth]{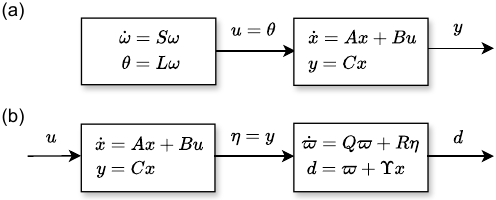}
\caption{Illustration of the direct (a) and swapped (b) interconnections in the moment matching of linear continuous-time systems.}
\label{fig:MMIntLTI}
\end{figurehere}%

This paper addresses the model reduction problem for linear hybrid systems based on the idea of moment matching, which is an interconnection-based model-reduction technique. This constructs reduced-order models that match the steady-state output responses of systems in certain interconnection frameworks, namely the direct and swapped interconnections depicted in Fig.~\ref{fig:MMIntLTI}. \blue{More specifically, consider a complex system to be reduced. The direct interconnection in Fig.~\ref{fig:MMIntLTI}(a) features a signal generator driving the complex system, while the swapped interconnection in  Fig.~\ref{fig:MMIntLTI}(b) consists of the complex system driving an output filter,} see~\cite{ref:scarciotti2024interconnection} for more detail. The work in this paper is originally inspired by~\cite{ScaAst:15c} and~\cite{ref:mao2024model}, in which the classical moment matching approach has been extended to the interconnection of systems with ``explicit'' systems, \textit{i.e.}, systems with dynamics that cannot be modelled by differential equations. While explicit systems can also model hybrid systems,~\cite{ScaAst:15c} and~\cite{ref:mao2024model}
still focus on the model reduction of linear continuous-time systems. This paper extends the model reduction study to hybrid systems by generalising the interconnection frameworks in Fig.~\ref{fig:MMIntLTI} to the hybrid setting. Preliminary results have been presented in~\cite{ref:Scarciotti2016ModelJumps}, in which only the direct interconnection has been considered (see also~\cite{GalSas:15} for the case of periodic jumps). In this paper, we study both the direct and swapped interconnections, and we do not assume periodicity of the jumps. For each interconnection, we first introduce a hybrid characterisation of the steady-state response, herein referred to as the \textit{moment}, and then propose families of linear hybrid reduced-order models capable of matching the steady-state responses associated with each interconnection. By combining the results for each interconnection, we present the design of a reduced-order model capable of matching the steady-state responses for both interconnections (a problem classically called \textit{two-sided moment matching}~\cite{Ionescu2016Two-SidedSystems}). In addition, we also show that the proposed results can be simplified in the periodic case. The results are illustrated by means of a numerical example.

The rest of the paper is organised as follows. In Section~\ref{sec:Prelim} we review the moment matching technique and formulate the hybrid model reduction problem addressed in the paper. Sections~\ref{sec:directMM} and~\ref{sec:swapMM} study the characterisation of the steady-state responses of the direct and swapped interconnections, respectively, and present families of reduced-order models in the sense of moment matching. Section~\ref{sec:2SidedMM} presents the reduced-order model design for two-sided moment matching. All presented results are specialised to the case of periodic jumps in Section~\ref{sec:Periodic}, and illustrated by means of a numerical simulation in Section~\ref{sec:example}. Finally, Section~\ref{sec:concl} provides some concluding remarks.

\textbf{Notation.} We use standard notation. $\mathbb{R}_{\ge0}$ denotes the set of non-negative real numbers, $\mathbb{R}_{>0}$ denotes $\mathbb{R}_{\ge0}\setminus\{0\}$, and $\mathbb{Z}$ denotes the set of all integers. $\mathbb{C}_{<0}$ and $\mathbb{C}_0$ denote the set of complex numbers with strictly negative real part and zero real part, respectively. $\mathbb{D}_{<1}$ denotes the set of complex numbers with modulus strictly smaller than one and $\mathbb{D}_{1}$ the set of complex numbers with modulus equal to one. The symbol $I_n$ denotes the identity matrix of order $n$, $\sigma(A)$ denotes the spectrum of the matrix $A\in\mathbb{R}^{n\times n}$, and $\|A\|$ indicates the induced Euclidean matrix norm. %The symbol $\epsilon_k$ indicates a vector with the $k$-th element equal to 1 and with all the other elements equal to 0.
%The symbol $\mathcal{L}(f(t))$ denotes the Laplace transform of the function $f(t)$ (provided that $f(t)$ is Laplace transformable) and, abusing the notation, $\sigma(\mathcal{L}(f(t)))$ denotes the set of poles of $\mathcal{L}(f(t))$. 
% Given two functions $f:Y\to Z$ and $g:X\to Y$, with $f\,\circ\,g:X\to Z$ we denote the composite function $(f\,\circ\,g)(x)=f(g(x))$ which maps all $x\in X$ to $f(g(x))\in Z$.
Given a function $\omega : \mathbb{R} \times \mathbb{Z} \to \mathbb{R}^{\nu}$ defined on the ``hybrid time domain'' $\mathcal{H}\subset \mathbb{R} \times \mathbb{Z}$, $\dot \omega$ indicates its time derivative, namely $\dot \omega= \frac{\partial }{\partial t}\omega(t, j)$, whereas $\omega^+$ indicates its forward discrete shift, namely $\omega^+=\omega(t, j+1)$. When a function $\Pi(\cdot,\cdot)$ defined on $\mathcal{H}$ is indicated as $\Pi_j$, we are denoting the value $\Pi(t_j,j)$, \textit{i.e.} the value of $\Pi(t, j)$ at the time the $j$-th jump occurred. Given $k - p + 1$ matrices $X_j \in \mathbb{R}^{n \times n}$ for $j=p, (p+1), \dots, (k-1), k$, with any integer $p \leq k$, $\prod_{\overrightarrow{j=p}}^{k} X_j$ and $\prod_{\overleftarrow{j=p}}^{k} X_j$ indicate the products $X_{p} X_{p+1}\cdots X_k$ and $X_{k} X_{k-1}\cdots X_p$, respectively. In addition, if $p > k$, $\prod_{\overrightarrow{j=p}}^{k} X_j = \prod_{\overleftarrow{j=p}}^{k} X_j = I_n$
%and the set $\mathbb{D}$ of complex numbers with modulus strictly smaller than one, \textit{i.e.}\ $\mathbb{D}:=\{s\in\mathbb{C}:|s|<1\}$

\section{Preliminaries and problem formulation}
\label{sec:Prelim}

In this section we introduce a few basic definitions related to the steady-state responses, and we formulate the moment matching problem addressed in the paper.

\subsection{On the Notion of Steady-state Response}
Consider a system of interest. We introduce the notion of steady-state response, starting with representations of the state trajectories.

\begin{definition}\textbf{\cite{ZadDes:63,KalFalArb:69}}
\label{NOL-defExp}
Let $x(t) \in \mathbb{R}^n$ be the state of a dynamical system $\Sigma$ and $u(t) \in \mathbb{R}^m$ be the $m$-dimensional input of $\Sigma$. Let $t_0$ and $x_0=x(t_0)$ be the initial time and the initial state, respectively. If there exists a function $\phi: \mathbb{R} \times \mathbb{R} \times \mathbb{R}^n \times \mathbb{R}^m \to \mathbb{R}^n$ such that
\begin{equation}
\label{ch7-eq4a}
x(t)=\phi(t,t_0,x_0, u_{[t_0, t)}),
\end{equation}
for all $t\ge t_0$, we call equation (\ref{ch7-eq4a}) the \textit{representation in explicit form}, or the \textit{explicit model}, of $\Sigma$.\\
Assume $\phi(t,t_0,x_0,u)$ has a continuous derivative with respect to $t$ for every $t_0$, $x_0$ and $u$, and there exists a function $f: \mathbb{R}^n \times \mathbb{R}^m \to \mathbb{R}^n$, continuous for each $t$ over $\mathbb{R}^n \times \mathbb{R}^m$, such that
\begin{equation}
\label{ch7-eq4b}
\dot x = f(x,u).
\end{equation}
We call the differential equation~(\ref{ch7-eq4b}) the \textit{representation in implicit form}, or the \textit{implicit model}, of $\Sigma$.
\end{definition}

% \begin{definition}\textbf{\cite{Isi:95}}
% \label{def:steady-state}
% If ${x^\star}\in \mathcal{X}\subset \mathbb{R}^{n}$ and \red{${u^\star}_{[t_0, +\infty)}\in \mathcal{U} \subset \mathbb{R}^m$} are such that
% \begin{equation}
% \lim_{t \to \red{+\infty}} \|\phi(t, t_0, x^{\circ}, u^\star) - \phi(t,t_0, x^{\star}, u^\star)\| = 0
% \end{equation}
% for every $x^{\circ} \in \mathcal{X}$,  then the response $x_{ss}(t)=\phi(t,t_0, x^{\star}, u^\star)$ is called the \textit{steady-state response} of system $\Sigma$ to the input $u^\star$.
% \end{definition}
We refer the reader to~\cite{ref:isidori2008steady} for a formal definition of steady-state and we stress that the steady-state response can be
defined, under a few hypotheses, for systems in explicit form, \textit{i.e.} it is not necessary that the steady-state response be the solution of a differential equation. 

\subsection{Review of Moment Matching}\label{subsec:MMreview}
We now recall the basic concepts of moment matching. Consider a \blue{linear continuous-time minimal system} described by the equations
\begin{equation}
\label{equ:LTIsystem}
\begin{array}{l}
\dot x =Ax+Bu,\\
y=Cx,
\end{array}
\end{equation}
with $x(t)\in \mathbb{R}^n$, \blue{$u(t)\in \mathbb{R}^m$, $y(t)\in \mathbb{R}^p$, $A\in \mathbb{R}^{n\times n}$, $B\in \mathbb{R}^{n\times m}$ and $C\in \mathbb{R}^{p\times n}$}. Then the moment of system~\eqref{equ:LTIsystem} is defined as follows.

\begin{definition}\textbf{\cite{ref:scarciotti2024interconnection}} \label{def:moment}
    Consider system~(\ref{equ:LTIsystem}) and any two matrices $S \in \mathbb{R}^{\nu \times \nu}$ and $Q \in \mathbb{R}^{\nu \times \nu}$ with their spectrums satisfying $\sigma(S) \cap \sigma(A) = \emptyset$ and $\sigma(Q) \cap \sigma(A) = \emptyset$. Let matrices \blue{$L \in \mathbb{R}^{m \times \nu}$ and $R \in \mathbb{R}^{\nu \times p}$} be such that the pair $(S, L)$ is observable and the pair $(Q, R)$ is reachable. Then
    the matrices $C\Pi$ and $\Upsilon B$ are called the \textit{moments} of system~(\ref{equ:LTIsystem}) at $(S,\, L)$ and $(Q,\, R)$ respectively, where $\Pi \in \mathbb{R}^{n \times \nu}$ and $\Upsilon \in \mathbb{R}^{\nu \times n}$ are the unique solutions of the Sylvester equations
    \begin{subequations}\label{equ:MomSylv}
    \begin{align}
        \Pi S &= A\Pi + BL, \label{equ:SylvPiLTI} \\
        Q\Upsilon &= \Upsilon A + RC.\label{equ:SylvUpsilonLTI}
    \end{align}
    \end{subequations}
\end{definition}

% Define $W(s):= C(sI - A)^{-1} B$ be the associated transfer function. Then the moment of system~\eqref{equ:LTIsystem} is defined as follows.

% \begin{definition}\label{def:IntpMoment}(\cite{Ant:05}) 
% Let $s_i \in \mathbb{C} \backslash \sigma(A)$. The 0-moment of system~\eqref{equ:LTIsystem} at $s_i$ is the complex number $\eta_0\left(s_i\right):=W(s_i)$. Meanwhile, the $k$-moment of system~\eqref{equ:LTIsystem} at $s_i$ is the complex number $\eta_k\left(s_i\right):=\frac{(-1)^k}{k!}\left[\frac{d^k}{d s^k}W(s)\right]_{s=s_i}$ for some integer $k \geq 1$.
% \end{definition}

% Definition~\ref{def:IntpMoment} provides an interpolation motion of moment. In fact, given a set of interpolation points $\{s_1, s_2, \cdots, s_\rho\}$ for some integer $\rho \geq 1$, the moments of system~\eqref{equ:LTIsystem} at these interpolation points can be alternatively characterised by means of Sylvester equations.

\begin{remark}\label{rmk:momentFreq}
    The moment is classically defined in terms of the transfer function of the system. More specifically, \blue{consider a single-input single-output system~\eqref{equ:LTIsystem}, \textit{i.e.},$m = p = 1$.} The $k$-th moment of this system at an interpolation point $\bar{s} \in \mathbb{C} \setminus \sigma(A)$ is defined as the $k$-th coefficient of the Laurent series expansion of its transfer function at $\bar{s}$~\cite{Ant:05}.~\cite{GalVanVDo:04a} and~\cite{Ast:10} have shown that the moments at the interpolation points coinciding with the eigenvalues of the matrices $S$ and $Q$ are uniquely determined by the elements of $C\Pi$ and $\Upsilon B$. \blue{The moment matching problem for multiple-input multiple-output (MIMO) systems has been studied by~\cite{ref:shakib2023time}.}
\end{remark}

While the moment is originally defined in the Laplace domain, its relation with the Sylvester equations~\eqref{equ:MomSylv}, inspired by the regulator theory in~\cite{Hua:04}, provides a bridge between the moments and the steady-state output responses of certain interconnection structures, namely the direct and swapped interconnections, as displayed in Fig.~\ref{fig:MMIntLTI} and summarised as follows. 

\begin{theorem}\textbf{\cite{Ast:10a,ref:padoan2017geometric,ref:mao2024model}}. \label{thm:MMLTIss}
Consider system~(\ref{equ:LTIsystem}) and suppose $\sigma(A) \subset \mathbb{C}_{<0}$. 
% Given a set of interpolation points $\mathcal{P} = \{s_1, s_2, \ldots, s_\nu\} \subset \mathbb{C}_0$ with elements $\{s_i\}_{i=1}^{\nu}$ distinct, 
Let $S \in \mathbb{R}^{\nu \times \nu}$ and $Q \in \mathbb{R}^{\nu \times \nu}$ be any two 
% non-derogatory\footnote{A matrix is non-derogatory if its characteristic and minimal polynomial coincide}
matrices with simple eigenvalues and disjoint spectra satisfying $\sigma(S) \subset \mathbb{C}_{0}$ and $\sigma(Q) \subset \mathbb{C}_{0}$. Let $L \in \mathbb{R}^{m \times \nu}$ and $R \in \mathbb{R}^{\nu \times p}$ be any two matrices such that the pair $(S, L)$ is observable and the pair $(Q, R)$ is reachable. Then the following statements hold
\begin{itemize}
    \item The moment at $(S, L)$ is in a one-to-one relation with the steady-state response of the output $y$ of the direct interconnection between system~(\ref{equ:LTIsystem}) and the signal generator
    \begin{equation}
    \label{eq:genS}
        \dot{\omega} =S \omega, \quad \theta=L \omega,
    \end{equation}
    via $u = \theta$ (Fig.~\ref{fig:MMIntLTI}(a)), provided $(S,\, \omega(0))$ is reachable.
    \item The moment at $(Q, R)$ admits a one-to-one relation with the steady-state response of the output $\varpi$ of the swapped interconnection between system~(\ref{equ:LTIsystem}) and the filter
    \begin{equation}
    \label{eq:filQ}
        \dot{\varpi} = Q \varpi+R \eta,
    \end{equation}
   via $\eta = y$ (Fig.~\ref{fig:MMIntLTI}(b)), for $x(0) = \varpi(0) = 0$ and any non-zero signal $u$ that exponentially decays to zero.
\end{itemize}
\end{theorem}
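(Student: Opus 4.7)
The plan is to use a Sylvester-equation-based change of variables to separate, for each interconnection, a persistent steady-state component (driven by the exogenous $S$- or $Q$-dynamics) from an exponentially decaying transient. This is the standard mechanism underlying the link between moment matching and output regulation; see~\cite{Hua:04}.

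For the direct interconnection, the closed-loop system is $\dot x = Ax + BL\omega$, $\dot\omega = S\omega$. I would introduce $z = x - \Pi\omega$, with $\Pi$ the unique solution of~\eqref{equ:SylvPiLTI}, and verify by direct substitution that $\dot z = Az$. Hurwitz-stability of $A$ then forces $z(t)\to 0$ exponentially, so $x(t) - \Pi\omega(t)\to 0$ and the steady-state output is $y_{\mathrm{ss}}(t) = C\Pi\,e^{St}\omega(0)$. The moment $C\Pi$ thus parametrises $y_{\mathrm{ss}}$; for the converse direction, differentiating $y_{\mathrm{ss}}$ up to order $\nu-1$ at $t=0$ yields a matrix identity $Y_{0} = C\Pi\,[\omega(0),\,S\omega(0),\,\ldots,\,S^{\nu-1}\omega(0)]$, and reachability of $(S,\omega(0))$ makes the right-hand Krylov factor invertible, producing $C\Pi$ uniquely. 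Observability of $(S,L)$ then guarantees that the input signal $\theta$ actually excites all modes of $S$, so the parametrisation is faithful.

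For the swapped interconnection, with $\dot x = Ax + Bu$ and $\dot\varpi = Q\varpi + RCx$, I would set $z = \varpi + \Upsilon x$ where $\Upsilon$ solves~\eqref{equ:SylvUpsilonLTI}. A short calculation using $RC + \Upsilon A = Q\Upsilon$ gives $\dot z = Qz + \Upsilon Bu$. With zero initial conditions, $z(t) = e^{Qt}\!\int_{0}^{t} e^{-Q\tau}\Upsilon B\,u(\tau)\,d\tau$; since $\sigma(Q)\subset \mathbb{C}_0$ is simple, $\|e^{-Q\tau}\|$ is bounded, and exponential decay of $u$ forces convergence of the integral as $t\to\infty$. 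Combined with $x(t)\to 0$ (from Hurwitz $A$ and $u\to 0$), this gives $\varpi_{\mathrm{ss}}(t) = e^{Qt}\!\int_{0}^{\infty} e^{-Q\tau}\Upsilon B\,u(\tau)\,d\tau$, parametrised by the moment $\Upsilon B$.

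The main difficulty I anticipate is the inverse direction of the bijection in the swapped case: a single decaying input samples $\Upsilon B$ only along the vector $\int_{0}^{\infty} e^{-Q\tau}u(\tau)\,d\tau\in\mathbb{R}^{m}$, so recovering $\Upsilon B\in\mathbb{R}^{\nu\times m}$ requires exploiting the freedom in choosing $u$ and invoking reachability of $(Q,R)$ to guarantee the resulting collection of data is rich enough to span all columns. The direct case is comparatively cleaner because a single autonomous signal $\omega$ already produces a Krylov matrix of full rank. Everything else amounts to routine linear-ODE manipulations once the two changes of variables $z = x-\Pi\omega$ and $z = \varpi + \Upsilon x$ are in place.
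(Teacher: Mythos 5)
Your proposal is correct and follows exactly the standard argument: the paper itself does not prove Theorem~\ref{thm:MMLTIss} (it is quoted from~\cite{Ast:10a,ref:padoan2017geometric,ref:mao2024model}), but the two changes of variables you use, $z=x-\Pi\omega$ with $\dot z=Az$ and $z=\varpi+\Upsilon x$ with $\dot z=Qz+\Upsilon Bu$, are precisely the mechanism the cited works employ and that this paper generalises to the hybrid setting in Theorems~\ref{thm:PiSS} and~\ref{thm:UpsilonSS}. Your remarks on recovering $C\Pi$ via the Krylov matrix of $(S,\omega(0))$ and on the need to vary $u$ to recover $\Upsilon B$ in the swapped case are also consistent with how the one-to-one relations are established in those references.
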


Theorem~\ref{thm:MMLTIss} provides an ``interconnection-based'' interpretation of moment matching, suggesting the possibility of extending the method to general classes of systems, see, \textit{e.g.},~\cite{ref:scarciotti2017nonlinear,ref:scarciotti2024interconnection} and references therein. In this paper, we also rely on this ``interconnection-based'' interpretation to study the moment matching problem for linear hybrid systems.

As a result of Definition~\ref{def:moment} and Theorem~\ref{thm:MMLTIss}, one can easily show that the family of models
\begin{equation}\label{equ:ROMSL}
    \dot{\xi} = (S-G L) \xi+G u, \quad \psi=C \Pi \xi,
\end{equation}
matches the moment of system~\eqref{equ:LTIsystem} at $(S, L)$ for any $G \in \mathbb{R}^{\nu \times m}$ such that $\sigma(S - GL) \cap \sigma(S) = \emptyset$. Likewise, the family of models
\begin{equation}\label{equ:ROMQR}
\dot{\xi}=(Q-R H) \xi+\Upsilon B u, \quad \psi=H \xi,
\end{equation}
matches the moment of system~\eqref{equ:LTIsystem} at $(Q, R)$ for any $H \in \mathbb{R}^{p \times \nu}$ such that $\sigma(Q - RH) \cap \sigma(Q) = \emptyset$. Moreover, when $\sigma(S) \cap \sigma(Q) = \emptyset$, designing a model that matches the moments of~(\ref{equ:LTIsystem}) at $(S,\,L)$ and $(Q,\, R)$ simultaneously (and therefore matching the steady state output responses for both interconnections in Fig.~\ref{fig:MMIntLTI}(a) and (b)) is called two-sided moment matching.
% , with the overall interconnection depicted in Fig.~\ref{fig:MMIntLTI}(c). Under the assumptions posed by Theorem~\ref{thm:MMLTIss}, this two-sided matching can be characterised by the steady-state responses of $d$ and $\varpi$ as
% \begin{equation*}
%     d_{ss}(t) - \varpi(t) = \Upsilon \Pi \omega(t).
% \end{equation*}
This matching can be achieved, for example, by model~\eqref{equ:ROMSL} with $G = (\Upsilon \Pi)^{-1} \Upsilon B$, or by model~\eqref{equ:ROMQR} with $H = C \Pi(\Upsilon \Pi)^{-1}$, provided that $\Upsilon \Pi$ is invertible, see~\cite{Ionescu2016Two-SidedSystems} for more detail.

\subsection{Problem Formulation}
We now formulate the model reduction problem for a linear hybrid system. To this end, we first consider the hybrid time domain that underlies the evolution of all hybrid systems studied in this paper, namely,
\begin{equation}\label{equ:hybTimeDomain}
\mathcal{H}:=\bigcup_{j=-\infty}^{+\infty}\left(\left[t_j, t_{j+1}\right], j\right) \in \mathbb{R} \times \mathbb{Z},
\end{equation}
for some infinitely many strictly increasing time instants $-\infty < \dots < t_{-1} < t_0 <  t_1 < \dots < +\infty$ with $t_0 = 0$, which are not assumed to be equally spaced but to satisfy the following assumption.
\begin{assumption}\label{asmp:jumpSpacing}
    Consider the countable set $\mathcal{J} := \{t_j\}_{j = -\infty}^{+\infty}$. There exists positive constants $\underline{\delta}$ and $\bar{\delta}$ such that the spacing between two successive jump instants, that is, $\delta_{j} = t_{j+1} - t_{j}$, satisfies
    \begin{equation}\label{equ:JumpBounds}
        \underline{\delta}:=\inf_j \delta_j>0 \quad \text{and} \quad \bar{\delta}:=\sup _j \delta_j<\infty
    \end{equation}
    for any $j \in \mathbb{Z}$.
\end{assumption}
Assumption~\ref{asmp:jumpSpacing} guarantees that $t_j \to \pm \infty$ as $j \to \pm \infty$ and the number of jumps within any finite time interval is finite.
% , thereby excluding the accumulation of infinite jumps within any finite time interval (called Zeno behaviour).
\blue{Physically, this assumption is satisfied if the system jumps infinitely many times with a uniform minimum dwell-time between any two successive jumps. Given such a hybrid time domain $\mathcal{H}$ in~\eqref{equ:hybTimeDomain}, 
% we define two extra variables $(\tau(t,j), k(t,j)) \in \mathcal{H}$ satisfying
    % \begin{equation}
    %     \begin{array}{rlrl}
    %         \dot{\tau}\,\,\,&= 1, &\qquad \dot{k} \,\,\,&= 0, \\
    %         \tau^+ \!&= \tau, &\qquad k^+ \!&= k + 1,
    %     \end{array}
    % \end{equation}
% where $\tau$ and $k$ are clock and counter variables, respectively, and are initialised with $(\tau(t, j), k(t, j)) = (t, j)$. Moreover, 
we denote two sets $\mathcal{S}_c, \mathcal{S}_d \!\subset\! \mathcal{H}$ as
\begin{equation*}
\begin{array}{rl}
    \mathcal{S}_c &:= \{(t, j): t \in [t_j, t_{j+1}]\}, \\
    \mathcal{S}_d &:= \{(t, j): t = t_{j+1}\},
\end{array}
\end{equation*}
which we herein call the flow and jump sets, respectively.}

\begin{remark}
    The jump time set $\{t_j\}_{j=-\infty}^{+\infty}$ in the hybrid time domain~\eqref{equ:hybTimeDomain} can either be an \textit{a priori} known set or a set that depends on the state transition of certain system dynamics. We will illustrate this state-dependent case with an example in Section~\ref{sec:example}.
\end{remark}

This paper focuses on the model reduction of \blue{a linear hybrid} system described by the equations
\begin{equation}
\label{equ:HybFOM}
\Sigma:\left\{\,\begin{array}{lclc}
\dot x &=&A_cx+B_cu_c, & \quad \blue{(t, j) \in \mathcal{S}_c},\\
x^+ &=&A_dx+B_du_d, & \quad \blue{(t, j) \in \mathcal{S}_d},\\
y &=&Cx,
\end{array}\right.
\end{equation}
with $x(t, j)\in \mathbb{R}^n$, \blue{$u_{c}(t, j)\in \mathbb{R}^m$, $u_{d}(t, j)\in \mathbb{R}^m$, $y(t, j)\in \mathbb{R}^p$, $A_{c}\in \mathbb{R}^{n\times n}$, $A_{d}\in \mathbb{R}^{n\times n}$, $B_{c}\in \mathbb{R}^{n\times m}$, $B_{d}\in \mathbb{R}^{n\times m}$ and $C\in \mathbb{R}^{p\times n}$}. System~(\ref{equ:HybFOM}) flows and jumps according to the hybrid time domain $\mathcal{H}$ in~\eqref{equ:hybTimeDomain}. \blue{More specifically, it flows whenever $(t, j) \in \mathcal{S}_c$, and jumps whenever $(t, j)\in \mathcal{S}_d$}. 

\begin{remark}
\blue{The hybrid system~\eqref{equ:HybFOM} can be expressed as the conventional formulation in~\cite{ref:Goebel2012hybrid} by introducing two clock variables $\tau(t,j) \in \mathbb{R}$ and $k(t,j) \in \mathbb{Z}$ satisfying
    \begin{equation}\label{equ:counter}
        \begin{array}{rlrll}
            \dot{\tau}\,\,\,&= 1 &\qquad \dot{k} \,\,\,&= 0, &\quad (\tau, k) \in \hat{\mathcal{S}}_c, \\
            \tau^+ \!&= \tau &\qquad k^+ \!&= k + 1, &\quad (\tau, k) \in \hat{\mathcal{S}}_d, 
        \end{array}
    \end{equation}
    where $\hat{\mathcal{S}}_c$ and $\hat{\mathcal{S}}_d$ are flow and jump sets defined according to the set $\mathcal{J} = \{t_j\}_{j = -\infty}^{+\infty}$. Then the flow and jump conditions in~\eqref{equ:HybFOM} can be replaced by those of~\eqref{equ:counter}, and system~\eqref{equ:HybFOM} can be written as~\cite[Equation~(1.2)]{ref:Goebel2012hybrid} by involving the dynamics of the clock variables~\eqref{equ:counter}. Since $\tau$ and $k$ play a similar role as $t$ and $j$, and do not simplify any formulation when elements in $\mathcal{J}$ are not equally spaced, we omit these variables for brevity of this paper. 
    }
\end{remark}

To address the hybrid model reduction problem by moment matching, we study a hybrid generalisation of the interconnection structures illustrated in Fig.~\ref{fig:MMIntLTI}. To start with, the direct interconnection in the hybrid setting, as shown in Fig.~\ref{fig:MMIntHyb}(a), is formed by~\eqref{equ:HybFOM} connected to a hybrid signal generator of the form
\begin{equation}
\label{equ:HybGen}
\left\{\,\begin{array}{lclrl}
\dot \omega &=& S \omega, \quad &\theta_c\,\, =L_c \omega, & \quad \blue{(t, j) \in \mathcal{S}_c}, \\
\omega^+ &=& J \omega,  \quad &\theta_d\,\, =L_d \omega, & \quad \blue{(t, j) \in \mathcal{S}_d},
\end{array}\right.
\end{equation}
with $\omega(t, j)\in \mathbb{R}^\nu$, $\theta_{c}(t, j)\in \mathbb{R}$, $\theta_{d}(t, j)\in \mathbb{R}$, $S\in \mathbb{R}^{\nu\times \nu}$, $J\in \mathbb{R}^{\nu\times \nu}$, \blue{$L_{c}\in \mathbb{R}^{m \times \nu}$, $L_{d}\in \mathbb{R}^{m \times \nu}$}. To study the model reduction problem for this direct interconnection, the following assumption is introduced.

\begin{assumption}\label{asmp:genSJ}
The trajectories of the signal generator~(\ref{equ:HybGen}) are bounded forward and backward in time. The pairs $(S, L_c)$ and $(J, L_d)$ are observable. The matrix $J$ is such that $0\not \in \sigma(J)$.%Moreover, if for some $j=1,\dots,\nu$ there exist scalars $\ell_j$ such that $\lim_{t\to\infty}\epsilon_j\omega = \ell_j$, then $\ell_j\ne0$.
\end{assumption}
% 

% \vspace{3mm}
\begin{figurehere}
\centering
\includegraphics[width=\columnwidth]{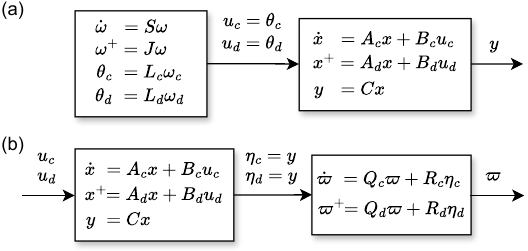}
\caption{Illustration of the direct (a) and swapped (b) interconnections in the moment matching of linear hybrid systems.}
\label{fig:MMIntHyb}
\end{figurehere}%

In Assumption~\ref{asmp:genSJ}, the boundedness requirement yields that the signals $\theta_c$ and $\theta_d$ generated by system~\eqref{equ:HybGen} are bounded for all $(t, j) \in \mathcal{H}$. This boundedness, together with the observability requirement of Assumption~\ref{asmp:genSJ}, can be seen as an extension of the assumptions on $\sigma(S) \subset \mathbb{C}_0$ and the observability of the pair $(S, L)$ in Theorem~\ref{thm:MMLTIss}, as the requirement $\sigma(S) \subset \mathbb{C}_0$ guarantees forward and backward boundedness of the solutions generated by system~\eqref{eq:genS}. The invertibility of $J$ is an extra assumption used to guarantee time-reversibility of system~\eqref{equ:HybGen}.
% does not produce signals with finite-time vanishing components that do not contribute to the steady-state output response of the direct interconnection in Fig.~\ref{fig:MMIntHyb}(a).
\red{With this assumption,} the first model reduction problem to be addressed in this paper is formulated as follows.

\begin{problem}\label{prob:directMM}
\textbf{(Model Reduction for Direct Interconnection).}
Consider the interconnection in Fig.~\ref{fig:MMIntHyb}(a). Suppose Assumption~\ref{asmp:genSJ} holds and the zero equilibrium of system~$\Sigma$ in~(\ref{equ:HybFOM}) is exponentially stable. Determine a linear hybrid system~$\Sigma^\prime$ of order $\nu<n$ such that $\psi_{ss}(t, j) = y_{ss}(t, j)$ for all $(t, j) \in \mathcal{H}$ when subject to the same inputs $u_c = \theta_c$ and $u_d = \theta_d$, where $\psi_{ss}$ and $y_{ss}$ stands for the steady-state output responses of the systems~$\Sigma$ in~\eqref{equ:HybFOM} and~$\Sigma^\prime$, respectively.
\end{problem}

We now formulate the hybrid model reduction problem for the swapped interconnection. As depicted in Fig.~\ref{fig:MMIntHyb}(b), this interconnection is composed by the cascade of system~\eqref{equ:HybFOM} with a filter of the form \begin{equation}\label{equ:HybFilt}
    \left\{\,\begin{array}{lcll}
    \dot{\varpi} &=& Q_c \varpi + R_c \eta_c , &\quad\blue{(t, j) \in \mathcal{S}_c}, \\
    \varpi^+ &=& Q_d \varpi + R_d \eta_d, &\quad\blue{(t, j) \in \mathcal{S}_d},
    \end{array}\right.
\end{equation}
where $\varpi(t, j) \in \mathbb{R}^{\nu}$, \blue{$\eta_c(t, j) \in \mathbb{R}^p$, $\eta_d(t, j) \in \mathbb{R}^p$, $Q_c \in \mathbb{R}^{\nu \times \nu}$, $Q_d \in \mathbb{R}^{\nu \times \nu}$, $R_{c}\in \mathbb{R}^{\nu \times p}$, and $R_{d}\in \mathbb{R}^{\nu \times p}$}. Similarly to Assumption~\ref{asmp:genSJ}, we introduce another assumption, which is also a generalisation of the assumptions used in Theorem~\ref{thm:MMLTIss}.

\begin{assumption}\label{asmp:filtQR}
The trajectories of the system~(\ref{equ:HybFilt}) with $\eta_c \equiv \eta_d \equiv 0$ are bounded forward and backward in time. The pairs $(Q_c, R_c)$ and $(Q_d, R_d)$ are reachable. The matrix $Q_d$ is such that $0\not \in \sigma(Q_d)$.%Moreover, if for some $j=1,\dots,\nu$ there exist scalars $\ell_j$ such that $\lim_{t\to\infty}\epsilon_j\omega = \ell_j$, then $\ell_j\ne0$.
\end{assumption}

Similarly to Assumption~\ref{asmp:genSJ}, the boundedness and reachability requirements of Assumption~\ref{asmp:filtQR} can also be regarded as an extension of the assumptions that $\sigma(Q) \subset \mathbb{C}_{0}$ and the pair $(Q, R)$ is reachable in Theorem~\ref{thm:MMLTIss}. 
% More specifically, the boundedness condition in Assumption~\ref{asmp:filtQR}, which plays the same role as the condition $\sigma(Q) \subset \mathbb{C}_{0}$ in the continuous-time case in Theorem~\ref{thm:MMLTIss}, guarantees boundedness of the steady-state response of the swapped interconnection in Fig.~\ref{fig:MMIntHyb}(b) for any non-zero signal $u$ that exponentially decays to zero. 
The assumption on the invertibility of $Q_d$ is introduced to ensure time-reversibility of the filter~\eqref{equ:HybFilt}. 
%does not lose any information of the output response $y$ of system~\eqref{equ:HybFOM} by having finite-time vanishing components in its solution.

\begin{problem}\label{prob:swappedMM}\textbf{(Model Reduction for Swapped Interconnection).}
Consider the interconnection in Fig.~\ref{fig:MMIntHyb}(b) with initial conditions $x_0 = 0$ and $\varpi_0 = 0$.
Suppose Assumption~\ref{asmp:filtQR} holds and the zero equilibrium of system~$\Sigma$ in~(\ref{equ:HybFOM}) is exponentially stable. For any non-zero signals $u_c$ and $u_d$ exponentially decaying to zero, determine a linear hybrid system~$\Sigma^\prime$ of order $\nu<n$ such that the steady-state response $\varpi_{ss}$ of the filter, when subject to $\eta_c = \psi$ and $\eta_d = \psi$, coincides with the steady-state response of the filter in the interconnection in Fig.~\ref{fig:MMIntHyb}(b).
\end{problem}
%Note that this assumption can be easily relaxed defining $x$-dependent flow and jump sets for system~(\ref{equ:HybFOM}) (with a non-trivial complication of the notation). 

Finally, we formulate the two-sided moment matching problem for hybrid systems as follows.

\begin{problem}\label{prob:twoSidedMM}\textbf{(Model Reduction for Two-sided Interconnection).}
    Consider system~$\Sigma$ in~\eqref{equ:HybFOM} with the signal generator~\eqref{equ:HybGen} and the filter~\eqref{equ:HybFilt}. Suppose Assumptions~\ref{asmp:genSJ} and~\ref{asmp:filtQR} hold, and the zero equilibrium of system~$\Sigma$ in~(\ref{equ:HybFOM}) is exponentially stable. Design a linear hybrid system~$\Sigma^\prime$ of order $\nu < n$ such that Problems~\ref{prob:directMM} and~\ref{prob:swappedMM} are simultaneously solved.
\end{problem}

The goal of this paper is to address the three moment matching problems formulated as Problems~\ref{prob:directMM}, \ref{prob:swappedMM}, and~\ref{prob:twoSidedMM}. Towards this end, the reduced-order model to be constructed is also a linear hybrid system which flows and jumps according to the hybrid time domain $\mathcal{H}$ in~\eqref{equ:hybTimeDomain} and takes the form 
\begin{equation}\label{equ:HybROM}
\Sigma^\prime: \left\{\, \begin{array}{lcll}
\dot{\xi} &=& F_c \xi + G_c u_c, &\quad\blue{(t, j) \in \mathcal{S}_c},\\
\xi^+ &=& F_d \xi + G_d u_d, &\quad\blue{(t, j) \in \mathcal{S}_d},\\
\psi \,\,\, &=& H\xi,
\end{array}\right.
\end{equation}
with $\xi(t, j)\in \mathbb{R}^{\nu}$, \blue{$\psi(t, j)\in \mathbb{R}^p$, $F_c(t, j) \in \mathbb{R}^{\nu \times \nu}$, $F_d(t, j) \in \mathbb{R}^{\nu \times \nu}$, $G_{c}(t, j)\in \mathbb{R}^{\nu \times m}$, $G_{d}(t, j)\in \mathbb{R}^{\nu \times m}$, and $H(t, j) \in \mathbb{R}^{p \times \nu}$}. \blue{The dimension $\nu$ of the reduced-order model coincides with that of the signal generator~\eqref{equ:HybGen} and/or the filter~\eqref{equ:HybFilt}, \textit{i.e.}, it depends on the complexity of the system that excites/filters the dynamics of the complex system~\eqref{equ:HybFOM} in the interconnection in Fig.~\ref{fig:MMIntHyb}.} Note also that, depending on the specific model reduction task to be achieved, some parameters of the reduced-order model can be time-varying. The reason for this will be clear later. 

The direct and swapped moment matching problems formulated in Problems~\ref{prob:directMM} and~\ref{prob:swappedMM} are similar to the ``non-smooth'' model reduction problem in~\cite{ScaAst:15c,ref:mao2024model}, which have addressed the moment matching problem for linear continuous-time systems~\eqref{equ:LTIsystem} in the direct interconnection with an explicit signal generator of the form
\begin{equation}\label{equ:explicitGen}
\hat{\omega}(t)=\Lambda(t) \hat{\omega}(0), \qquad u=\hat{L}\hat{\omega},
\end{equation}
where $\hat{\omega} \in \mathbb{R}^{\nu}$, $\hat{L}\in\mathbb{R}^{m \times \nu}$, and $\Lambda(t)\in \mathbb{R}^{\nu\times \nu}$ are such that $\Lambda(0)=I_\nu$, or in the swapped interconnection with the explicit filter of the form
\begin{equation}\label{equ:explicitFilt}
\hat{\varpi}(t)=\Omega(t) \int_0^t \Omega(\vartheta)^{-1} \hat{R} \eta(\vartheta) d \vartheta,
\end{equation}
where $\hat{\varpi} \in \mathbb{R}^{\nu}$, $\Omega(t) \in \mathbb{R}^{\nu \times \nu}$, and $\hat{R} \in \mathbb{R}^{\nu \times p}$ are such that $\Omega(0)= I_\nu$ and $\Omega(-t)=\Omega(t)^{-1}$. Since $\Lambda$ and $\Omega$ are not assumed to be continuous, these explicit systems can also be used to model \blue{dynamics of some hybrid systems. For example, if (and only if) $L_c = L_d$ in~\eqref{equ:HybGen} and $R_d = 0$ in~\eqref{equ:HybFilt}, then systems~\eqref{equ:HybGen} and~\eqref{equ:HybFilt} can be described by~\eqref{equ:explicitGen} and~\eqref{equ:explicitFilt}, respectively} Under certain assumptions, see~\cite{ScaAst:15c}, the steady-state response of system~\eqref{equ:LTIsystem} in the direct interconnection with~\eqref{equ:explicitGen} can be described by $x_{ss}(t)=\Pi_{\infty}(t)\hat{\omega}(t)$ with
\begin{equation}
\label{equ:explicitPi}
\Pi_{\infty}(t) = \left(\int_{-\infty}^t e^{A(t-\vartheta)}B\hat{L}\Lambda(\vartheta) d\vartheta \right) \Lambda(t)^{-1}.
\end{equation}
Meanwhile, the steady-state response of the filter~\eqref{equ:explicitFilt}, in the swapped interconnection with system~\eqref{equ:LTIsystem}, is characterised by
\begin{equation}
\varpi_{ss}(t)=\Omega(t) \lim _{\tilde{t} \rightarrow +\infty} \int_0^{t+\tilde{t}} \Omega(\vartheta)^{-1} \Upsilon_{\infty}(\vartheta) B u(\vartheta) d \vartheta,
\end{equation}
with
\begin{equation}\label{equ:explicitUpsilon}
\Upsilon_{\infty}(t):=\Omega(t) \int_t^{+\infty} \Omega(\vartheta)^{-1} \hat{R} C e^{A(\vartheta-t)} d \vartheta.
\end{equation}
The formulations~\eqref{equ:explicitPi} and~\eqref{equ:explicitUpsilon} are useful for explicitly characterising the time history of the steady-state response in the interconnection structures. However, \blue{unlike~\cite{ScaAst:15c,ref:mao2024model} that aim at reduction of continuous-time systems, when the system to be reduced is also hybrid}, it is preferable to characterise the system dynamics in a hybrid formulation as in~\cite{GalSas:15}, as this allows defining a reduced-order model also in a hybrid form. Note that~\cite{GalSas:15} addressed the direct interconnection moment matching problem for the linear continuous-time system~\eqref{equ:LTIsystem} using a hybrid formulation. However, the result therein is confined to the periodic case and can also be studied using~\cite{ScaAst:15c}.

In the remainder of the paper we first address Problems~\ref{prob:directMM} and~\ref{prob:swappedMM} separately, providing families of reduced-order models for the direct interconnection and the swapped interconnection. By combining the obtained results, we solve the two-sided moment matching problem formulated in Problem~\ref{prob:twoSidedMM}. We finally demonstrate that when the jump instants $\{t_j\}_{j = -\infty}^{+\infty}$ of the hybrid time domain~\eqref{equ:hybTimeDomain} are periodic, all proposed theories for finding the reduced-order model can be simplified.

\section{Moment Matching for the Direct Interconnection}
\label{sec:directMM}

In this section we address the direct interconnection moment matching problem formulated in Problem~\ref{prob:directMM}, starting with generalising the Sylvester equation~\eqref{equ:SylvPiLTI} to the hybrid setting. This solution is helpful for characterising the steady-state response of the hybrid system~\eqref{equ:HybFOM} in the direct interconnection in Fig.~\ref{fig:MMIntHyb}(a)

\begin{lemma} 
\label{lem:PiExist}
Consider systems~(\ref{equ:HybFOM}) and~(\ref{equ:HybGen}). Suppose Assumptions~\ref{asmp:jumpSpacing} and~\ref{asmp:genSJ} hold, and
the zero equilibrium of system~(\ref{equ:HybFOM}) is exponentially stable. Then there exists a unique bounded (steady-state) solution $\Pi(t, j) = \hat{\Pi}(t, j) \in \mathbb{R}^{n \times \nu}$ solving
\begin{equation}
\label{equ:HybPi}
\begin{array}{ccll}
\dot{\Pi} &=& A_c\Pi - \Pi S + B_cL_c, & \quad \blue{(t, j) \in \mathcal{S}_c}, \\
\Pi^+J &=& A_d\Pi+B_dL_d, & \quad \blue{(t, j) \in \mathcal{S}_d},
\end{array}
\end{equation}
and satisfying the boundary condition 
\begin{equation}\label{equ:PiBodCond}
    \hat{\Pi}_{j}=K_{\hat{\Pi},j},
\end{equation}
for all $(t, j) \in \mathcal{H}$, where
% \begin{equation}
% \label{equ:PikKpi}
% \begin{array}{l}
% \!K_{\hat{\Pi},k}=B_dL_dJ^{-1}+\sum_{j=k-1}^{-\infty}\!\!\bigg(\!\! \prod_{\overrightarrow{p=k}}^{j} \left(A_de^{A_c(t_{p}-t_{p-1})}\right)\\
% \,\,\,\,\times\!\left(\! B_dL_d +A_d\int_{t_{j-1}}^{t_{j}} \!\!\!e^{A_c(t_{j}-\vartheta)}B_cL_c e^{S(\vartheta-t_{j})}d\vartheta \!\right)\!J^{-1}\\
% \,\,\,\,\! \times \prod_{\overleftarrow{p=k}}^{j} \!\!\left(\!e^{-S(t_{p}-t_{p-1})}J^{-1}\!\right)\! \!\! \bigg) \!\!+\!A_d\!\left(\!\int_{t_{k-1}}^{t_k} \!\!\!\!e^{-A_c\vartheta}B_cL_c e^{S\vartheta}d\vartheta\!\!\right)\!J^{-1}\!,
% \end{array}
% \end{equation}
\begin{equation}
\label{equ:PikKpi}
\begin{array}{l}
\!\!\!K_{\hat{\Pi},j} = \sum_{i=-\infty}^{j}\!\!\bigg(\!\! \prod_{\overleftarrow{q=i+1}}^{j} \left(A_de^{A_c(t_{q} - t_{q-1})}\right) \\
\!\!\!\times \Big(\! B_d L_d J^{-1} \!+\! A_d \big(\!\int_{t_{i-1}}^{t_{i}} \!\!\!e^{A_c(t_{i}-\vartheta)}B_cL_c e^{S(\vartheta-t_{i})}d\vartheta \big) J^{-1}\!\Big) \\
\!\!\! \times \prod_{\overrightarrow{q=i+1}}^{j} \left(e^{-S(t_{q}-t_{q-1})}J^{-1} \right) \!\! \bigg).
\end{array}
\end{equation}
Moreover, this solution $\hat{\Pi}$ is forward attractive, that is, it is such that for any $\Pi$ solving~\eqref{equ:HybPi} and starting with any initial condition $\Pi(\underline{t}, \underline{j}) \in \mathbb{R}^{n \times \nu}$ with $(\underline{t}, \underline{j}) \in \mathcal{H}$, $\lim_{t+j \to +\infty} \Pi(t, j) - \hat{\Pi}(t, j) = 0$. 
\end{lemma}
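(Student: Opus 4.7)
The plan is to reduce the hybrid Sylvester equation~\eqref{equ:HybPi} to a discrete-time Sylvester-type recursion at the jump instants $t_j$, and then exploit the two main hypotheses (exponential stability of $\Sigma$ and forward/backward boundedness of~\eqref{equ:HybGen}) to prove existence, uniqueness and forward attractivity.

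First, I would integrate the flow equation on $[t_j,t_{j+1}]$ via the change of variables $\tilde\Pi(t) = e^{-A_c(t-t_j)}\Pi(t,j)e^{S(t-t_j)}$, obtaining
\begin{equation*}
\Pi(t_{j+1},j) = e^{A_c\delta_j}\Pi_j e^{-S\delta_j} + \int_{t_j}^{t_{j+1}}\! e^{A_c(t_{j+1}-\vartheta)}B_cL_c e^{S(\vartheta-t_{j+1})}d\vartheta,
\end{equation*}
with $\delta_j:=t_{j+1}-t_j$. Composing with the jump rule $\Pi^+ = A_d\Pi J^{-1}+B_dL_dJ^{-1}$ yields the recursion
\begin{equation*}
\Pi_{j+1} = \mathcal{A}_j\,\Pi_j\,\mathcal{S}_j + M_j,
\end{equation*}
where $\mathcal{A}_j := A_d e^{A_c\delta_j}$, $\mathcal{S}_j := e^{-S\delta_j}J^{-1}$, and $M_j$ coincides with the bracketed term in~\eqref{equ:PikKpi} after a unit shift of the summation index.

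Second, I would establish two uniform estimates. Exponential stability of the zero equilibrium of~\eqref{equ:HybFOM} together with Assumption~\ref{asmp:jumpSpacing} yields constants $c_1>0$ and $\rho\in(0,1)$ such that $\|\mathcal{A}_j\mathcal{A}_{j-1}\cdots\mathcal{A}_i\|\le c_1\rho^{j-i+1}$, since this product is the jump-time state transition matrix of the unforced hybrid system. Assumption~\ref{asmp:genSJ}, via invertibility of $J$ and forward/backward boundedness of~\eqref{equ:HybGen}, yields $c_2>0$ such that the transition matrix of~\eqref{equ:HybGen} and its inverse are uniformly bounded; since the ordered product of the $\mathcal{S}_q$'s is exactly the inverse of the corresponding product of $Je^{S\delta_q}$'s, I obtain $\|\mathcal{S}_i\mathcal{S}_{i+1}\cdots\mathcal{S}_j\|\le c_2$ uniformly in $i\le j$. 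Boundedness of $\delta_j$ from Assumption~\ref{asmp:jumpSpacing} also gives $\|M_j\|\le c_3$.

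Third, I would iterate the recursion backward from an arbitrary $p\le j$ to obtain
\begin{equation*}
\Pi_{j+1} = \Big(\!\!\prod_{\overleftarrow{q=p}}^{j}\!\mathcal{A}_q\Big)\Pi_p\Big(\!\!\prod_{\overrightarrow{q=p}}^{j}\!\mathcal{S}_q\Big) + \sum_{i=p}^{j}\Big(\!\!\prod_{\overleftarrow{q=i+1}}^{j}\!\mathcal{A}_q\Big)M_i\Big(\!\!\prod_{\overrightarrow{q=i+1}}^{j}\!\mathcal{S}_q\Big).
\end{equation*}
The estimates above imply that the homogeneous term vanishes as $p\to-\infty$ (because $\|\Pi_p\|$ is assumed bounded while the left factor decays geometrically) and the series converges absolutely to a bounded sequence $\hat\Pi_j$; reindexing shows this sequence coincides with $K_{\hat\Pi,j}$ in~\eqref{equ:PikKpi}. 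Propagating $\hat\Pi_j$ through the flow equation then defines $\hat\Pi(t,j)$ on all of $\mathcal{H}$, bounded on each interval because $\bar\delta<\infty$. Uniqueness among bounded solutions follows because the difference $\Delta$ of any two such solutions satisfies the homogeneous recursion $\Delta_{j+1}=\mathcal{A}_j\Delta_j\mathcal{S}_j$; iterating backward and letting $p\to-\infty$ forces $\Delta\equiv 0$. Forward attractivity follows from the same homogeneous estimate applied forward in $j$ from $(\underline t,\underline j)$, which gives $\|(\Pi-\hat\Pi)_j\|\le c_1c_2\rho^{j-\underline j}\|(\Pi-\hat\Pi)_{\underline j}\|\to 0$; the flow equation carries the convergence uniformly over each interval $[t_j,t_{j+1}]$.

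The main obstacle I expect is cleanly translating the hybrid exponential stability hypothesis (typically stated as a decay bound on $\|x(t,j)\|$) into the uniform geometric bound on the one-sided product $\|\mathcal{A}_j\cdots\mathcal{A}_i\|$ along the discrete sequence of jump times, with a rate independent of $i$. Assumption~\ref{asmp:jumpSpacing} is the ingredient that makes this reduction possible: the uniform bounds $\underline\delta$ and $\bar\delta$ ensure that exponential decay in $t+j$ along $\mathcal{H}$ is equivalent to exponential decay in $j$ alone, and that the flow contribution on each interval is uniformly controlled.
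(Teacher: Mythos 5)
Your proposal is correct and follows essentially the same route as the paper's proof: integrate the flow to get the variation-of-constants formula on each interval, compose with the jump map to obtain the discrete recursion $\Pi_{j+1}=\mathcal{A}_j\Pi_j\mathcal{S}_j+M_j$, iterate backward so that exponential stability kills the homogeneous term and the generator's forward/backward boundedness controls the right-hand products, yielding $K_{\hat\Pi,j}$, and then use the homogeneous error dynamics for forward attractivity. The only (minor) difference is that you establish uniqueness directly among bounded solutions via the homogeneous difference recursion, whereas the paper argues uniqueness of the solution satisfying the boundary condition; your version is marginally more self-contained but relies on the same estimates.
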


\begin{proof}
    We start by showing that~(\ref{equ:HybPi}) admits a bounded solution $\hat{\Pi}$ \red{satisfying~\eqref{equ:PiBodCond}}. We first prove that the solution of the first $\Pi$-equation in~(\ref{equ:HybPi}) \red{whenever $(t, j) \in \mathcal{S}_c$} is
\begin{equation}
\label{equ:PiExpInterval}
\begin{array}{l}
\Pi(t, j)=\left(\vphantom{\int_{t_j}^t }   e^{A_c(t-t_j)}\Pi_j+\right.\\
\qquad\qquad\,\,+\left.\int_{t_j}^t e^{A_c(t-\vartheta)}B_cL_c e^{S(\vartheta-t_j)}d\vartheta\right)e^{-S(t-t_j)}.
\end{array}
\end{equation}
To this end note that $\Pi(t, j)$ in (\ref{equ:PiExpInterval}) is always well-defined since the integral is over a finite interval and contains only the product of exponential matrices. \red{Whenever $(t, j) \in \mathcal{S}_c$}, $\Pi(t, j)$ is differentiable, yielding
\begin{equation*}
%\label{NOL-HybPiSDc2}
\begin{array}{l}
\!\!\dot \Pi e^{S(t-t_j)} \!+ \Pi Se^{S(t-t_j)}=  \!A_c e^{A_c(t-t_j)}\Pi_j\\
\qquad\qquad\!+ B_cL_c e^{S(t-t_j)}+ A_c\!\int_{t_j}^t \!\!\!e^{A_c(t-\vartheta)}B_cL_c e^{S(\vartheta-t_j)}d\vartheta,
\end{array}
\end{equation*}
with the last two terms obtained by the Leibniz rule. Substituting the integral appearing in the last equation with the expression given in (\ref{equ:PiExpInterval}) yields
\begin{equation*}
%\label{NOL-HybPiSDc2}
\begin{array}{l}
\!\!\dot \Pi e^{S(t-t_j)} + \Pi Se^{S(t-t_j)}=  \!A_c e^{A_c(t-t_j)}\Pi_j\\
\qquad\qquad+ B_cL_c e^{S(t-t_j)}\!+\! A_c\left(\Pi e^{S(t-t_j)}-e^{A_c(t-t_j)}\Pi_j    \right),
\end{array}
\end{equation*}
from which, by right-multiplying both sides by $e^{-S(t-t_j)}$, one has 
$$
\dot \Pi  + \Pi S=  B_cL_c + A_c\Pi,
$$
proving that $\Pi$ in (\ref{equ:PiExpInterval}) is the solution of the first equation in (\ref{equ:HybPi}) \red{whenever $(t, j) \in \mathcal{S}_c$}. This suggests that the existence of $\Pi$ solving~\eqref{equ:HybPi} boils down to the existence of the initial condition $\Pi_j$. To study this, we then compute the dynamics of $\Pi$ across the jumps. Writing~(\ref{equ:PiExpInterval}) for $t=t_{j+1}$, replacing the resulting expression of $\Pi(t_{j+1},j)$ by the right-hand side of the second $\Pi$-equation in (\ref{equ:HybPi}), and multiplying on the right for $e^{S(t_{j+1}-t_j)}$, yields
\begin{equation}
\label{equ:PiJump}
\begin{array}{l}
\!\!\Pi_{j+1}Je^{S(t_{j+1}-t_j)}\!=A_de^{A_c(t_{j+1}-t_j)}\Pi_{j}+B_dL_de^{S(t_{j+1}-t_j)}\\
\qquad\qquad\qquad\quad\,\,+A_d\int_{t_j}^{t_{j+1}} \!\!e^{A_c(t_{j+1}-\vartheta)}B_cL_c e^{S(\vartheta-t_j)}d\vartheta.
\end{array}
\end{equation}
First of all note that, since $J$ is invertible by assumption, given $\Pi_{j}$, \textit{i.e.}, the boundary condition of (\ref{equ:PiExpInterval}) for $(t, j) \in [t_j,t_{j+1}] \times \{j\}$, we can uniquely determine $\Pi_{j+1}$, \textit{i.e.}, the boundary condition of (\ref{equ:PiExpInterval}) for $(t, j) \in [t_{j+1},t_{j+2}] \times \{j+1\}$, and also uniquely determine the values of $\Pi$ inside the interval $[t_{j+1},t_{j+2}] \times \{j+1\}$ by~\eqref{equ:PiExpInterval}. Thus, if $\Pi_{j}$ exists for all $j \geq 0$, then the solution to~(\ref{equ:HybPi}) is well-defined. Rewriting equation (\ref{equ:PiJump}) for the interval $[t_{j-1},t_{j}] \times \{j-1\}$ yields
\begin{equation*}
\label{NOL-HybRec0}
\begin{array}{l}
\Pi_{j} =A_de^{A_c(t_{j}-t_{j-1})}\Pi_{j-1}e^{-S(t_{j}-t_{j-1})}J^{-1}+B_dL_dJ^{-1}\\
\quad\,\,\,\quad+A_d\left(\int_{t_{j-1}}^{t_{j}} \!\!\!e^{A_c(t_{j}-\vartheta)}B_cL_c e^{S(\vartheta-t_{j})}d\vartheta\right)J^{-1}.
\end{array}
\end{equation*}
Iterating and replacing the expression for $\Pi_{j-1}$, $\Pi_{j-2}$, $\dots$, we obtain the relation
\begin{equation}
\label{equ:PikExp}
\begin{array}{l}
\Pi_{j} = K_{\hat{\Pi},j} \\
+ 
\prod_{\overleftarrow{i=-\infty}}^{j} \! \left(A_de^{A_c(t_{i}-t_{i-1})}\right)\Pi_{-\infty}\prod_{\overrightarrow{i=-\infty}}^{j} \! \left(e^{-S(t_{i}-t_{i-1})}J^{-1}\!\right)\!,
\end{array}
\end{equation}
with $K_{\hat{\Pi},j}$ as in~\eqref{equ:PikKpi} and $\Pi_{-\infty} \in \mathbb{R}^{n \times \nu}$ an arbitrary value when $j \to -\infty$. Note that the second product sequence in (\ref{equ:PikExp}) is bounded by hypothesis, whereas the first product sequence is converging to zero because %$\sigma\left(A_d e^{A_c(t_k-t_{j-1})}\right)\subset\mathbb{D}_{<1}$ for all $[t_k,t_{j+1}] \times {j}\in\mathcal{H}$. 
the zero equilibrium of system~(\ref{equ:HybFOM}) is exponentially stable. The term $K_{\hat{\Pi},j}$ is finite for all $j \in \mathbb{Z}$, because it is the sum of constant terms multiplied on the left by increasingly longer sequences of $A_d e^{A_c(t_q-t_{q-1})}$ (for which the norm exponentially converges to zero). These observations of~\eqref{equ:PikExp}, by the definition in~\cite{ref:isidori2008steady}, implies that $\hat{\Pi}$ satisfying~\eqref{equ:PiBodCond} is the steady-state solution of~\eqref{equ:HybPi} and is bounded for all times as $K_{\hat{\Pi},j}$ is bounded whenever $j \to \pm \infty$. Characterised by the boundary condition~\eqref{equ:PiBodCond}, this $\hat{\Pi}$ is unique as it flows as in~\eqref{equ:PiExpInterval} and jumps as in~\eqref{equ:PiJump}, with both dynamics admitting unique solutions under Assumption~\ref{asmp:jumpSpacing}\footnote{See~\cite[Chapter 2]{ref:Goebel2012hybrid} for more detail about the uniqueness of solutions of hybrid systems.}. To show its forward attractivity, let $E:= \Pi - \hat{\Pi}$. Then by~\eqref{equ:HybPi}, we have
\begin{equation}\label{equ:HybPiError}
    \begin{array}{lcll}
    \dot{E} &=& A_c E - E S, &\quad \blue{(t, j) \in \mathcal{S}_c}, \\
    E^+ &=& A_d E J^{-1}, &\quad \blue{(t, j) \in \mathcal{S}_c}.
    \end{array}
\end{equation}
By a similar argument as the one yielding~\eqref{equ:PikExp}, the evolution of $E$ can be explicitly characterised by $E(t, j) \!=\!e^{A_c(t-t_j)} \hat{E}_j e^{-S(t-t_j)}$ where
\begin{equation*}
\begin{array}{l}
\hat{E}_j := \prod_{\overleftarrow{i=\underline{j}}}^{j}  \left(\!A_de^{A_c(t_{i}-t_{i-1})} \right)E_{\underline{j}}\prod_{\overrightarrow{i=\underline{j}}}^{j} \left( e^{-S(t_{i}-t_{i-1})}J^{-1}\right).
\end{array}
\end{equation*}
for any integer $\underline{j} \leq j$. Then by Assumption~\ref{asmp:genSJ} and the exponential stability of system~(\ref{equ:HybFOM}), $\lim_{t+j \to +\infty} E(t, j) = 0$. Equivalently, starting with any initial condition $\Pi(\underline{t}, \underline{j}) \in \mathbb{R}^{n \times \nu}$ with $(\underline{t}, \underline{j}) \in \mathcal{H}$, we have $\lim_{t+j \to +\infty} \Pi(t, j) - \hat{\Pi}(t, j) = 0$. This concludes the proof.
\end{proof}

%\begin{remark}
%The condition we are using for the existence of the steady-state is less restrictive than the one usually reported in the literature on hybrid systems, namely $\sigma\left(J e^{ST}\right)\not \subset \mathbb{D}_{<1}$ and $\sigma\left(A_d e^{A_cT}\right) \subset \mathbb{D}_{<1}$. Note that the more restrictive conditions are acceptable when the objective is output regulation, since there is no need to follow a decaying reference and there is interest to have an asymptotically stable closed loop system. However, in the case of model reduction there may be interest in interpolating decaying signals or reducing unstable systems. This is also in line with the fact that if $A_d=I$, $B_d=0$ and $J=I$, the problem reduced to the classing model reduction of linear systems for which interpolation at $\mathbb{C}_{>0}$ for unstable systems is allowed.
%\end{remark}

We are now ready to characterise the steady-state response of system~\eqref{equ:HybFOM} in the direct interconnection.

\begin{theorem} 
\label{thm:PiSS}
Consider the direct interconnection of system~(\ref{equ:HybFOM}) and the hybrid generator~(\ref{equ:HybGen}) shown in Fig.~\ref{fig:MMIntHyb}(a). Suppose Assumptions~\ref{asmp:jumpSpacing} and~\ref{asmp:genSJ} hold, and
the zero equilibrium of system~(\ref{equ:HybFOM}) is exponentially stable. Then the steady-state response of the hybrid system~\eqref{equ:HybFOM} satisfies $x_{ss} = \hat{\Pi}\omega$, with $\hat{\Pi}$ the unique steady-state solution of~\eqref{equ:HybPi}. 
\end{theorem}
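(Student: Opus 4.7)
The plan is to verify that $\hat{\Pi}\omega$ is the steady-state trajectory of the interconnected system by introducing the error coordinate $z := x - \hat{\Pi}\omega$ and showing that $z$ decays to zero along every trajectory, regardless of initial condition. This is the standard ``change-of-variables'' strategy used for classical moment matching results (Theorem~\ref{thm:MMLTIss}), here adapted to the hybrid time domain $\mathcal{H}$.

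First I would compute the flow dynamics of $z$ on $\mathcal{S}_c$. Using the first equation of~\eqref{equ:HybPi} satisfied by $\hat{\Pi}$, the flow equation of the generator~\eqref{equ:HybGen}, the flow equation of~\eqref{equ:HybFOM}, and the feedthrough $u_c=\theta_c=L_c\omega$, a direct computation gives
\begin{equation*}
\dot z = A_c x + B_c L_c \omega - \bigl(A_c\hat{\Pi} - \hat{\Pi}S + B_c L_c\bigr)\omega - \hat{\Pi}S\omega = A_c z .
\end{equation*}
Next I would compute the jump dynamics on $\mathcal{S}_d$. Using the second equation of~\eqref{equ:HybPi} in the form $\hat{\Pi}^+ J = A_d\hat{\Pi} + B_d L_d$, the jump map $\omega^+ = J\omega$, and $u_d = \theta_d = L_d\omega$, one finds
\begin{equation*}
z^+ = A_d x + B_d L_d \omega - \hat{\Pi}^+ J \omega = A_d z .
\end{equation*}
Thus $z$ is governed by the unforced hybrid dynamics of $\Sigma$ with matrices $(A_c,A_d)$.

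Since the zero equilibrium of $\Sigma$ is exponentially stable by hypothesis, $z(t,j)\to 0$ as $t+j\to+\infty$ for every initial condition. Combined with Lemma~\ref{lem:PiExist}, which guarantees that $\hat{\Pi}$ is bounded on all of $\mathcal{H}$, and with Assumption~\ref{asmp:genSJ}, which gives boundedness of $\omega$ both forward and backward in time, the product $\hat{\Pi}\omega$ is itself a bounded trajectory of the interconnected system. Hence, by the definition of steady-state response recalled from~\cite{ref:isidori2008steady}, $x_{ss}(t,j)=\hat{\Pi}(t,j)\omega(t,j)$ for all $(t,j)\in\mathcal{H}$.

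I do not anticipate a significant obstacle: the main technical work (existence, uniqueness, boundedness, and forward attractivity of $\hat{\Pi}$) has already been carried out in Lemma~\ref{lem:PiExist}, so the proof reduces to the two algebraic verifications above plus an appeal to exponential stability. The one point that merits care is ensuring that the error system~\eqref{equ:HybPiError}-style argument for $z$ indeed inherits the same decay rate as $\Sigma$ under Assumption~\ref{asmp:jumpSpacing}, which is immediate because the unforced $z$-dynamics coincide with those of $\Sigma$ on the same hybrid time domain.
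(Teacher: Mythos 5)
Your proposal is correct and follows essentially the same route as the paper's proof: the change of variables $x-\hat{\Pi}\omega$, the two algebraic verifications that the error obeys the unforced dynamics $(A_c,A_d)$, and the appeal to exponential stability plus boundedness of $\hat{\Pi}\omega$ to identify the steady state. The only cosmetic difference is that the paper carries out the computation for a general solution $\Pi$ of~\eqref{equ:HybPi} and then invokes the forward attractivity of $\hat{\Pi}$ from Lemma~\ref{lem:PiExist}, whereas you work with $\hat{\Pi}$ directly; both are valid.
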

\begin{proof}
By the dynamics of~\eqref{equ:HybFOM} and~\eqref{equ:HybPi}, we have
\begin{equation}
\label{NOL-z1SD}
\begin{array}{l}
\!\!\!\dot{\overbrace{x-\Pi\omega}}=A_cx+B_cL_c\omega-\dot \Pi \omega - \Pi S\omega\\
\qquad\,\,\,\,\,\,=A_cx+B_cL_c\omega-A_c\Pi\omega + \Pi S\omega - B_cL_c \omega \!-\! \Pi S\omega\\
\qquad\,\,\,\,\,\,=A_c(x-\Pi\omega)
\end{array}
\end{equation}
\red{whenever $(t, j) \in \mathcal{S}_c$}, and 
\begin{equation}
\label{NOL-z2SD}
\begin{array}{rl}
\left(x-\Pi\omega\right)^+&\!=A_d x + B_dL_d \omega - \Pi^+J\omega\\
&\!=A_d x + B_dL_d \omega - A_d\Pi\omega - B_dL_d\omega\\
&\!=A_d\left(x-\Pi\omega\right)
\end{array}
\end{equation}
\red{whenever $(t, j) \in \mathcal{S}_d$}. The response of the system of equations~(\ref{NOL-z1SD})-(\ref{NOL-z2SD}) can be written explicitly, namely
$$
x(t, j)-\Pi(t, j)\omega(t, j)=\varepsilon(t, j),
$$
with
$$
\varepsilon(t, j) = e^{A_c(t-t_{j})} \prod_{\overleftarrow{i=1}}^{j}\left(A_de^{A_c(t_i-t_{i-1})}\right)\left(x_0-\Pi_{0}\omega_0\right).
$$
As a consequence, the state response of the interconnection of system~(\ref{equ:HybFOM}) with the hybrid generator~(\ref{equ:HybGen}) is
$$
x(t, j)=\Pi(t, j)\omega(t, j)+\varepsilon(t, j).
$$
%with $
%\Pi_{\infty}(t)=\!\left(\!e^{A_c(t)}\Pi_{\infty}(0)+\!\!\int_{0}^t \!\!e^{A_c(t-\vartheta)}B_cL_ce^{S\vartheta}d\vartheta\!\right)\!e^{-St}
%\!\!\!,$
%where $\Pi_{\infty}(0)$ is the solution of the Sylvester equation
%$$
%\begin{array}{l}
%\Pi(0)_{\infty}e^{-ST}J-e^{-A_cT}A_d\Pi_{\infty}(0)=\\[-0.5mm]
%\qquad\qquad=-\int_{0}^T \!\!\!e^{-A_c\vartheta}B_cL_ce^{S(\vartheta-T)}d\vartheta +e^{-A_cT}B_dL_d.
%\end{array}
%$$
%The solution of this equation is unique if and only if $\sigma\left(J e^{ST}\right) \cap \sigma\left(A_d e^{A_cT}\right) = \emptyset$. 
Finally, note that the term $\varepsilon$ describes a transient response which decays to zero since %$\sigma\left(A_d e^{A_c(t_j-t_{j-1})}\right)\subset\mathbb{D}_{<1}$ for all $[t_j,t_{j+1}] \times {j}\in\mathcal{H}$
the zero equilibrium of system~(\ref{equ:HybFOM}) is exponentially stable, proving that $\lim_{t+j \to +\infty} x_{ss}(t, j) - \Pi(t, j)\omega(t, j) = 0$. Since $\hat{\Pi}$, by Lemma~\ref{lem:PiExist}, is the unique steady-state solution of~\eqref{equ:HybPi} and is forward attractive, we finally obtain $x_{ss} = \hat{\Pi}\omega$.
\end{proof}

\begin{remark}\label{rmk:SylvGeneralise}
The steady-state solution $\hat{\Pi}$ in Lemma~\ref{lem:PiExist} is a hybrid generalisation of the solution of the standard Sylvester equation~\eqref{equ:SylvPiLTI}. To see this, if the system~\eqref{equ:HybFOM} and the generator~\eqref{equ:HybGen} are purely continuous-time systems without jumps, the boundary condition~(\ref{equ:PikKpi}) with $j = 0$ reduces to
$$
\hat{\Pi}_{0}= K_{\hat{\Pi},0} = \int_{-\infty}^0 e^{-A_c\vartheta}B_cL_c e^{S\vartheta}d\vartheta,
$$
which is the solution to~\eqref{equ:SylvPiLTI} with $A_c = A$, $B_c = B$ and $L_c = L$. Moreover, if system~\eqref{equ:HybFOM} is continuous-time, \textit{i.e.}, $A_d = I_n$ and $B_d = 0$, and the generator~\eqref{equ:HybGen} has $L_d = L_c$, then it is possible to reformulate the steady-state solution $\hat{\Pi}$ of~\eqref{equ:HybPi} by~\eqref{equ:explicitPi}, see~\cite{ScaAst:15c} for more detail. This generalised Sylvester equation, explicitly characterised by~\eqref{equ:explicitPi}, has also been used for the non-smooth output regulation problem~\cite{ref:niu2024adaptive,ref:niu2025output}.
\end{remark}

By Lemma~\ref{lem:PiExist} and Theorem~\ref{thm:PiSS}, the steady-state output $y_{ss}$ of system~\eqref{equ:HybFOM} in the direct interconnection can be characterised by 
\begin{equation}\label{equ:yss}
   y_{ss} = C \hat{\Pi} \omega, 
\end{equation}
with $\hat{\Pi}$ the unique steady-state solution of~\eqref{equ:HybPi}. This one-to-one characterisation leads to the following definition.
\begin{definition}\label{def:HybMomentDirect}\textbf{\blue{(Moment of Hybrid System in Direct Interconnection).}} 
Consider the direct interconnection of system~(\ref{equ:HybFOM}) and the hybrid generator~(\ref{equ:HybGen}). Suppose Assumptions~\ref{asmp:jumpSpacing} and~\ref{asmp:genSJ} hold, and the zero equilibrium of system~(\ref{equ:HybFOM}) is exponentially stable. Then the function $C \hat{\Pi}$ is called the \textit{moment} of system~\eqref{equ:HybFOM} at the tuple $(S, L_c, J, L_d, \mathcal{H})$, where $\hat{\Pi}$ is the uniqe steady-state solution to~\eqref{equ:HybPi}.
\end{definition}

Then, the construction of a family of reduced-order models that satisfy the direct interconnection moment matching in Problem~\ref{prob:directMM} can be given as follows.

\begin{theorem}
\label{thm:directROM}
Consider system~(\ref{equ:HybFOM}) and the signal generator~(\ref{equ:HybGen}). Suppose Assumptions~\ref{asmp:jumpSpacing} and~\ref{asmp:genSJ} hold, and the zero equilibrium of system~(\ref{equ:HybFOM}) is exponentially stable. Then the system described by the equations
\begin{equation}\label{equ:directROM}
    \Sigma^{\prime}:\left\{\begin{array}{lcll}
    \dot{\xi} &=& (S-G_cL_c) \xi + G_c u_c, &\quad \blue{(t, j) \in \mathcal{S}_c},\\
    \xi^+ &=&(J-G_dL_d) \xi + G_d u_d, &\quad \blue{(t, j) \in \mathcal{S}_d},\\
    \psi &=&C\hat{\Pi}\xi, 
    \end{array}\right.
\end{equation}
with $\hat{\Pi}$ the unique steady-state solution of (\ref{equ:HybPi}), is a \textit{model of system~(\ref{equ:HybFOM}) matching the moment at the tuple $(S, L_c, J, L_d, \mathcal{H})$}, for any $G_{c}$ and $G_{d}$ such that
the zero equilibrium of system~(\ref{equ:directROM}) is exponentially stable. Moreover, system~(\ref{equ:directROM}) is a \textit{reduced order model of system~(\ref{equ:HybFOM}) matching the moment at $(S, L_c, J, L_d, \mathcal{H})$} and solves Problem~\ref{prob:directMM} if $\nu<n$.
\end{theorem}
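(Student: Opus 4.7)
The plan is to apply Lemma~\ref{lem:PiExist} and Theorem~\ref{thm:PiSS} a second time, this time with the reduced-order system~\eqref{equ:directROM} playing the role of the ``complex'' hybrid system driven by the generator~\eqref{equ:HybGen}. By Theorem~\ref{thm:PiSS}, the steady-state response $\xi_{ss}$ of $\Sigma^\prime$ satisfies $\xi_{ss} = \hat{P}\omega$, where $\hat{P}\in \mathbb{R}^{\nu\times\nu}$ is the unique bounded solution to the analogue of~\eqref{equ:HybPi} with $A_c$, $B_c$, $A_d$, $B_d$ replaced by $S-G_cL_c$, $G_c$, $J-G_dL_d$, $G_d$, respectively, namely
\begin{equation*}
\begin{array}{rcll}
\dot{P} &=& (S-G_cL_c)P - PS + G_c L_c, & (t,j) \in \mathcal{S}_c, \\
P^+ J &=& (J-G_dL_d)P + G_d L_d, & (t,j) \in \mathcal{S}_d.
\end{array}
\end{equation*}

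Next, I would observe that $P(t,j) \equiv I_\nu$ solves both equations by direct substitution: in the flow, $0 = (S-G_cL_c) - S + G_cL_c$, and across jumps, $J = (J-G_dL_d) + G_dL_d$. Since $I_\nu$ is bounded on $\mathcal{H}$, and since all hypotheses of Lemma~\ref{lem:PiExist} are satisfied for $\Sigma^\prime$ (Assumption~\ref{asmp:jumpSpacing} holds by construction, Assumption~\ref{asmp:genSJ} is unchanged as it concerns only the generator~\eqref{equ:HybGen}, and exponential stability of the zero equilibrium of $\Sigma^\prime$ is assumed), the uniqueness part of Lemma~\ref{lem:PiExist} forces $\hat{P} = I_\nu$. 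Consequently $\xi_{ss} = \omega$ and therefore the steady-state output of $\Sigma^\prime$ is
\begin{equation*}
\psi_{ss} = C\hat{\Pi}\xi_{ss} = C\hat{\Pi}\omega = y_{ss},
\end{equation*}
where the last equality follows from Theorem~\ref{thm:PiSS} applied to the original system~\eqref{equ:HybFOM}. This establishes moment matching at $(S,L_c,J,L_d,\mathcal{H})$.

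Finally, the reduced-order claim is immediate: when $\nu < n$ the model $\Sigma^\prime$ has strictly smaller state dimension than $\Sigma$, so $\Sigma^\prime$ solves Problem~\ref{prob:directMM}. The only delicate point I anticipate is verifying that Lemma~\ref{lem:PiExist} is applicable to $\Sigma^\prime$ in the sense that the uniqueness of the bounded steady-state solution carries over without requiring any additional non-resonance condition between $\sigma(S-G_cL_c)$ and $\sigma(S)$; this is handled precisely by the exponential stability hypothesis on $\Sigma^\prime$, which plays exactly the same role as the stability assumption on $\Sigma$ did in the statement of Lemma~\ref{lem:PiExist}. No further conditions on $G_c$ and $G_d$ beyond this stability requirement are needed, which accounts for the breadth of the reduced-order family.
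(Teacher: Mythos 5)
Your proposal is correct and follows essentially the same route as the paper: apply Theorem~\ref{thm:PiSS} to the reduced-order model driven by the generator, observe that $P\equiv I_\nu$ is the bounded (hence, by Lemma~\ref{lem:PiExist}, the unique steady-state) solution of the associated equation~\eqref{equ:HybP}, and conclude $\psi_{ss}=C\hat{\Pi}\omega=y_{ss}$. The paper's proof is exactly this argument, including the identification of the analogue Sylvester-type equations you wrote down.
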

\begin{proof}
Under the assumptions of this theorem and by Theorem~\ref{thm:PiSS}, the direct interconnection of system~(\ref{equ:directROM}) and the signal generator~(\ref{equ:HybGen}) has a steady-state output response described by $C\hat{\Pi} \hat{P}\omega$, with $P(t, j) = \hat{P}(t, j) \in \mathbb{R}^{\nu \times \nu}$ the bounded solution of
\begin{equation}
\label{equ:HybP}
\begin{array}{ccll}
\dot{P} &=& (S-G_cL_c) P - P S + G_cL_c, &\quad \blue{(t, j) \in \mathcal{S}_c},\\
P^+J &=& (J-G_dL_d) P + G_d L_d, &\quad \blue{(t, j) \in \mathcal{S}_d}.
\end{array}
\end{equation}
%Writing the second equation for $j+1$, yields
%\begin{equation}
%\label{NOL-HybPdSD}
%P_{j+1}=JP_jJ^{-1}-G_dL_d P_j J^{-1}+G_dL_dJ^{-1}.
%\end{equation}
%%Assume that $P_{j+1}=I$. Then, since $0\not \in \sigma(J-G_dL_d)$, from (\ref{NOL-HybPdSD}) follows that $P_j=I$. Iterating we obtain that if $P_{j+1}=I$ than $P_{j}=I$ for any $j=j, \dots, -\infty$. 
%Assume that $P_j=I$. From (\ref{NOL-HybPdSD}) it follows that $P_{j+1}=I$. Iterating this argument we obtain that if $P_{-\infty}=I$ than $P_{j}=I$ for any $j=-\infty, \dots, +\infty$. However, from Theorem~\ref{thm:PiSS} we jnow that $P_0$ is independent of $P_{-\infty}$ which we can then select as the identity. As a consequence $P$ is equal to the identity at the beginning of every interval $[t_j,t_{j+1}]$. Hence, the first equation yields $\dot P=0$, which implies that $P$ is constant during the flow. Then $P$ is equal to the identity for all $[t_j,t_{j+1}] \times {j}\in\mathcal{H}$ and 
We see that \(\hat{P}(t, j) = I_\nu \) is the (unique steady-state) solution to~\eqref{equ:HybP} for all $(t, j) \in \mathcal{H}$. As a consequence the steady-state output response $\psi_{ss}$ of system~(\ref{equ:directROM})  becomes $C\hat{\Pi}\omega$, \textit{i.e.} the family of models~(\ref{equ:directROM}) solves Problem~\ref{prob:directMM}.
\end{proof}

\begin{remark}\label{rmk:PiCompute}
    From a computational point of view, the boundary condition~\eqref{equ:PikKpi} that characterises $\hat{\Pi}$ as in~\eqref{equ:PiBodCond} is hard to obtain explicitly, but can be approximated with arbitrary precision by considering a finite number of terms in~(\ref{equ:PikKpi}). \blue{Nevertheless, it is not necessary to explicitly compute~\eqref{equ:PikKpi} to derive the trajectories of $\hat{\Pi}$}. Since the steady-state solution $\hat{\Pi}$ is forward attractive for all initial conditions $\Pi_0 \in \mathbb{R}^{n \times \nu}$, any solution to~\eqref{equ:HybPi} converges to $\hat{\Pi}$, with the effect that the initial condition exponentially disappears. Consequently, the reduced-order model~\eqref{equ:directROM} still achieves matching if $\hat{\Pi}$ is replaced by any $\Pi$, as the initial condition of $\Pi$ does not affect the asymptotic relation $\lim_{t+j \to +\infty} y_{ss}(t, j) - C\Pi(t, j)\omega(t, j) = 0$, but just affects the speed of its convergence.
\end{remark}

The matching of the steady-state required to solve Problem~\ref{prob:directMM}, achieved by Theorem~\ref{thm:directROM}, is equivalent to matching the moment of  system~\eqref{equ:HybFOM}. More specifically, by Definition~\ref{def:HybMomentDirect}, the moment of system~\eqref{equ:HybROM} satisfies $C\hat{\Pi} \hat{P} = C\hat{\Pi}$, and consequently solves Problem~\ref{prob:directMM}.
Note that the vectors $G_c$ and $G_d$ are free parameters and can be used to achieve additional properties for the family of reduced order models~(\ref{equ:directROM}), see~\cite{Ast:10}. In this paper we show that these free parameters can be selected to achieve two-sided moment matching.

\section{Moment Matching for the Swapped Interconnection}\label{sec:swapMM}
In this section we study the swapped interconnection moment matching problem formulated in Problem~\ref{prob:swappedMM}. To this end, similarly to the direct interconnection in Section~\ref{sec:directMM}, by relying on the solution to another hybrid generalisation of a certain Sylvester equation, we characterise the steady-state response of the swapped interconnection in Fig.~\ref{fig:MMIntHyb}(b), and then find the reduced-order model that matches such a steady state.

\begin{lemma}\label{lem:UpsilonExist}
    Consider systems~(\ref{equ:HybFOM}) and~(\ref{equ:HybFilt}). Suppose Assumptions~\ref{asmp:jumpSpacing} and~\ref{asmp:filtQR} hold, and the zero equilibrium of system~(\ref{equ:HybFOM}) is exponentially stable. If $A_d$ is invertible, there exists a unique bounded (steady-state) solution $\Upsilon(t, j) = \hat{\Upsilon}(t, j) \in \mathbb{R}^{\nu \times n}$ solving
    \begin{equation}
    \label{equ:HybUpsilon}
    \begin{array}{ccll}
    \dot{\Upsilon} &=& Q_c\Upsilon - R_c C -\Upsilon A_c, &\quad \blue{(t, j) \in \mathcal{S}_c},\\
    \Upsilon^+A_d &=& Q_d\Upsilon - R_d C, &\quad \blue{(t, j) \in \mathcal{S}_d}. 
    \end{array}
    \end{equation}
    for all $(t, j) \in \mathcal{H}$. Moreover, this solution $\hat{\Upsilon}$ satisfies the boundary condition 
    \begin{equation}\label{equ:UpsilonBodCond}
    \hat{\Upsilon}_j = K_{\hat{\Upsilon}\!, j},
    \end{equation}
    for all $j \in \mathbb{Z}$, where
    % \begin{equation}
    % \label{equ:UpsilonkKupsilon}
    % \begin{array}{l}
    %     K_{\hat{\Upsilon}\!, k} = \sum_{j=k+1}^{+\infty}\\
    %     \quad \bigg(\!\!\!  \prod_{\overrightarrow{p=k}}^{j} \!\!\left(\!e^{Q_c(t_{p}-t_{p+1})}Q_d^{-1}\!\right)\!\!\Big(\!\!\!\int_{t_{j}}^{t_{j+1}} \!\!\!e^{Q_c(t_j-\vartheta)}\!R_c C e^{A_c(\vartheta-t_j)}\!d\vartheta\\
    %     \quad - e^{Q_c(t_j-t_{j+1})}Q_d^{-1} R_d C e\!^{A_c(t_{j+1}-t_j)} \!\!\Big)\!\!\prod_{\overleftarrow{p=k}}^{j}\! \left(\!A_de^{A_c(t_{p+1}-t_{p})}\right)\!\!\!\bigg)\\
    %     \quad +\!\int_{t_{k}}^{t_{k+1}} \!\!e^{Q_c(t_k-\vartheta)}R_c C e^{A_c(\vartheta-t_k)}d\vartheta - W_{\Upsilon\!, k}.
    % \end{array}
    % \end{equation}
    \begin{equation}
    \label{equ:UpsilonkKupsilon}
    \begin{array}{l}
        K_{\hat{\Upsilon}\!, j} = \\
         \sum_{i=j}^{+\infty}\!\!\bigg(\!\!\!  \prod_{\overrightarrow{q=j}}^{i-1} \!\!\left(\!e^{Q_c(t_{q}-t_{q+1})}\!Q_d^{-1}\!\right)\!\!\Big(\!\!\int_{t_{i}}^{t_{i+1}} \!\!e^{Q_c(t_i-\vartheta)}\!R_c C e^{A_c(\vartheta-t_i)}\!d\vartheta\\
         -\, e^{Q_c(t_i-t_{i+1})}Q_d^{-1} R_d C e\!^{A_c(t_{i+1}-t_i)} \!\Big)\!\prod_{\overleftarrow{q=j}}^{i-1}\! \left(\!A_de^{A_c(t_{q+1}-t_{q})}\right)\!\!\!\bigg),
    \end{array}
    \end{equation}
    % and $W_{\Upsilon\!,k} = e^{Q_c(t_k-t_{k+1})}Q_d^{-1}\!R_d C e^{A_c(t_{k\!+\!1} - t_k)}$.
    and is backward attractive, that is, $\hat{\Upsilon}$ is such that for any $\Upsilon$ solving~\eqref{equ:HybUpsilon} and starting with any initial condition $\Upsilon(\bar{t}, \bar{j}) \in \mathbb{R}^{\nu \times n}$ with $(\bar{t}, \bar{j}) \in \mathcal{H}$, it holds that
    \begin{equation}\label{equ:UpsilonBackAttract}
    \lim_{t+j \to -\infty} \Upsilon(t, j) - \hat{\Upsilon}(t, j) = 0.
    \end{equation}
\end{lemma}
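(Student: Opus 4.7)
My plan is to mirror the structure of the proof of Lemma~\ref{lem:PiExist}, but with the roles dualised: flow is now on the right (via $A_c$) rather than the left, and the recursion across jumps must be iterated \emph{forward} in $j$ rather than backward, so that the decaying products of $A_d e^{A_c\delta_q}$ appear on the correct side. The first step is to construct the flow solution of~\eqref{equ:HybUpsilon} on each interval $[t_j,t_{j+1}]\times\{j\}$ by variation of constants. Using the ansatz $\Upsilon(t,j)=e^{Q_c(t-t_j)}Y(t)e^{-A_c(t-t_j)}$ reduces the flow equation to $\dot Y=-e^{-Q_c(t-t_j)}R_c C e^{A_c(t-t_j)}$, whose integration yields $\Upsilon(t,j)=\bigl(e^{Q_c(t-t_j)}\Upsilon_j-\int_{t_j}^{t}e^{Q_c(t-\vartheta)}R_c C e^{A_c(\vartheta-t_j)}d\vartheta\bigr)e^{-A_c(t-t_j)}$. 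Evaluating at $t=t_{j+1}$ and substituting into the jump relation $\Upsilon^+A_d=Q_d\Upsilon-R_d C$ gives an affine one-step recurrence linking $\Upsilon_j$ and $\Upsilon_{j+1}$.

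Because $A_c$ (together with $A_d$) is responsible for the contraction of $\prod A_d e^{A_c\delta_q}$ going forward in $q$, while $Q_d$ is invertible by Assumption~\ref{asmp:filtQR}, I would solve the recurrence for $\Upsilon_j$ in terms of $\Upsilon_{j+1}$, obtaining $\Upsilon_j=M_j+e^{-Q_c\delta_j}Q_d^{-1}\,\Upsilon_{j+1}A_d e^{A_c\delta_j}$, where $M_j$ collects the contributions of $R_c$ and $R_d$. Iterating this from $j$ up to an index $i$ produces a partial sum together with a residual $\bigl(\prod_{\overrightarrow{q=j}}^{i-1}e^{-Q_c\delta_q}Q_d^{-1}\bigr)\Upsilon_i\bigl(\prod_{\overleftarrow{q=j}}^{i-1}A_d e^{A_c\delta_q}\bigr)$. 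By Assumption~\ref{asmp:filtQR} the left product is uniformly bounded (it is the map carrying $\varpi_i\mapsto\varpi_j$ for the free filter, and boundedness of free filter trajectories gives a uniform operator bound), while Assumption~\ref{asmp:jumpSpacing} combined with the exponential stability of the zero equilibrium of~\eqref{equ:HybFOM} makes the right product decay like $K e^{-\lambda(i-j)\underline\delta}$. Hence the residual vanishes as $i\to+\infty$, the resulting series defines $\hat\Upsilon_j=K_{\hat\Upsilon,j}$ in~\eqref{equ:UpsilonkKupsilon}, and the same tail estimate, being uniform in $j$, shows that $\hat\Upsilon$ is bounded for every $(t,j)\in\mathcal{H}$.

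For uniqueness and backward attractivity I would study the homogeneous error $E=\Upsilon-\hat\Upsilon$, which satisfies $\dot E=Q_c E-E A_c$ during flow and $E^+A_d=Q_d E$ across jumps. This is the step where the extra hypothesis that $A_d$ is invertible is actually used: it lets us write $E^+=Q_d E A_d^{-1}$, and combining with the explicit flow solution gives, for any $\underline{j}\le\bar j$, the closed form $E_{\underline{j}}=\bigl(\prod_{\overrightarrow{q=\underline{j}}}^{\bar j-1}e^{-Q_c\delta_q}Q_d^{-1}\bigr)E_{\bar j}\bigl(\prod_{\overleftarrow{q=\underline{j}}}^{\bar j-1}A_d e^{A_c\delta_q}\bigr)$. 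Fixing $\bar j$ and letting $\underline{j}\to-\infty$, the right product contains infinitely many exponentially contracting factors while the left product stays bounded, which delivers $\lim_{t+j\to-\infty}(\Upsilon-\hat\Upsilon)=0$, i.e.~\eqref{equ:UpsilonBackAttract}. Applied in the opposite direction, with $\underline{j}=j$ fixed and $\bar j\to+\infty$, the same estimate forces $E_j=0$ whenever $\Upsilon$ is bounded, giving uniqueness of $\hat\Upsilon$ within the class of bounded solutions.

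The main obstacle I anticipate is bookkeeping rather than conceptual: the $Q_c$-factors act from the left and the $A_c$-factors from the right with opposite orderings, and one has to verify that the truncation residuals converge \emph{uniformly in $j$} (so that the series for $K_{\hat\Upsilon,j}$ defines a globally bounded function on $\mathcal H$) using only the uniform dwell-time bounds of Assumption~\ref{asmp:jumpSpacing} together with the hybrid exponential stability of~\eqref{equ:HybFOM}. A subsidiary technical point is to confirm that boundedness of free trajectories of~\eqref{equ:HybFilt} in both time directions (Assumption~\ref{asmp:filtQR}) really yields a uniform operator-norm bound on the left-product factor rather than a merely pointwise bound; this follows from reachability of the pairs $(Q_c,R_c)$ and $(Q_d,R_d)$ combined with a standard uniform-boundedness argument.
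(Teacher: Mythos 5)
Your proposal is correct and follows essentially the same route as the paper's proof: variation of constants on each flow interval, a one-step recurrence expressing $\Upsilon_j$ in terms of $\Upsilon_{j+1}$ via the invertibility of $Q_d$, forward iteration in $j$ whose residual vanishes because the right-hand products $A_d e^{A_c\delta_q}$ contract while the left-hand products stay bounded, and a homogeneous error analysis (using the invertibility of $A_d$) for uniqueness among bounded solutions and for backward attractivity. The only cosmetic differences are that you parameterise the flow solution by the left endpoint $\Upsilon_j$ rather than by $\Upsilon(t_{j+1},j)$, and you obtain uniqueness by letting $\bar j\to+\infty$ in the backward relation rather than by the paper's contradiction argument that an unbounded forward evolution would result; both rest on the same estimates.
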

\begin{proof}
We first show the existence of a bounded solution to~\eqref{equ:HybUpsilon}. \red{Whenever $(t, j) \in \mathcal{S}_c$}, a proof analogous to that of~\eqref{equ:PiExpInterval} shows that the solution~$\Upsilon$ can be expressed as
\begin{equation}
\label{equ:UpsilonExpInterval}
\begin{array}{l}
\Upsilon(t, j)=
\left(\vphantom{\int_{t_j}^t } e^{Q_c(t-t_j+1)}\Upsilon(t_{j+1}, j)+\right.\\
\qquad+\left.\int_{t}^{t_{j+1}} \!\!\!e^{Q_c(t-\vartheta)}R_c C e^{A_c(\vartheta-t_{j+1})}d\vartheta\!\right)\!e^{-A_c(t-t_{j+1})}\!,
\end{array}
\end{equation}
which is well-defined for \red{all $(t, j) \in \mathcal{S}_c$}. Then existence of the solution depends on that of the boundary condition $\Upsilon(t_{j+1}, j)$, or equivalently, existence of $\Upsilon_{j+1}$ as $\Upsilon(t_{j+1}, j) = Q_d^{-1}\Upsilon_{j+1} A_d - Q_d^{-1}R_d C$ with $Q_d$ invertible by assumption. Now let $t = t_j$, substituting $\Upsilon(t_{j+1}, j)$ in the right-hand side of~\eqref{equ:UpsilonExpInterval}, with $Q_d^{-1}\Upsilon_{j+1}A_d - Q_d^{-1}R_d C$, we obtain
\begin{equation}\label{equ:UpsilonIterat}
\begin{aligned}
\Upsilon_j =\,\,\,& e^{Q_c(t_j-t_{j+1})}Q_d^{-1}\Upsilon_{j+1}A_de^{A_c(t_{j+1}-t_j)} \\
&- e^{Q_c(t_j - t_{j+1})}Q_d^{-1}R_d C e^{A_c(t_{j+1} - t_j)} \\
&+\int_{t_j}^{t_{j+1}} \!\!e^{Q_c(t_{j}-\vartheta)}R_c Ce^{A_c(\vartheta-t_{j})}d\vartheta.
\end{aligned}
\end{equation}
Therefore, we can find a solution $\Upsilon$ to~\eqref{equ:HybUpsilon} if $\Upsilon_j$ exists for any $j \in \mathbb{Z}$. To this end, we iterate and replace~\eqref{equ:UpsilonIterat} for $\Upsilon_{j+1}, \Upsilon_{j+2}, \cdots$, resulting in the relation
\begin{equation}\label{equ:UpsilonkExp}
\begin{array}{l}
\Upsilon_{j} \!=\! \prod_{\overrightarrow{i=j}}^{+\infty} \left(e^{-Q_c(t_{i+1}-t_{i})}Q_d^{-1}\right)\! \Upsilon_{+\infty}\prod_{\overleftarrow{i=j}}^{+\infty}\left(\!A_de^{A_c(t_{i+1}-t_{i})}\right)\\
\qquad +\, K_{\Upsilon\!,j},
\end{array}
\end{equation}
with $K_{\hat{\Upsilon}\!, j}$ as in~\eqref{equ:UpsilonkKupsilon} and $\Upsilon_{+\infty} \in \mathbb{R}^{\nu \times n}$ an arbitrary value when $j \to +\infty$. Then similarly to~\eqref{equ:PikExp}, as Assumption~\ref{asmp:filtQR} holds and the zero equilibrium of system~\eqref{equ:HybFOM} is exponentially stable, the first addend in~\eqref{equ:UpsilonkExp} is zero while the term $K_{\hat{\Upsilon}\!, j}$ is finite for all $j \in \mathbb{Z}$. Then by the definition in~\cite{ref:isidori2008steady}, $\hat{\Upsilon}$ satisfying~\eqref{equ:UpsilonBodCond} is the steady-state solution of~\eqref{equ:HybUpsilon} and is bounded for all times as $K_{\hat{\Upsilon}\!,j}$ is bounded whenever $j \to \pm \infty$. This solution, parameterised by~\eqref{equ:UpsilonBodCond}, is also unique because both the characterisation of the boundary condition after each jump, as in~\eqref{equ:UpsilonBodCond}, and the flow forward with time, as in~\eqref{equ:UpsilonExpInterval}, admit unique solutions under Assumption~\ref{asmp:jumpSpacing}. To show that this solution $\hat{\Upsilon}$ is the only bounded solution to~\eqref{equ:HybUpsilon}, we assume, by contradiction, that there exists another bounded solution $\Upsilon$ solving~\eqref{equ:HybUpsilon}. Then note that $\Xi:= \Upsilon - \hat{\Upsilon}$ satisfies
\begin{equation}\label{equ:HybUpsilonError}
    \begin{array}{lcll}
    \dot{\Xi} &=& Q_c \Xi - \Xi A_c, &\quad \blue{(t, j) \in \mathcal{S}_c}, \\
    \Xi^+  &=& Q_d \Xi A_d^{-1}, &\quad \blue{(t, j) \in \mathcal{S}_d}.
    \end{array}
\end{equation}
Similarly to~\eqref{equ:HybPiError}, the evolution of $\Xi$ can be explicitly characterised by $\Xi(t, j) \!=\!e^{Q_c(t-t_j)} \hat{\Xi}_j e^{-A_c(t-t_j)}$ where
\begin{equation*}
\begin{array}{l}
\hat{\Xi}_j :=\prod_{\overleftarrow{i=\underline{j}}}^{j} \left(Q_de^{Q_c(t_{i}-t_{i-1})}\right)\Xi_{\underline{j}}\prod_{\overrightarrow{i=\underline{j}}}^{j}\left(e^{-A_c(t_{i}-t_{i-1})}A_d^{-1}\right).
\end{array}
\end{equation*}
for any integer $\underline{j} \leq j$. Since Assumption~\ref{asmp:filtQR} holds and the zero equilibrium of system~\eqref{equ:HybFOM} is stable, the first product sequence is bounded by hypothesis, whereas the second product sequence is diverging with $t + j \to +\infty$. Consequently, $\lim_{t+j \to +\infty} \Xi(t, j)$ $= +\infty$. As the solution $\hat{\Upsilon}$ is bound, this result contradicts boundedness of $\Upsilon$, indicating that $\hat{\Upsilon}$ is the only bounded solution to~\eqref{equ:HybUpsilon}. Finally, to show $\hat{\Upsilon}$ is backward attractive, a similar iteration as in~\eqref{equ:UpsilonkExp} yields that
$\Xi(t, j) = e^{-Q_c(t_j-t)} \Xi(t_{j+1}, j) $ $ e^{A_c(t_j-t)} = e^{-Q_c(t_j-t)} Q_d^{-1} \hat{\Xi}_{j+1} A_d e^{A_c(t_j-t)}$ where
\begin{equation*}
\begin{array}{l}
\hat{\Xi}_{j+1} \!:=\!\prod_{\overrightarrow{i=j}}^{\bar{j}}\! \left(e^{-Q_c(t_{i+1}-t_{i})}Q_d^{-1}\right) \!\Xi_{\bar{j}}\prod_{\overleftarrow{i=j}}^{\bar{j}}\! \left(A_de^{A_c(t_{i+1}-t_{i})}\right)\!.
\end{array}
\end{equation*}
for any integer $\bar{j} \geq j$. This result implies that $\lim_{t+j \to -\infty} \Xi(t, j) = 0$ for any $\Xi_{\bar{j}} \in \mathbb{R}^{\nu \times n}$. Equivalently, for any final condition $\Upsilon(\bar{t}, \bar{j}) \in \mathbb{R}^{\nu \times n}$ with  $(\bar{t}, \bar{j}) \in \mathcal{H}$, we have $\lim_{t+j \to -\infty} \Upsilon(t, j) - \hat{\Upsilon}(t, j) = 0$, \textit{i.e.}, the bounded solution $\hat{\Upsilon}$ is backward attractive.
\end{proof}

\begin{remark}
    Similarly to $\hat{\Pi}$ discussed in Remark~\ref{rmk:SylvGeneralise}, the solution $\hat{\Upsilon}$ studied in Lemma~\ref{lem:UpsilonExist} is a hybrid generalisation of the solution of the Sylvester equation~\eqref{equ:SylvUpsilonLTI}. More specifically, if the system~\eqref{equ:HybFOM} and the filter~\eqref{equ:HybFilt} are purely continuous-time systems without jumps, the boundary condition~(\ref{equ:UpsilonkKupsilon}) with $j = 0$ reduces to
    $$
    \hat{\Upsilon}_{0}= K_{\hat{\Upsilon}, 0} = \int_0^{+\infty}e^{-Q_c\vartheta}R_c C e^{A_c\vartheta}d\vartheta,
    $$
    which is the solution to~\eqref{equ:SylvPiLTI} with $A_c = A$, $Q_c = Q$, and $R_c = R$.
    Moreover, if system~\eqref{equ:HybFOM} is continuous-time, \textit{i.e.}, $A_d = I_n$ and $B_d = 0$, and the filter~\eqref{equ:HybFilt} has $R_d = R_c$, then it is possible to reformulate the steady-state solution $\hat{\Upsilon}$ of~\eqref{equ:HybUpsilon} by~\eqref{equ:explicitUpsilon}, see~\cite{ref:mao2024model} for more detail. This generalised Sylvester equation, explicitly characterised by~\eqref{equ:explicitUpsilon}, has also been used for the stabilisation of cascade hybrid systems~\cite{ref:niu2025stabilisation}.
\end{remark}

The hybrid solution $\hat{\Upsilon}$ of~\eqref{equ:HybUpsilon} characterises the steady-state response of the swapped interconnection as follows.

\begin{theorem} 
\label{thm:UpsilonSS}
Consider the swapped interconnection of system~(\ref{equ:HybFOM}) and the hybrid filter~(\ref{equ:HybFilt}) shown in Fig.~\ref{fig:MMIntHyb}(b). Suppose Assumptions~\ref{asmp:jumpSpacing} and~\ref{asmp:filtQR} hold, and the zero equilibrium of system~(\ref{equ:HybFOM}) is exponentially stable with $A_d$ invertible. Then for any non-zero signals $u_c$ and $u_d$ exponentially decaying to zero, the steady-state response of the filter~\eqref{equ:HybFilt} in the interconnection, initialised with $x_0 = 0$ and $\varpi_0 = 0$, satisfies $\lim_{t+j \to +\infty} \|d(t, j) - \varpi_{ss}(t, j)\| = 0$, where $d(t, j) \in \mathbb{R}^{\nu}$ is the solution to
\begin{equation}\label{equ:Hybd}
\left\{\,\begin{array}{lll}
\dot{d} &= Q_c d + \hat{\Upsilon} B_c u_c, &\quad \blue{(t, j) \in \mathcal{S}_c},\\
d^+ &= Q_d d + \hat{\Upsilon}^+ B_d u_d, &\quad \blue{(t, j) \in \mathcal{S}_d},
\end{array} \right.
\end{equation}
with the initial condition $d_0 = 0$ and $\hat{\Upsilon}$ the unique steady-state solution to~\eqref{equ:HybUpsilon}.

\end{theorem}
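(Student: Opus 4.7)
The plan is to adapt the cancellation argument used in Theorem~\ref{thm:PiSS}, but now with an error variable that couples all three of $\varpi$, $x$ and $d$. Specifically, I would define $\tilde{e} := \varpi + \hat{\Upsilon} x - d$, where $\hat{\Upsilon}$ is the unique bounded steady-state solution of~\eqref{equ:HybUpsilon} guaranteed by Lemma~\ref{lem:UpsilonExist}. My goal is to show that $\tilde{e}$ evolves as an autonomous hybrid system, namely $\dot{\tilde{e}} = Q_c \tilde{e}$ during flows and $\tilde{e}^+ = Q_d \tilde{e}$ across jumps, so that the zero initial condition $\tilde{e}(0,0) = \varpi_0 + \hat{\Upsilon}(0,0) x_0 - d_0 = 0$ forces $\tilde{e} \equiv 0$ on all of $\mathcal{H}$.

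For the flow dynamics, I would differentiate $\tilde{e}$ using $\dot{\varpi} = Q_c \varpi + R_c C x$ (after substituting $\eta_c = y = Cx$ from the interconnection), the first equation of~\eqref{equ:HybUpsilon} for $\dot{\hat{\Upsilon}}$, the flow part of~\eqref{equ:HybFOM} for $\dot{x}$, and~\eqref{equ:Hybd} for $\dot{d}$. The $\pm R_c C x$, $\pm \hat{\Upsilon} A_c x$, and $\pm \hat{\Upsilon} B_c u_c$ terms telescope pairwise, leaving exactly $\dot{\tilde{e}} = Q_c \tilde{e}$. For the jump, I would use $\varpi^+ = Q_d \varpi + R_d C x$, $x^+ = A_d x + B_d u_d$, the jump part of~\eqref{equ:Hybd} for $d^+$, and the jump part of~\eqref{equ:HybUpsilon} rewritten as $\hat{\Upsilon}^+ A_d = Q_d \hat{\Upsilon} - R_d C$ to substitute for $\hat{\Upsilon}^+ A_d x$; the same telescoping then yields $\tilde{e}^+ = Q_d \tilde{e}$.

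Once $\tilde{e} \equiv 0$ is established, the identity $\varpi(t,j) - d(t,j) = -\hat{\Upsilon}(t,j)\, x(t,j)$ holds pointwise on $\mathcal{H}$. Since $\hat{\Upsilon}$ is bounded by Lemma~\ref{lem:UpsilonExist}, and since exponential stability of the zero equilibrium of~\eqref{equ:HybFOM} together with exponentially decaying $u_c, u_d$ and $x_0 = 0$ forces $x(t,j) \to 0$ as $t+j \to +\infty$ (a standard ISS-type estimate on the hybrid trajectory, using Assumption~\ref{asmp:jumpSpacing} to prevent Zeno accumulation), I would conclude $\|\varpi(t,j) - d(t,j)\| \to 0$. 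Combining this with $\|\varpi(t,j) - \varpi_{ss}(t,j)\| \to 0$, which holds by definition of $\varpi_{ss}$ as the steady-state response of the filter, the triangle inequality yields the claim.

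The main obstacle, in my view, is not the algebra---which is a clean telescoping of the hybrid Sylvester identities---but guessing the correct ansatz: the symmetric combination $\varpi + \hat{\Upsilon} x - d$ is precisely what makes the forcing terms $R_c C x$, $\hat{\Upsilon} B_c u_c$ (and their jump analogues $R_d C x$, $\hat{\Upsilon}^+ B_d u_d$) cancel simultaneously in both regimes. Its discovery is guided by the continuous-time analog, where one verifies directly that the swapped interconnection and the surrogate system~\eqref{equ:Hybd} produce responses that differ only by $\hat{\Upsilon} x$. A secondary technical care point is ensuring that the hybrid transient $\hat{\Upsilon} x$ indeed vanishes under the stated exponential stability, which relies on $\hat{\Upsilon}$ being bounded on $\mathcal{H}$---exactly the content of Lemma~\ref{lem:UpsilonExist}.
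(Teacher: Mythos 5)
Your proposal is correct and follows essentially the same route as the paper: the paper likewise establishes the identity $d = \varpi + \hat{\Upsilon}\,x$ by verifying that $\varpi + \hat{\Upsilon}\,x$ obeys exactly the dynamics~\eqref{equ:Hybd} with the same zero initial condition (your autonomous error variable $\tilde{e}$ is just a repackaging of that computation), using the flow and jump identities of~\eqref{equ:HybUpsilon} to cancel the $R_cCx$, $\hat{\Upsilon}A_cx$ and $R_dCx$, $\hat{\Upsilon}^+A_dx$ terms. The conclusion then follows, as you say, from boundedness of $\hat{\Upsilon}$ (Lemma~\ref{lem:UpsilonExist}) and $x(t,j)\to 0$ under exponential stability with exponentially decaying inputs.
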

\begin{proof}
    Since the solution $\hat{\Upsilon}$  to~\eqref{equ:HybUpsilon} is bounded and piecewise absolutely continuous, the solution $d$ to~\eqref{equ:Hybd} also exists and is bounded for all $(t, j) \in \mathcal{H}$ as Assumption~\ref{asmp:filtQR} holds and the input signals $u_c$ and $u_d$ exponentially converge to zero. Then to prove $d$ satisfies $\lim_{t+j \to +\infty} \|d(t, j) - \varpi_{ss}(t, j)\| = 0$, we first show that $d$ in~\eqref{equ:Hybd}, initialised with $d_0 = 0$, is such that $d = \varpi + \hat{\Upsilon} x$. To see this note that
    \begin{equation}
    \begin{array}{rl}
       \overset{\smash{\raisebox{-3pt}{$\cdot$}}}{\overbrace{\varpi + \hat{\Upsilon} \, x}} &= Q_c \varpi + R_c C x + \dot{\hat{\Upsilon}}x + \hat{\Upsilon} (A_cx + B_c u_c) \\
        &= Q_c\varpi +  (R_c C + \dot{\hat{\Upsilon}} + \hat{\Upsilon} A_c)x + \hat{\Upsilon} B_c u_c \\
        &= Q_c(\varpi + \hat{\Upsilon} x) + \hat{\Upsilon} B_c u_c
    \end{array}
    \end{equation}
    \red{whenever $(t, j) \in \mathcal{S}_c$}, and
    \begin{equation}
    \begin{array}{rl}
        (\varpi + \hat{\Upsilon} x)^+ &= Q_d \varpi + R_d C x + \hat{\Upsilon}^+ A_d x + \hat{\Upsilon}^+ B_d u_d \\
        &= Q_d\varpi + Q_d\hat{\Upsilon} x + \hat{\Upsilon}^+ B_d u_d \\
        &= Q_d(\varpi + \hat{\Upsilon} x) + \hat{\Upsilon}^+ B_d u_d
    \end{array}
    \end{equation}
    \red{whenever $(t, j) \in \mathcal{S}_d$}. Since the dynamics of $\varpi + \hat{\Upsilon} x$ coincides with the dynamics of $d$ in~\eqref{equ:Hybd} with the initial condition $d_0 = \varpi_0 + \hat{\Upsilon}_0 x_0 = 0$, then $d = \varpi + \hat{\Upsilon} x$ for all $(t, j) \in \mathcal{H}$. Meanwhile, since the zero equilibrium of system~\eqref{equ:HybFOM} is exponentially stable, with inputs $u_c$ and $u_d$ exponentially converging to zero, we have $\lim_{t+j \to +\infty} x(t, j) = 0$ and therefore 
    $$\lim_{t+j \to +\infty}\! \|d(t, j) - \varpi_{ss}(t, j)\| \!=\!\!\! \lim_{t+j \to +\infty} \|\hat{\Upsilon}(t, j)x(t, j)\| \!=\! 0$$ 
    as $\hat{\Upsilon}$ is bounded for all times. This concludes the proof.
\end{proof}

Theorem~\ref{thm:UpsilonSS} indicates that the steady-state response $\varpi$ of the filter in the swapped interconnection in Fig.~\ref{fig:MMIntHyb}(b) is characterised by the dynamics of $\hat{\Upsilon} B_c$ and $\hat{\Upsilon}^{+} B_d$. Unlike $\hat{\Pi}$ in Lemma~\ref{lem:PiExist}, the bounded solution $\hat{\Upsilon}$ of~\eqref{equ:HybUpsilon}, as shown in Lemma~\ref{lem:UpsilonExist}, is the only bounded solution when $t + j \to \infty$ (which is guaranteed by the assumption that $A_d$ in~\eqref{equ:HybFOM} is invertible) and is backward attractive. These two differences are expected, as the solution $\Upsilon$ of~\eqref{equ:HybUpsilon} can be seen as a dual version of the solution $\Pi$ of~\eqref{equ:HybPi}, see~\cite{ref:simpson2025steady}.

\begin{remark}\label{rmk:UpsilonCompute}
    From a computational point of view, the boundary condition~\eqref{equ:UpsilonBodCond} for characterising the unique bounded solution $\hat{\Upsilon}$ is hard to obtain \blue{as its computation requires the sum of infinite terms as in~\eqref{equ:UpsilonkKupsilon}. However, it is not necessary to explicitly compute~\eqref{equ:UpsilonkKupsilon} to derive the trajectories of $\hat{\Upsilon}$.} Due to the ``backward attractivity'' in~\eqref{equ:UpsilonBackAttract}, the bounded solution $\hat{\Upsilon}$ can be approximated if one solves~\eqref{equ:HybUpsilon} backwards, starting with a final time sufficiently large.
\end{remark}

In Lemma~\ref{lem:UpsilonExist} and Theorem~\ref{thm:UpsilonSS}, the invertibility of the matrix $A_d$ of system~\eqref{equ:HybFOM} is needed to guarantee the uniqueness of the bounded solution to~\eqref{equ:HybUpsilon}. This uniqueness is important, as we obtain a one-to-one relation between the steady-state response of the swapped interconnection and the dynamics $\hat{\Upsilon} B_c$ and $\hat{\Upsilon}^+ B_d$, inspiring the definition of moment for the swapped interconnection.
\begin{definition}\label{def:HybMomentSwapped} \textbf{\blue{(Moment of Hybrid System in Swapped Interconnection).}} Consider the swapped interconnection of system~(\ref{equ:HybFOM}) and the hybrid filter~(\ref{equ:HybFilt}). Suppose Assumptions~\ref{asmp:jumpSpacing} and~\ref{asmp:filtQR} hold, and the zero equilibrium of system~(\ref{equ:HybFOM}) is exponentially stable with $A_d$ invertible. Then the pair $(\hat{\Upsilon} B_c, \hat{\Upsilon}^+ B_d)$ is called the moment of system~\eqref{equ:HybFOM} at the tuple $(Q_c, R_c, Q_d, R_d, \mathcal{H})$, where $\hat{\Upsilon}$ is the unique bounded solution of~\eqref{equ:HybUpsilon}.
\end{definition}

The resulting moment matching problem is solved by the following theorem.

\begin{theorem}\label{thm:swapROM}
    Consider system~(\ref{equ:HybFOM}) and the filter~(\ref{equ:HybFilt}). Suppose Assumptions~\ref{asmp:jumpSpacing} and~\ref{asmp:filtQR} hold, and
    the zero equilibrium of system~(\ref{equ:HybFOM}) is exponentially stable with $A_d$ invertible. Then the system described by the equations
    \begin{equation}\label{equ:swapROM}
        \!\!\!\Sigma^{\prime}:\left\{\,\begin{array}{rcll}
        \dot{\xi}\,\,\, &=& (Q_c - R_c H) \xi + \hat{\Upsilon} B_c u_c, &\quad \blue{(t, j) \in \mathcal{S}_c},\\
        \xi^+ &=& (Q_d - R_d H) \xi + \hat{\Upsilon}^+ B_d u_d, &\quad \blue{(t, j) \in \mathcal{S}_c},\\
        \psi\,\,\, &=& H\xi,
        \end{array}\right.
    \end{equation}
    with $\hat{\Upsilon}$ the bounded solution of (\ref{equ:HybUpsilon}), is a \textit{model of system~(\ref{equ:HybFOM}) matching the moment at} the tuple $(Q_c, R_c, Q_d, R_d, \mathcal{H})$, for any $H$ such that the zero equilibrium of system~(\ref{equ:directROM}) is exponentially stable and the matrix $Q_d - R_d H$ is invertible. Moreover, system~(\ref{equ:swapROM}) is a \textit{reduced order model of system~(\ref{equ:HybFOM}) matching the moment at $(Q_c, R_c, Q_d, R_d, \mathcal{H})$} and solves Problem~\ref{prob:swappedMM} if $\nu<n$.
\end{theorem}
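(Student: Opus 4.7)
The plan is to mirror the strategy used for the direct interconnection in Theorem~\ref{thm:directROM}: apply the characterisation of the swapped-interconnection steady state from Theorem~\ref{thm:UpsilonSS} to the reduced-order model~\eqref{equ:swapROM} itself, and then show that the ``internal'' hybrid Sylvester equation that arises admits the identity matrix as its unique bounded solution. This will immediately yield matching of the moment pair $(\hat{\Upsilon} B_c,\, \hat{\Upsilon}^+ B_d)$ in the sense of Definition~\ref{def:HybMomentSwapped}.

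First, I would verify that the hypotheses of Lemma~\ref{lem:UpsilonExist} and Theorem~\ref{thm:UpsilonSS} are satisfied for the reduced-order system~\eqref{equ:swapROM} in place of~\eqref{equ:HybFOM}. The assumption on $H$ guarantees that the zero equilibrium of~\eqref{equ:swapROM} is exponentially stable and that the discrete-dynamics matrix $Q_d - R_d H$ is invertible, playing the role of the invertibility of $A_d$ in Lemma~\ref{lem:UpsilonExist}. Assumption~\ref{asmp:filtQR} continues to hold since the filter~\eqref{equ:HybFilt} is unchanged. Therefore there exists a unique bounded steady-state solution $\hat{\Upsilon}' \in \mathbb{R}^{\nu \times \nu}$ of the hybrid Sylvester equation associated with the swapped interconnection of~\eqref{equ:swapROM} and~\eqref{equ:HybFilt}, namely
\begin{equation*}
\begin{array}{rcll}
\dot{\Upsilon}' &=& Q_c \Upsilon' - R_c H - \Upsilon'(Q_c - R_c H), & (t, j) \in \mathcal{S}_c, \\
(\Upsilon')^+(Q_d - R_d H) &=& Q_d \Upsilon' - R_d H, & (t, j) \in \mathcal{S}_d.
\end{array}
\end{equation*}

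Next, I would observe, by direct substitution, that $\Upsilon'(t,j) = I_\nu$ identically satisfies both equations: the flow equation collapses to $0 = Q_c - R_c H - (Q_c - R_c H) = 0$, and the jump equation to $Q_d - R_d H = Q_d - R_d H$. Since $I_\nu$ is trivially bounded, and uniqueness of the bounded solution is guaranteed by Lemma~\ref{lem:UpsilonExist}, we conclude $\hat{\Upsilon}' = I_\nu$. Invoking Theorem~\ref{thm:UpsilonSS} applied to~\eqref{equ:swapROM}, the steady-state response of the filter~\eqref{equ:HybFilt} driven by $\psi$ is governed by the hybrid dynamics with input matrices $\hat{\Upsilon}' \hat{\Upsilon} B_c = \hat{\Upsilon} B_c$ during flow and $(\hat{\Upsilon}')^+ \hat{\Upsilon}^+ B_d = \hat{\Upsilon}^+ B_d$ across jumps. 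These coincide exactly with the driving terms obtained in Theorem~\ref{thm:UpsilonSS} for the original system~\eqref{equ:HybFOM}, which proves moment matching and, since both interconnections start from zero and are driven by the same exponentially decaying $u_c$, $u_d$, proves that Problem~\ref{prob:swappedMM} is solved whenever $\nu < n$.

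There is no real obstacle in this argument; the whole proof is short. The only subtle point is to ensure that the invertibility hypothesis on $Q_d - R_d H$ is explicitly used to legitimately invoke Lemma~\ref{lem:UpsilonExist} for the reduced-order model (otherwise uniqueness of the bounded solution, and hence the identification $\hat{\Upsilon}' = I_\nu$, could fail). The free parameter $H$ thus plays a role analogous to $G_c, G_d$ in Theorem~\ref{thm:directROM} and can be later exploited, together with those, to achieve two-sided matching.
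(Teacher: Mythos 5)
Your proposal is correct and follows essentially the same route as the paper: apply Theorem~\ref{thm:UpsilonSS} to the reduced-order model~\eqref{equ:swapROM}, use the invertibility of $Q_d - R_d H$ to invoke Lemma~\ref{lem:UpsilonExist} and obtain a unique bounded solution of the induced hybrid Sylvester equation, identify that solution as $I_\nu$, and conclude that the filter's steady-state driving terms reduce to $(\hat{\Upsilon} B_c,\, \hat{\Upsilon}^+ B_d)$, matching those of the original system. The only cosmetic difference is that you spell out the substitution verifying $I_\nu$ is a solution, which the paper merely asserts.
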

\begin{proof}
    Under the assumptions of the theorem and by Theorem~\ref{thm:UpsilonSS}, the swapped interconnection of system~(\ref{equ:swapROM}) and the filter~\eqref{equ:HybFilt} has a steady-state response $\varpi_{ss}$ satisfying $\lim_{t+j \to +\infty} \|d_r(t, j) - \varpi_{ss}(t, j)\| = 0,$ when initialised with $x_0 = 0$ and $\varpi_0 = 0$, where $d_r$ is the bounded solution of
    \begin{equation}\label{equ:HybdROM}
        \left\{\,\begin{array}{lll}
        \dot{d}_r &= Q_c d_r + \hat{Y} \hat{\Upsilon} B_c u_c, &\quad \blue{(t, j) \in \mathcal{S}_c}, \\
        d_r^+ &= Q_d d_r + \hat{Y}^+ \hat{\Upsilon}^+ B_d u_d, &\quad \blue{(t, j) \in \mathcal{S}_d},
        \end{array}\right.
    \end{equation}
    with initial condition $d_r(0, 0) = 0$. Note that as $Q_d - R_d H$ is invertible, by applying Lemma~\ref{lem:UpsilonExist} to system~\eqref{equ:swapROM}, the matrix-valued function $Y(t, j) = \hat{Y}(t, j) \in \mathbb{R}^{\nu \times \nu}$ is the unique bounded solution of
    \begin{equation}
    \label{equ:HybY}
    \begin{array}{rcll}
    \dot{Y} &=& Q_c Y - R_c H - Y (Q_c - R_c H),\\
    Y^+(Q_d - R_d H) &=& Q_d Y - R_d H,
    \end{array}
    \end{equation}
    \red{where and in what follows, we omit the jump conditions for brevity.} We see that this unique bounded solution is $\hat{Y}(t, j) = I_\nu$ for all $(t, j) \in \mathcal{H}$. Then in~\eqref{equ:HybdROM} we have $\hat{Y} \hat{\Upsilon} B_c = \hat{\Upsilon} B_c$ and $\hat{Y}^+ \hat{\Upsilon}^+ B_d  = \hat{\Upsilon}^+ B_d$. Consequently, the dynamics of $d_r$ in~\eqref{equ:HybdROM} is identical to that of $d$ in~\eqref{equ:Hybd}, and the steady-state response of the filter recovers that of the filter in the swapped interconnection with system~\eqref{equ:HybFOM}, \textit{i.e.}, Problem~\ref{prob:swappedMM} is solved.

\end{proof}

Again, matching the steady-state in  Theorem~\ref{thm:swapROM} is equivalent to the matching of the moment of system~\eqref{equ:HybFOM} by system~\eqref{equ:HybdROM}, with its moment $(\hat{Y} \hat{\Upsilon} B_c, \hat{Y}^+ \hat{\Upsilon}^+ B_d) = (\hat{\Upsilon} B_c, \hat{\Upsilon}^+ B_d)$. Note also that the vectors $R_c$ and $R_d$ are free parameters and can be used to achieve additional properties for the family of reduced order models. For example, they can be designed to achieve two-sided moment matching, as discussed in the next section.

\section{Two-Sided Moment Matching}\label{sec:2SidedMM}
On the basis of the direct and swapped interconnection moment matching results given by Theorems~\ref{thm:directROM} and~\ref{thm:swapROM}, we present the solution to the two-sided moment matching problem formulated in Problem~\ref{prob:twoSidedMM} relying on the following assumption.

\begin{assumption}\label{asmp:invertUpsPi}
    The function
    $\hat{\Upsilon}(t, j) \hat{\Pi}(t, j)$ is invertible for all $(t, j) \in \mathcal{H}$, where $\hat{\Pi}$ and $\hat{\Upsilon}$ are the unique bounded steady-state solutions of~\eqref{equ:HybPi} and~\eqref{equ:HybUpsilon}, respectively.
\end{assumption}

The invertibility of (constant) $\Upsilon \Pi$ is a standard assumption for the non-hybrid two-sided moment matching, see, \textit{e.g.},~\cite{Ionescu2016Two-SidedSystems} and~\cite{ref:scarciotti2024interconnection}. Under Assumption~\ref{asmp:invertUpsPi}, the following result holds.

\begin{theorem}\label{thm:2SidedROM}
    Consider system~(\ref{equ:HybFOM}), the signal generator~\eqref{equ:HybGen}, and the filter~(\ref{equ:HybFilt}). Suppose Assumptions~\ref{asmp:jumpSpacing},~\ref{asmp:genSJ},~\ref{asmp:filtQR}, and~\ref{asmp:invertUpsPi} hold and the zero equilibrium of system~(\ref{equ:HybFOM}) is exponentially stable with $A_d$ invertible. A \textit{model of system~(\ref{equ:HybFOM}) matching the moment at both $(S, L_c, J, L_d, \mathcal{H})$ and $(Q_c, R_c, Q_d, R_d, \mathcal{H})$} can be designed as
    \begin{itemize}
        \item[(i)] system~\eqref{equ:directROM} with $G_c = (\hat{\Upsilon} \hat{\Pi})^{-1} \hat{\Upsilon} B_c$ and $G_d = (\hat{\Upsilon}^+ \hat{\Pi}^+)^{-1} \hat{\Upsilon}^+ B_d$, if the matrix $J - G_d(t_j, j-1) L_d$ is invertible for all $j \in \mathbb{Z}$ and the zero equilibrium point is exponentially stable; 
        \item[(ii)] system~\eqref{equ:swapROM} with $H = C\hat{\Pi} (\hat{\Upsilon} \hat{\Pi})^{-1}$, if the matrix $Q_d - R_d H(t_j, j-1)$ is invertible for all $j \in \mathbb{Z}$ and the zero equilibrium point is exponentially stable; 
    \end{itemize}
    where $\hat{\Pi}$ and $\hat{\Upsilon}$ are the unique bounded steady-state solutions of~\eqref{equ:HybPi} and~\eqref{equ:HybUpsilon}, respectively. Moreover, A model of system~(\ref{equ:HybFOM}) at both $(S, L_c, J, L_d, \mathcal{H})$ and $(Q_c, R_c, Q_d, R_d, \mathcal{H})$ is a \textit{reduced order model of system~(\ref{equ:HybFOM}) matching the moment at the two tuples} and solves Problem~\ref{prob:twoSidedMM} if $\nu < n$.
\end{theorem}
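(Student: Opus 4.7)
The plan is to exploit Theorems~\ref{thm:directROM} and~\ref{thm:swapROM}: case (i) already matches the direct moment at $(S, L_c, J, L_d, \mathcal{H})$ by Theorem~\ref{thm:directROM} for any admissible $(G_c, G_d)$, and case (ii) already matches the swapped moment at $(Q_c, R_c, Q_d, R_d, \mathcal{H})$ by Theorem~\ref{thm:swapROM} for any admissible $H$ (indeed, the invertibility of $Q_d - R_d H(t_j, j-1)$ stated in (ii) is exactly what is required to invoke that theorem). Two-sided matching therefore reduces to verifying that the specific gains in (i) additionally match the swapped moment, and that the specific $H$ in (ii) additionally matches the direct moment.

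For (i), let $\hat{\Upsilon}'$ denote the unique bounded steady-state solution of~\eqref{equ:HybUpsilon} written for the reduced model~\eqref{equ:directROM}; Lemma~\ref{lem:UpsilonExist} applies, with $J - G_d(t_j, j-1) L_d$ playing the role of the invertible $A_d$ and the exponential stability of the zero equilibrium of~\eqref{equ:directROM} replacing that of~\eqref{equ:HybFOM}. The key step is to show that $\Upsilon' := \hat{\Upsilon}\hat{\Pi}$ solves this equation. Differentiating along the flow and using~\eqref{equ:HybPi}--\eqref{equ:HybUpsilon}, the cross-term $\hat{\Upsilon} A_c \hat{\Pi}$ cancels and one obtains $\dot{(\hat{\Upsilon}\hat{\Pi})} = Q_c \hat{\Upsilon}\hat{\Pi} - R_c C\hat{\Pi} - \hat{\Upsilon}\hat{\Pi}\,S + \hat{\Upsilon} B_c L_c$, which coincides with $Q_c(\hat{\Upsilon}\hat{\Pi}) - R_c C\hat{\Pi} - (\hat{\Upsilon}\hat{\Pi})(S - G_c L_c)$ precisely when $\hat{\Upsilon}\hat{\Pi}\,G_c = \hat{\Upsilon} B_c$, forcing the announced $G_c$. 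An analogous computation across jumps, combining $\hat{\Pi}^+ J = A_d \hat{\Pi} + B_d L_d$ with $\hat{\Upsilon}^+ A_d = Q_d \hat{\Upsilon} - R_d C$, telescopes the $A_d$-products and forces the announced $G_d$. Since $\hat{\Upsilon}\hat{\Pi}$ inherits boundedness from its factors, uniqueness in Lemma~\ref{lem:UpsilonExist} yields $\hat{\Upsilon}' = \hat{\Upsilon}\hat{\Pi}$, so the swapped moment of~\eqref{equ:directROM} is $(\hat{\Upsilon}' G_c, (\hat{\Upsilon}')^+ G_d) = (\hat{\Upsilon} B_c, \hat{\Upsilon}^+ B_d)$, matching~\eqref{equ:HybFOM} by Definition~\ref{def:HybMomentSwapped}.

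Case (ii) is symmetric. Let $\hat{\Pi}'$ denote the unique bounded steady-state solution of~\eqref{equ:HybPi} written for~\eqref{equ:swapROM}, whose existence follows from Lemma~\ref{lem:PiExist} because $J$ is invertible by Assumption~\ref{asmp:genSJ} and~\eqref{equ:swapROM} is exponentially stable by hypothesis. A direct verification analogous to the one above shows that $\Pi' := \hat{\Upsilon}\hat{\Pi}$ satisfies the associated equation if and only if $C\hat{\Pi} = H\,\hat{\Upsilon}\hat{\Pi}$, which pins down $H = C\hat{\Pi}(\hat{\Upsilon}\hat{\Pi})^{-1}$. Uniqueness then gives $\hat{\Pi}' = \hat{\Upsilon}\hat{\Pi}$, so the direct moment of~\eqref{equ:swapROM} is $H\hat{\Pi}' = C\hat{\Pi}$, matching~\eqref{equ:HybFOM} by Definition~\ref{def:HybMomentDirect}. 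Combining the two matchings and observing that $\nu < n$ settles Problem~\ref{prob:twoSidedMM}. The main obstacle I expect is not the algebraic cancellation, which mirrors the constant-matrix argument of~\cite{Ionescu2016Two-SidedSystems}, but rather the careful reapplication of Lemmas~\ref{lem:PiExist} and~\ref{lem:UpsilonExist} to the reduced models: the invertibility hypotheses on $J - G_d(t_j, j-1) L_d$ in (i) and on $Q_d - R_d H(t_j, j-1)$ in (ii) are exactly what is needed so that the jump part of each hybrid Sylvester equation remains well posed at every jump instant, while the stated exponential-stability conditions reproduce the role played by that of~\eqref{equ:HybFOM} in the original lemmas, allowing the uniqueness and attractivity arguments to be transplanted verbatim.
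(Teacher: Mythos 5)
Your proposal is correct and follows essentially the same route as the paper: both reduce the two-sided claim to the one-sided results of Theorems~\ref{thm:directROM} and~\ref{thm:swapROM}, compute the flow and jump dynamics of $\Phi=\hat{\Upsilon}\hat{\Pi}$ from~\eqref{equ:HybPi}--\eqref{equ:HybUpsilon}, and identify $\Phi$ as the unique bounded solution of the auxiliary hybrid Sylvester equation for the reduced model so that the chosen $G_c$, $G_d$ (resp.\ $H$) make the second moment coincide. Your additional remarks on re-invoking Lemmas~\ref{lem:PiExist} and~\ref{lem:UpsilonExist} for the reduced models, and on the role of the invertibility of $J-G_d(t_j,j-1)L_d$ and $Q_d-R_dH(t_j,j-1)$, are consistent with (and slightly more explicit than) the paper's argument.
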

\begin{proof}
    Since both designs rely on the term $\hat{\Upsilon} \hat{\Pi}$, we first study the dynamics of $\Phi := \hat{\Upsilon} \hat{\Pi}$. \red{Whenever $(t, j) \in \mathcal{S}_c$}, we have
    \begin{equation}\label{equ:PhiFlow}
    \begin{aligned}
        \dot{\Phi} &= (Q_c\hat{\Upsilon} - R_c C -\hat{\Upsilon} A_c) \hat{\Pi} + \hat{\Upsilon} (A_c\hat{\Pi} - \hat{\Pi} S + B_cL_c) \\
        &= Q_c \hat{\Upsilon} \hat{\Pi} - R_c C\hat{\Pi} -\hat{\Upsilon} \hat{\Pi} S + \hat{\Upsilon} B_c L_c \\
        &= Q_c \Phi - R_c C\hat{\Pi} - \Phi S + \hat{\Upsilon} B_c L_c.
    \end{aligned}
    \end{equation}
    \red{Whenever $(t, j) \in \mathcal{S}_d$}, by left-multiplying $\hat{\Upsilon}$ to the second equation of~\eqref{equ:HybPi}, and right-multiplying $\hat{\Pi}$ to the second equation of~\eqref{equ:HybUpsilon}, we obtain
    \begin{equation*}
    \begin{array}{rcll}
    \hat{\Upsilon}^+ \hat{\Pi}^+J &=& \hat{\Upsilon}^+A_d\hat{\Pi} + \hat{\Upsilon}^+B_dL_d, \\
    \hat{\Upsilon}^+A_d \hat{\Pi} &=& Q_d\hat{\Upsilon}\hat{\Pi} - R_d C\hat{\Pi}.
    \end{array}
    \end{equation*}
    Adding the two equations gives
    \begin{equation}\label{equ:PhiJump}
        \Phi^+J = Q_d \Phi + \hat{\Upsilon}^+ B_d L_d - R_d C \hat{\Pi}.
    \end{equation}
    Consider now case (i). By Theorem~\ref{thm:directROM}, system~\eqref{equ:directROM}, if $\nu < n$, is a reduced-order model of system~\eqref{equ:HybFOM} matching the moment at $(S, L_c, J, L_d, \mathcal{H})$. Then, to prove that it is also a reduced-order mode matching the moment at $(Q_c, R_c, Q_d, R_d, \mathcal{H})$, by Theorem~\ref{thm:swapROM} with $J - G_d(t_j, j-1) L_d$ invertible, we need to show that the unique bounded solution $Y(t, j) = \hat{Y}(t, j) \in \mathbb{R}^{\nu \times \nu}$ of
    \begin{equation}
    \label{equ:HybY2Side}
    \begin{array}{rcll}
    \dot{Y} &=& Q_c Y - R_c C \hat{\Pi} -Y (S - G_c L_c),\\
    Y^+(J - G_d L_d) &=& Q_d Y - R_d C \hat{\Pi},
    \end{array}
    \end{equation}
    also satisfies $\hat{Y} G_c = \hat{\Upsilon} B_c$ and $\hat{Y}^+ G_d = \hat{\Upsilon}^+ B_d$. By these two substitutions,~\eqref{equ:HybY2Side} can be simplified to
    \begin{equation*}
    \begin{array}{rcll}
    \dot{Y} &=& Q_c Y - R_c C \hat{\Pi} -YS + \Upsilon B_c L_c, \\
    Y^+J &=& Q_d Y + \Upsilon^+ B_d L_d - R_d C \hat{\Pi}.
    \end{array}
    \end{equation*}
    By comparing the last two equations with~\eqref{equ:PhiFlow} and~\eqref{equ:PhiJump}, the unique bounded solution of~\eqref{equ:HybY2Side} is $\hat{Y} = \Phi = \hat{\Upsilon} \hat{\Pi}$. Then, as $G_c = (\hat{\Upsilon} \hat{\Pi})^{-1} \hat{\Upsilon} B_c$ and $G_d = (\hat{\Upsilon}^+ \hat{\Pi}^+)^{-1} \hat{\Upsilon}^+ B_d$, we can see that $Y = \hat{\Upsilon} \hat{\Pi}$ also satisfies $Y G_c = \hat{\Upsilon} B_c$ and $Y^+ G_d = \hat{\hat{\Upsilon}}^+ B_d$. Case (ii) can be proved in a similar manner. By Theorems~\ref{thm:directROM} and~\ref{thm:swapROM}, system~\eqref{equ:HybFOM} with $H = C\hat{\Pi} (\hat{\Upsilon} \hat{\Pi})^{-1}$  and $Q_d - R_d H(t_j, j-1)$ invertible  solves Problem~\ref{prob:twoSidedMM} if the unique bounded steady-state solution $P(t, j) = \hat{P}(t, j) \in \mathbb{R}^{\nu \times \nu}$ of
    \begin{equation}
    \label{equ:HybP2Sided}
    \begin{array}{rcl}
    \dot P &=& (Q_c - R_c H) P - P S + \hat{\Upsilon} B_c L_c,\\
    P^+J &=& (Q_d - R_d H) P+ \hat{\Upsilon}^+ B_d L_d,
    \end{array}
    \end{equation}
    also satisfies $H \hat{P} = C \hat{\Pi}$. By substituting $H P$ in~\eqref{equ:HybP2Sided} with $C \hat{\Pi}$, we obtain
    \begin{equation*}
    \begin{array}{rcl}
    \dot P &=& Q_c P- R_c C \hat{\Pi} - P S + \hat{\Upsilon} B_c L_c,\\
    P^+J &=& Q_d P - R_d C \hat{\Pi} + \hat{\Upsilon}^+ B_d L_d,
    \end{array}
    \end{equation*}
    which, again, yields $\hat{P} = \hat{\Upsilon} \hat{\Pi}$. Thus, since $H = C\hat{\Pi} (\hat{\Upsilon} \hat{\Pi})^{-1}$, the equation $H\hat{P} = C\hat{\Pi}$ is also satisfied. This concludes the proof.
\end{proof}

\begin{remark}
    The invertibility of $J - (\hat{\Upsilon}_j \hat{\Pi}_j)^{-1} \hat{\Upsilon}_j B_d L_d$ and $Q_d - R_d C\hat{\Pi}(t_j, j-1) (\hat{\Upsilon}(t, j-1) \hat{\Pi}(t, j-1))^{-1}$, as indicated by Lemma~\ref{lem:UpsilonExist} and Theorem~\ref{thm:swapROM}, is needed to guarantee uniqueness of the solution of~\eqref{equ:HybY}. This requirement is not restrictive, as one can always adjust the values of $J$, $L_d$, $Q_d$, and $R_d$ to satisfy this invertibility condition. 
\end{remark}

\begin{remark}
    The two-sided moment matching result in Theorem~\ref{thm:2SidedROM} is a generalisation of the continuous-time results discussed at the end of Section~\ref{subsec:MMreview}. Following the direct and swapped cases in Theorems~\ref{thm:directROM} and~\ref{thm:swapROM}, the two-sided moment matching result in Theorem~\ref{thm:2SidedROM} also requires exponential stability of the resulting reduced-order model, which can be checked \textit{a posteriori} and depends on the dynamics of $\hat{\Pi}$ and $\hat{\Upsilon}$. In fact, unlike the one-sided moment matching results in Theorems~\ref{thm:directROM} and~\ref{thm:swapROM}, the proposed reduced-order model in Theorem~\ref{thm:2SidedROM} has no design freedom to guarantee its stability, as the freedom is sacrificed to achieve the two-sided matching. This is also the case for linear (continuous-time) systems. 
    %Note that in the continuous-time case discussed by Remark~\ref{rmk:momentFreq}, the moment is originally defined for matching the frequency response of the given large-scale system. The 
\end{remark}

As indicated by the proof of Theorem~\ref{thm:2SidedROM}, the system solving Problem~\ref{prob:twoSidedMM} matches the moment of system~\eqref{equ:HybFOM} for both direct and swapped interconnections, see Definitions~\ref{def:HybMomentDirect} and~\ref{def:HybMomentSwapped}. Note that all the results presented in this paper admit a natural extension to the time-varying case, \textit{i.e.}, the model reduction problems can be addressed in the same manner if systems~\eqref{equ:HybFOM},~\eqref{equ:HybGen}, and~\eqref{equ:HybFilt} are time-varying.

\section{The Special Case of Periodic Jumps}\label{sec:Periodic}

In this section, we specialise the proposed results to the case in which the hybrid time domain is periodic. In particular, we simplify the assumptions used in Lemmas~\ref{lem:PiExist} and~\ref{lem:UpsilonExist}, and reduce the complexity of computing $\hat{\Pi}$ and $\hat{\Upsilon}$ that solve~\eqref{equ:HybPi} and~\eqref{equ:HybUpsilon}, which are important for constructing the reduced-order model.

When the jumps of the hybrid system are periodic, \textit{i.e.}, $t_{j+1} - t_j = T > 0$, for all $j \in \mathbb{Z}$, the hybrid time domain in~\eqref{equ:hybTimeDomain} can be rewritten as
\begin{equation}\label{equ:HybTimeDomainPeriod}
   \mathcal{H}_{p}:=\{(t, j):t\in\left[jT,(j+1)T\right]\!,j\in\mathbb{Z}\}.
\end{equation}
Then the solution $\hat{\Pi}$ discussed in Lemma~\ref{lem:PiExist} boils down to a periodic solution as stated in what follows.

\begin{corollary}
\label{corol:PiExist}
Consider systems~(\ref{equ:HybFOM}) and~(\ref{equ:HybGen}) with the hybrid time domain~\eqref{equ:HybTimeDomainPeriod}. Suppose $\sigma\left(A_d e^{A_cT}\right)\subset\mathbb{D}_{<1}$ and system~\eqref{equ:HybGen} is neutrally stable, \textit{i.e.}, $\sigma(Je^{ST})\subset \mathbb{D}_{1}$. Then~\eqref{equ:HybPi} admits a unique $T$-periodic solution that coincides with its unique bounded steady-state solution $\hat{\Pi}$.
% and such that for any $\Pi$ solving~\eqref{equ:HybPi} and starting with any initial condition $\Pi_0 \in \mathbb{R}^{n \times \nu}$, $\lim_{t+j \to +\infty} \Pi(t, j) - \hat{\Pi}(t, j) = 0$, proving that this periodic solution is the  
\end{corollary}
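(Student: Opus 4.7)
The plan is to exploit the periodicity to reduce the characterisation of $\hat{\Pi}$ to a standard discrete-time Sylvester equation, and then invoke Lemma~\ref{lem:PiExist} to identify its solution with $\hat{\Pi}$.

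First, I would observe that with $t_{j+1} - t_j = T$ for every $j$, Assumption~\ref{asmp:jumpSpacing} is trivially satisfied with $\underline{\delta} = \bar{\delta} = T$. The spectral hypothesis $\sigma(A_de^{A_cT}) \subset \mathbb{D}_{<1}$ is precisely the discrete-time stability condition for the monodromy matrix of~\eqref{equ:HybFOM}, and is therefore equivalent to exponential stability of the zero equilibrium of~\eqref{equ:HybFOM} along~\eqref{equ:HybTimeDomainPeriod}; similarly, $\sigma(Je^{ST}) \subset \mathbb{D}_1$ together with invertibility of $J$ ensures that the trajectories of~\eqref{equ:HybGen} are bounded forward and backward in time. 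Hence the hypotheses of Lemma~\ref{lem:PiExist} are in force, so there exists a unique bounded steady-state solution $\hat{\Pi}$.

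Next, I would specialise the boundary recursion~\eqref{equ:PiJump}, which in the periodic setting collapses to the time-invariant discrete-time equation
\[ \Pi_{j+1}\bigl(Je^{ST}\bigr) = \bigl(A_de^{A_cT}\bigr)\Pi_j + K, \]
with
\[ K := B_dL_de^{ST} + A_d\int_0^T e^{A_c(T-\vartheta)}B_cL_c e^{S\vartheta}\, d\vartheta \]
independent of $j$. In view of~\eqref{equ:PiExpInterval}, a $T$-periodic solution $\Pi(t+T,j+1) = \Pi(t,j)$ of~\eqref{equ:HybPi} corresponds one-to-one to a constant $\bar{\Pi}$ satisfying the Sylvester equation
\[ \bar{\Pi}\bigl(Je^{ST}\bigr) - \bigl(A_de^{A_cT}\bigr)\bar{\Pi} = K. \]
Since $\sigma(A_de^{A_cT}) \subset \mathbb{D}_{<1}$ and $\sigma(Je^{ST}) \subset \mathbb{D}_{1}$ are disjoint, this Sylvester equation admits a unique solution $\bar{\Pi}$, and the corresponding $\Pi(t,j)$ constructed via~\eqref{equ:PiExpInterval} is bounded and $T$-periodic. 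By the uniqueness part of Lemma~\ref{lem:PiExist}, it must therefore coincide with $\hat{\Pi}$.

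The main obstacle, such as it is, lies in verifying that a constant sequence $\Pi_j \equiv \bar{\Pi}$ actually yields a $T$-periodic function $\Pi(t,j)$ in the hybrid sense; this is however immediate from~\eqref{equ:PiExpInterval}, whose right-hand side depends on $t$ and $t_j$ only through the difference $t - t_j$ when the flow intervals all have equal length $T$. Uniqueness of the $T$-periodic solution of~\eqref{equ:HybPi} then follows at once from uniqueness of the solution of the above Sylvester equation.
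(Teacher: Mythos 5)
Your proposal is correct and follows essentially the same route as the paper: reduce periodicity of $\Pi$ to the fixed-point condition $\Pi_{j+1}=\Pi_j$ in the jump recursion~\eqref{equ:PiJump}, obtain the time-invariant Sylvester equation~\eqref{equ:PiPeriodic}, and use the spectral disjointness of $A_de^{A_cT}$ and $Je^{ST}$ to get a unique solution, hence a unique $T$-periodic solution of~\eqref{equ:HybPi}. The only real divergence is in how you identify this periodic solution with $\hat{\Pi}$. The paper does this constructively: it specialises the series $K_{\hat{\Pi},j}$ in~\eqref{equ:PikKpi} to the periodic case, where it becomes $\sum_{i=-\infty}^{j}L_\Pi^{j-i}M_\Pi R_\Pi^{i-j}$, and shows by a telescoping computation that this series solves the Sylvester equation, so $\bar{\Pi}_j=K_{\hat{\Pi},j}$. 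You instead invoke ``the uniqueness part of Lemma~\ref{lem:PiExist}.'' Be careful here: as stated, Lemma~\ref{lem:PiExist} asserts uniqueness of the bounded solution \emph{satisfying the boundary condition}~\eqref{equ:PiBodCond}, not uniqueness among all bounded solutions of~\eqref{equ:HybPi}, so your appeal is slightly imprecise as written. It is repairable with one line: relation~\eqref{equ:PikExp} shows that any solution defined and bounded on all of $\mathcal{H}$ must have $\Pi_j=K_{\hat{\Pi},j}$ (the term carrying the arbitrary condition at $j\to-\infty$ vanishes), and your periodic solution is such a solution; alternatively, verify directly, as the paper does, that $K_{\hat{\Pi},j}$ solves~\eqref{equ:PiPeriodic}. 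Either completion is fine, and your version avoids the explicit series manipulation at the cost of leaning a bit harder on the content of the lemma's proof rather than its statement.
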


\begin{proof}
We first show the existence of a $T$-periodic solution. Since the hybrid time domain is $T$-periodic and the solution $\Pi$ within each interval $[jT, (j+1)T]$ can be parameterised by $\Pi_j$, as shown in~\eqref{equ:PiExpInterval}, $\Pi$ is periodic if and only if $\Pi_{j+1}=\Pi_j$ for all $j \in \mathbb{Z}$. Now rewriting equation~(\ref{equ:PiJump}) with the periodic hybrid time domain~\eqref{equ:HybTimeDomainPeriod} gives
\begin{equation}
\label{equ:PiPeriodic}
\begin{array}{l}
\!\!\Pi_{j}Je^{ST}\!=A_de^{A_c T}\Pi_{j}+B_dL_de^{ST}\\
\qquad\qquad\ +\, A_d\int_{0}^{T} e^{A_c(T-\vartheta)}B_cL_c e^{S\vartheta}d\vartheta,
\end{array}
\end{equation}
where $\Pi_{j+1}$ has been replaced by $\Pi_{j}$. As the sets of eigenvalues of $Je^{ST}$ and $A_de^{A_c T}$ are disjoint by hypothesis,~(\ref{equ:PiPeriodic}) is a Sylvester equation that has a unique solution, and therefore~\eqref{equ:HybPi} has a unique $T$-periodic solution, here denoted by $\bar{\Pi}$. To show that this $T$-periodic solution coincides with its unique steady-state solution $\hat{\Pi}$, by Lemma~\ref{lem:PiExist}, it suffices to show that $\bar{\Pi}_j = K_{\hat{\Pi},j}$ in~\eqref{equ:PiBodCond} for all $j \in \mathbb{Z}$, or equivalently, $K_{\hat{\Pi},j}$ is the unique solution to~\eqref{equ:PiPeriodic}. In fact, by denoting 
$L_\Pi = A_de^{A_cT}$, $R_\Pi = J e^{ST}$, and $$M_\Pi = B_d L_d J^{-1} + A_d (\int_{0}^{T} \!e^{A_c(T-\vartheta)}B_cL_c e^{S\vartheta}d\vartheta) J^{-1},$$ in the periodic case, $K_{\hat{\Pi},j}$ in~\eqref{equ:PiBodCond} boils down to
\begin{equation*}
\begin{array}{l}
    K_{\hat{\Pi},j} = \sum_{i=-\infty}^{j} L_\Pi^{j-i} M_\Pi R_\Pi^{i-j}.
\end{array}  
\end{equation*}
This result indicates that
\begin{equation*}
\begin{array}{l}
    K_{\hat{\Pi},j} - L_\Pi K_{\hat{\Pi},j} R_\Pi^{-1} \\
    =\! \sum_{i=-\infty}^{j}\! L_\Pi^{j-i}\! M_\Pi R_\Pi^{i-j} \!\!-\! \sum_{i=-\infty}^{j} \! L_\Pi^{j-i+1}\! M_\Pi R_\Pi^{i-j-1} \!\!=\! M_\Pi,
\end{array}  
\end{equation*}
implying that $K_{\hat{\Pi},j}$ is the unique solution to~\eqref{equ:PiPeriodic} and therefore the periodic solution satisfies $\bar{\Pi}_j = K_{\hat{\Pi},j}$. This concludes the proof.
% where 
% \begin{equation*}
% \begin{array}{l}
% \!\!\!K_{\hat{\Pi},k} = \sum_{j=-\infty}^{k}\bigg(\!\!\left(A_de^{A_cT}\right)^{k-j} \! \Big(\! B_d L_d J^{-1} \\
% \qquad \quad+ A_d \big(\!\int_{0}^{T} \!\!\!e^{A_c(T-\vartheta)}B_cL_c e^{S\vartheta}d\vartheta \big) J^{-1}\!\Big)\! \left(J e^{ST}\right)^{j-k}\!\bigg) .
% \end{array}
% \end{equation*}
\end{proof}

Similarly, the solution $\Upsilon$ discussed in Lemma~\ref{lem:UpsilonExist} can also be specialised to the periodic case.
%Once that we have the description of the mapping $\Pi_{\infty}(t, j)$, the construction of a reduced order model of (\ref{MTNS-eq2}) can be easily given.
\begin{corollary}\label{corol:UpsilonExist}
Consider systems~(\ref{equ:HybFOM}) and~(\ref{equ:HybFilt}) with the hybrid time domain~\eqref{equ:HybTimeDomainPeriod}. Suppose $\sigma\left(A_d e^{A_cT}\right)\subset\mathbb{D}_{<1}$ and system~\eqref{equ:HybFilt} is neutrally stable, \textit{i.e.}, $\sigma(Q_de^{Q_cT})\subset \mathbb{D}_{1}$. Then~\eqref{equ:HybUpsilon} admits a unique $T$-periodic solution that coincides with its unique bounded steady-state solution $\hat{\Upsilon}$.
\end{corollary}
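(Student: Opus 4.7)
The plan is to mirror the proof of Corollary~\ref{corol:PiExist}, exploiting the fact that for periodic jumps the iteration~\eqref{equ:UpsilonIterat} collapses to a single (discrete algebraic) Sylvester equation. Since the solution of~\eqref{equ:HybUpsilon} over each interval $[jT,(j+1)T]$ is fully determined by $\Upsilon_{j+1}$ via~\eqref{equ:UpsilonExpInterval}, the solution is $T$-periodic if and only if $\Upsilon_{j+1}=\Upsilon_j$ for every $j\in\mathbb{Z}$. Imposing this equality in~\eqref{equ:UpsilonIterat} with $t_{j+1}-t_j=T$ gives, after left-multiplication by $Q_d e^{Q_cT}$, the equation
\begin{equation*}
Q_d e^{Q_cT}\,\bar{\Upsilon}-\bar{\Upsilon}\,A_d e^{A_cT} = -R_d C e^{A_cT}+Q_d e^{Q_cT}\!\!\int_0^{T}\!\!e^{-Q_c\vartheta}R_c C e^{A_c\vartheta}d\vartheta,
\end{equation*}
in the unknown periodic value $\bar{\Upsilon}$.

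The first key step is to invoke the standard Sylvester existence-uniqueness result: since the hypotheses $\sigma(A_d e^{A_cT})\subset\mathbb{D}_{<1}$ and $\sigma(Q_d e^{Q_cT})\subset\mathbb{D}_{1}$ imply that these two spectra are disjoint, the equation above admits a unique solution $\bar{\Upsilon}$. Thus~\eqref{equ:HybUpsilon} has a unique $T$-periodic solution.

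The second key step is to identify this periodic solution with the bounded steady-state solution $\hat{\Upsilon}$ characterized by~\eqref{equ:UpsilonBodCond}--\eqref{equ:UpsilonkKupsilon}, whose existence and uniqueness are guaranteed by Lemma~\ref{lem:UpsilonExist} (applicable because $\sigma(A_d e^{A_cT})\subset\mathbb{D}_{<1}$ gives exponential stability of the zero equilibrium of~\eqref{equ:HybFOM}, and neutral stability of~\eqref{equ:HybFilt} ensures Assumption~\ref{asmp:filtQR}, provided reachability and invertibility of $Q_d$ are in force). Following the same telescoping argument used in Corollary~\ref{corol:PiExist}, introduce
\begin{equation*}
L_\Upsilon = e^{-Q_cT}Q_d^{-1},\quad R_\Upsilon = A_d e^{A_cT},
\end{equation*}
\begin{equation*}
M_\Upsilon = \int_0^{T}\!\!e^{-Q_c\vartheta}R_c C e^{A_c\vartheta}d\vartheta - e^{-Q_cT}Q_d^{-1}R_d C e^{A_cT},
\end{equation*}
so that in the periodic case~\eqref{equ:UpsilonkKupsilon} becomes $K_{\hat{\Upsilon},j}=\sum_{i=j}^{+\infty}L_\Upsilon^{i-j}M_\Upsilon R_\Upsilon^{i-j}$. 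A direct computation then yields
\begin{equation*}
K_{\hat{\Upsilon},j}-L_\Upsilon K_{\hat{\Upsilon},j}R_\Upsilon = M_\Upsilon,
\end{equation*}
which, upon left-multiplication by $Q_d e^{Q_cT}$, is precisely the Sylvester equation for $\bar{\Upsilon}$ above. By uniqueness, $K_{\hat{\Upsilon},j}=\bar{\Upsilon}$ for all $j$, showing that the unique $T$-periodic solution coincides with $\hat{\Upsilon}$.

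The main obstacle I anticipate is technical rather than conceptual: the infinite series defining $K_{\hat{\Upsilon},j}$ must be shown to converge absolutely so that the telescoping identity is valid. This follows from the spectral radius estimates $\rho(L_\Upsilon)<1$ (from $\sigma(Q_d e^{Q_cT})\subset\mathbb{D}_{1}$ and invertibility of $Q_d$, combined with an appropriate norm) and $\rho(R_\Upsilon)<1$ (from $\sigma(A_d e^{A_cT})\subset\mathbb{D}_{<1}$), which guarantee geometric decay of the summands and hence convergence, completing the identification.
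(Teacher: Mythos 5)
Your proposal follows essentially the same route as the paper: reduce periodicity to the fixed-point condition $\Upsilon_{j+1}=\Upsilon_j$ in~\eqref{equ:UpsilonIterat}, obtain the discrete Sylvester equation~\eqref{equ:UpsilonPeriodic}, invoke disjointness of $\sigma(Q_de^{Q_cT})$ and $\sigma(A_de^{A_cT})$ for unique solvability, and verify via the telescoping identity that the periodic value equals $K_{\hat{\Upsilon},j}$, hence equals $\hat{\Upsilon}_j$ by Lemma~\ref{lem:UpsilonExist}. One small correction to your final convergence remark: since $L_\Upsilon=(Q_de^{Q_cT})^{-1}$ and $\sigma(Q_de^{Q_cT})\subset\mathbb{D}_1$, you have $\rho(L_\Upsilon)=1$, not $\rho(L_\Upsilon)<1$; the series still converges geometrically because neutral stability makes $\|L_\Upsilon^{k}\|$ uniformly bounded while $\|R_\Upsilon^{k}\|$ decays geometrically from $\sigma(A_de^{A_cT})\subset\mathbb{D}_{<1}$, which is also how the paper's Lemma~\ref{lem:UpsilonExist} establishes finiteness of $K_{\hat{\Upsilon},j}$.
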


\begin{proof}
The proof is similar to that of Corollary~\ref{corol:PiExist}. Since the hybrid time domain is $T$-periodic and the solution $\Upsilon$ within each interval $[jT, (j+1)T]$ can be parameterised by $\Upsilon_{j+1}$, as shown in~\eqref{equ:UpsilonExpInterval}, $\Upsilon$ is periodic if and only if $\Upsilon_{j+1}=\Upsilon_j$ for all $j \in \mathbb{Z}$. Rewriting equation~(\ref{equ:UpsilonIterat}) under the periodic hybrid time domain~\eqref{equ:HybTimeDomainPeriod} gives
\begin{equation}
\label{equ:UpsilonPeriodic}
\begin{array}{l}
    Q_d e^{Q_c T} \Upsilon_{j} = \Upsilon_{j}A_de^{A_c T} - R_d C e^{A_c T} \\
    \qquad\qquad\qquad\quad +\, Q_d\int_{0}^{T} e^{Q_c(T-\vartheta)}R_c Ce^{A_c\vartheta}d\vartheta,
    \end{array}
\end{equation}
where $\Upsilon_{j+1}$ has been replaced by $\Upsilon_{j}$. As the sets of eigenvalues of $Je^{ST}$ and $A_de^{A_c T}$ are disjoint by hypothesis, the Sylvester equation~(\ref{equ:UpsilonPeriodic}) has a unique solution, implying that~\eqref{equ:HybUpsilon} has a unique $T$-periodic (bounded) solution here denoted by $\bar{\Upsilon}$. Simiarly to the proof of Corollary~\ref{corol:PiExist}, we can prove that $K_{\hat{\Upsilon}\!, j}$ in \eqref{equ:UpsilonkKupsilon}, in the periodic case, is the unique solution to~\eqref{equ:UpsilonPeriodic}, and therefore $\bar{\Upsilon}_j = K_{\hat{\Upsilon}\!, j}$ for all $j \in \mathbb{Z}$. Then Lemma~\ref{lem:UpsilonExist} yields that $\bar{\Upsilon}_j$ coincides with the unique bounded steady-state solution $\hat{\Upsilon}$ of~\eqref{equ:HybUpsilon}.
\end{proof}

In the periodic case, Corollaries~\ref{corol:PiExist} and~\ref{corol:UpsilonExist} show that the boundary conditions $K_{\hat{\Pi}\!, j}$ and  $K_{\hat{\Upsilon}\!, j}$ that characterise the unique bounded steady-state solutions $\hat{\Pi}$ and $\hat{\Upsilon}$ of~\eqref{equ:HybPi} and~\eqref{equ:HybUpsilon} can be simply computed by solving the Sylvester equations~\eqref{equ:PiPeriodic} and~\eqref{equ:UpsilonPeriodic}, respectively. \blue{Consequently, computations of these solutions do not rely on attractivity-based approximation methods discussed in Remarks~\ref{rmk:PiCompute} and~\ref{rmk:UpsilonCompute}.}

% In other words, whenever $t \in [kT,(k+1)T]$, the bounded periodic solutions $\hat{\Pi}$ and $\hat{\Upsilon}$ can be computed by solving the first differential equations of~\eqref{equ:HybPi} and~\eqref{equ:HybUpsilon} with the boundary conditions $\hat{\Pi}_k$ solving~\eqref{equ:PiPeriodic} and $\hat{\Upsilon}_{k+1}$ solving~\eqref{equ:UpsilonPeriodic}, respectively.
% This result is particularly important for the swapped interconnection, as the unique bounded solution $\hat{\Upsilon}$ of~\eqref{equ:HybUpsilon}, as indicated by Lemma~\ref{lem:UpsilonExist}, is not forward attractive and its computation, in general, relies on solving or approximating the boundary condition in~\eqref{equ:UpsilonkKupsilon}, which is much more complicated than solving the Sylvester equation~\eqref{equ:UpsilonPeriodic}.

\begin{remark}
Usually, when hybrid systems with periodic jumps are considered, the so-called conditions of non-resonance, \textit{i.e.}, $\sigma\left(J e^{ST}\right) \cap \sigma\left(A_d e^{A_cT}\right) = \emptyset$ and $\sigma\left(Q_d e^{Q_c T}\right) \cap \sigma\left(A_d e^{A_cT}\right) = \emptyset$, play a role in the definition of the steady-state. However, since the zero equilibria of signal generator~\eqref{equ:HybGen} and the filter~\eqref{equ:HybFilt}, in Corollaries~\ref{corol:PiExist} and~\ref{corol:UpsilonExist}, are assumed to be neutrally stable and the hybrid system~\eqref{equ:HybFOM} has an exponentially stable equilibrium at zero, the condition of non-resonance is automatically satisfied. In fact, we exclude signal generators and filters which produce decaying signals because in the problem of model reduction based on the steady-state notion of moment, a generator or filter with decaying dynamics does not contribute any steady-state information, see \cite{Ast:10}. 
\end{remark}

\section{A numerical example}
\label{sec:example}
In this section we present a numerical example to illustrate the results of the paper. \blue{All simulations were conducted in MATLAB using the variable-step solver \textit{ode45}.} In this example, we consider the two-sided moment matching problem for \blue{a MIMO} linear hybrid system~\eqref{equ:HybFOM} with system parameters randomly generated as
\begin{equation*}
\begin{aligned}
A_c&=\left[\begin{array}{rrrrrr}
   -1.280  &  0.953&    0.080&    0.646 &   1.207&    1.152\\
    0.124  & -1.635&   -0.007&    0.369 &   1.077&    1.529\\
   -0.037 &   0.179&   -0.891&   -0.046&    0.243&    0.758\\
   -1.146&   -0.634&   -0.591&   -0.490&    1.255&    0.706\\
   -0.796&   -1.232&   -0.533&   -1.315&   -0.980&   -0.754\\
   -1.467&   -0.889&   -0.172&   -1.359&    0.695&   -0.953
\end{array}\right],\\
A_d&=\left[\begin{array}{rrrrrr}
    0.141&   -0.488&   -0.317&   -0.137&    0.372&    0.100\\
   -0.242&    0.427&    0.106&    0.321&    0.492&  -0.256\\
   -0.515&    0.187&   -0.107&   -0.294&    0.254&    0.501\\
    0.289&    0.324&   -0.437&    0.004&    0.073&   0.222\\
    0.209&    0.262&    0.492&  -0.322&    0.336&   0.019\\
    0.234&   -0.182&    0.396&    0.350&    0.095&   0.724
\end{array}\right],\\
B_c&=\left[\begin{array}{rrrrrr}
    1.087 & 0.958 & 0.000 & -1.740 & -0.191 & 0.761 \\
    0.275 & -0.122 & -1.952 & -0.652 & -1.351 & 1.177
\end{array}\right]^{\top},\\
B_d&=\left[\begin{array}{rrrrrr}
    1.523 & -0.261 & 0.000 & 0.670 & -0.618 & -1.014 \\
    -0.755 & 0.114 & -1.337 & 0.408 & -0.948 & 0.616
\end{array}\right]^{\top},\\
C&=\left[\begin{array}{rrrrrr}
0.672 & 1.604 & 1.222 & 0.454 & -0.552 & -0.283 \\
0.757 & 0.993 & -0.198 & -1.665 & -1.084 & 1.653
\end{array}\right].
\end{aligned}
\end{equation*}
The signal generator~\eqref{equ:HybGen} and the filter~\eqref{equ:HybFilt} considered in this example have the parameter values as
\begin{equation}\label{equ:GenFilterParam}
    \begin{aligned}
        S &= 2, \quad\ \; J = 0.01, \quad L_c = L_d = \left[\begin{array}{cc}
        0.2  &  0.15 \\ 
        \end{array}\right], \\
        Q_c &= 1, \quad Q_d = 0.1, \quad \; \, R_c = R_d = \left[\begin{array}{cc}
        0.2  &  0.16 \\ 
        \end{array}\right]. 
    \end{aligned}
\end{equation}
Herein, we consider the state-dependent case. More specifically, all hybrid systems flow whenever $\omega(t, j) \in \mathcal{C}(t)$, where
\begin{equation}\label{equ:ct}
\begin{array}{l}
\!\!\!\mathcal{C}(t)\!=\! \left\{ s\in \mathbb{R}_{>0} \!:\! s\!<\!\sin\!\left(\!\frac{\sqrt{3}}{2} t\!\right) \!+\! 0.8 \nabla\!\left(\!\sqrt{5}(t-1)\right) \!+\! 3\right\}\!,
\end{array}   
\end{equation}
with $\nabla(t) := \frac{2}{\pi}\int_0^t \operatorname{sign}(\sin (\vartheta))d\vartheta-1$. Meanwhile, all systems jump whenever $\omega(t, j) \in \mathcal{D}(t)$ with $\mathcal{D}(t)$ defined as the boundary of $\mathcal{C}(t)$. \red{Note that $\mathcal{C}(t)$ in~\eqref{equ:ct} is not periodic as it is defined as the sum of a sinusoidal and a triangular wave with frequencies that are not rational multiples of each other}. As a consequence, the resulting jump instants $\{t_j\}_{j = -\infty}^{+\infty}$ triggered by $\omega$ are non-periodic. In this example, we randomly set the initial condition as $\omega_0 = 0.2739$. Fig.~\ref{fig:Omega_Bound} shows the time histories of the state of the generator $\omega$ (dashed blue) and of the jump set $\mathcal{D}(t)$ (solid green), \blue{implying that Assumption~\ref{asmp:jumpSpacing} holds, as the signal generator repeatedly enters the jump set $\mathcal{D}(t)$ and triggers the jump indefinitely. Moreover, the state $\omega$ never stays inside or consistently approaches to $\mathcal{D}(t)$ after each jump, \textit{i.e.}, dwell-time between any two successive jumps never vanishes to zero. Fig.~\ref{fig:Omega_Bound} also indicates the satisfaction of Assumption~\ref{asmp:genSJ}, which requires forward and backward boundedness of $\omega$.} Note also that in the swapped interconnection, the hybrid filter~\eqref{equ:HybFilt} with the parameter values in~\eqref{equ:GenFilterParam} satisfies Assumption~\ref{asmp:filtQR}. In the simulation, the decaying input is set to be $u = [e^{-t}, e^{-2t}]^\top$.

% \begin{tablehere}
%     \centering
%     \tbl{Model parameters of the hybrid signal generator~\eqref{equ:HybGen} and the hybrid filter~\eqref{equ:HybFilt}.\label{tab:GenFilterParam}}
%     {\begin{tabular}{|c|c|c|c|}
%         \hline
%         $S = 2$ & $L_c = 0.2$ & $Q_c = 1$ & $R_c = 0.2$ \\
%         \hline
%         $J = 0.01$ & $L_d = 0.2$ & $Q_d = 0.1$ & $R_d = 0.2$ \\
%         \hline
%     \end{tabular}}
% \end{tablehere}

We now solve the two-sided problem, starting with the determination of the steady-state bounded solutions $\hat{\Pi}$ and $\hat{\Upsilon}$ of~\eqref{equ:HybPi} and~\eqref{equ:HybUpsilon}. As discussed in Remark~\ref{rmk:PiCompute}, although the initial condition is not important for $\hat{\Pi}$ due to its forward attractivity, we can still estimate~\eqref{equ:PiBodCond} by starting the simulation at $t = -10$ with a random initial condition $\Pi_i = [0.3377, 0.9001, 0.3692, 0.1112, 0.7803, 0.3897]^\top$. Consequently, the attractivity guarantees that the solution $\Pi$ converges to its steady state, obtaining a reasonable guess of $\hat{\Pi}_0$ in~\eqref{equ:PiBodCond} as 
\begin{equation*}
\hat{\Pi}_0 \!=\! \left[\begin{array}{rrrrrr}
    0.0480 & -0.0029 & -0.1166 & 0.0097 & 0.0028 & -0.1010
\end{array}\right]^\top\!.
\end{equation*}
Starting with this initial condition, Fig.~\ref{fig:Pi} shows the time histories of the six components of $\hat{\Pi}(t, j)$. It is easy to see that the components are not periodic. To illustrate the characterisation of the output response of system~(\ref{equ:HybFOM}) in the direct interconnection by~\eqref{equ:yss}, Fig.~\ref{fig:DirectOutSS} (top) shows the time histories of \red{the output $y = [y_1, y_2]^\top$} of system~(\ref{equ:HybFOM}) (solid blue) with randomly generated initial conditions 
\begin{equation}
\label{NOL-exx0}
x_0 \!=\! \left[\begin{array}{rrrrrr}
    4.3532 & 4.8615 & 0.5849 & 2.5416 & 3.7778 & 0.5499
\end{array}\right],
\end{equation}
and the time histories of the steady-state output \red{$y_{ss} = [y_{ss1}, y_{ss2}]^\top = C \hat{\Pi} \omega$} (dashed red). Meanwhile, Fig.~\ref{fig:DirectOutSS} (bottom) shows time history of the error $\|y - y_{ss}\|$ in semi-logarithmic scale. It can be seen that $y_{ss} = C \hat{\Pi} \omega$ characterises the steady-state of the output $y$ of the direct interconnection, where $C\hat{\Pi}$, by Definition~\ref{def:HybMomentDirect}, is the moment of system~\eqref{equ:HybFOM}.

\vspace{6mm}
\begin{figurehere}
\centering
\includegraphics[width=\columnwidth]{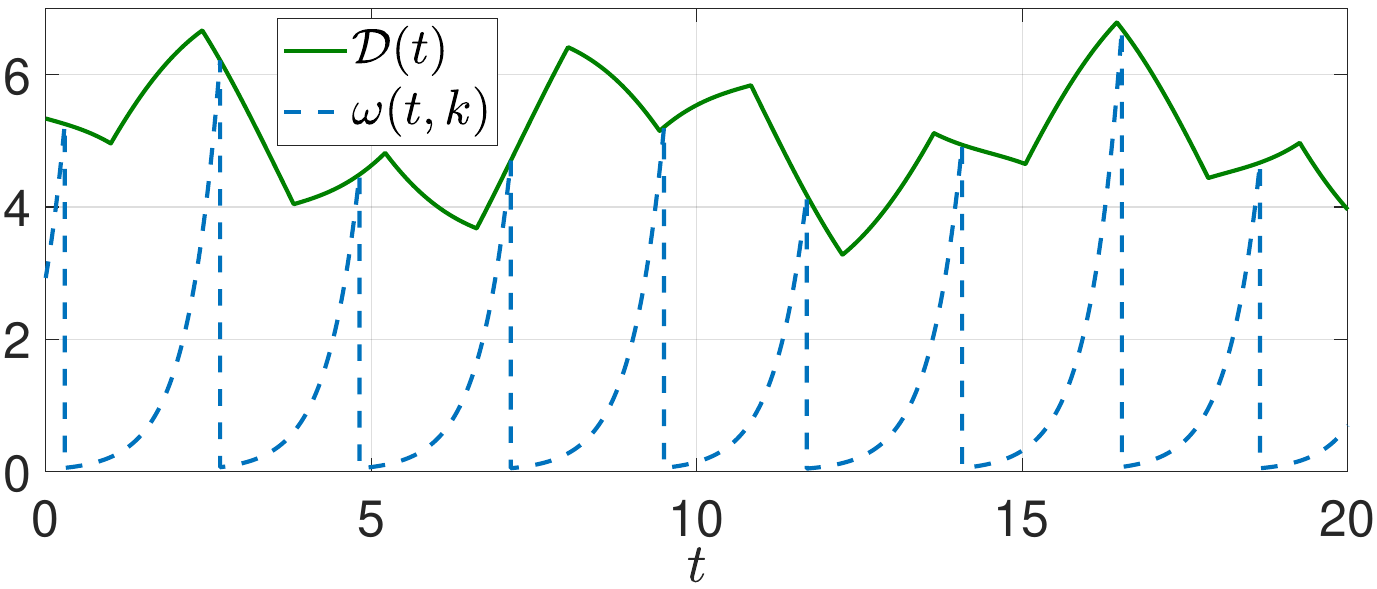}
\caption{Time histories of $\omega$ (dashed blue) and of the jump set $\mathcal{D}$ (solid green).}
\label{fig:Omega_Bound}
\end{figurehere}%

% \vspace{3mm}
\begin{figurehere}
\centering
\includegraphics[width=\columnwidth]{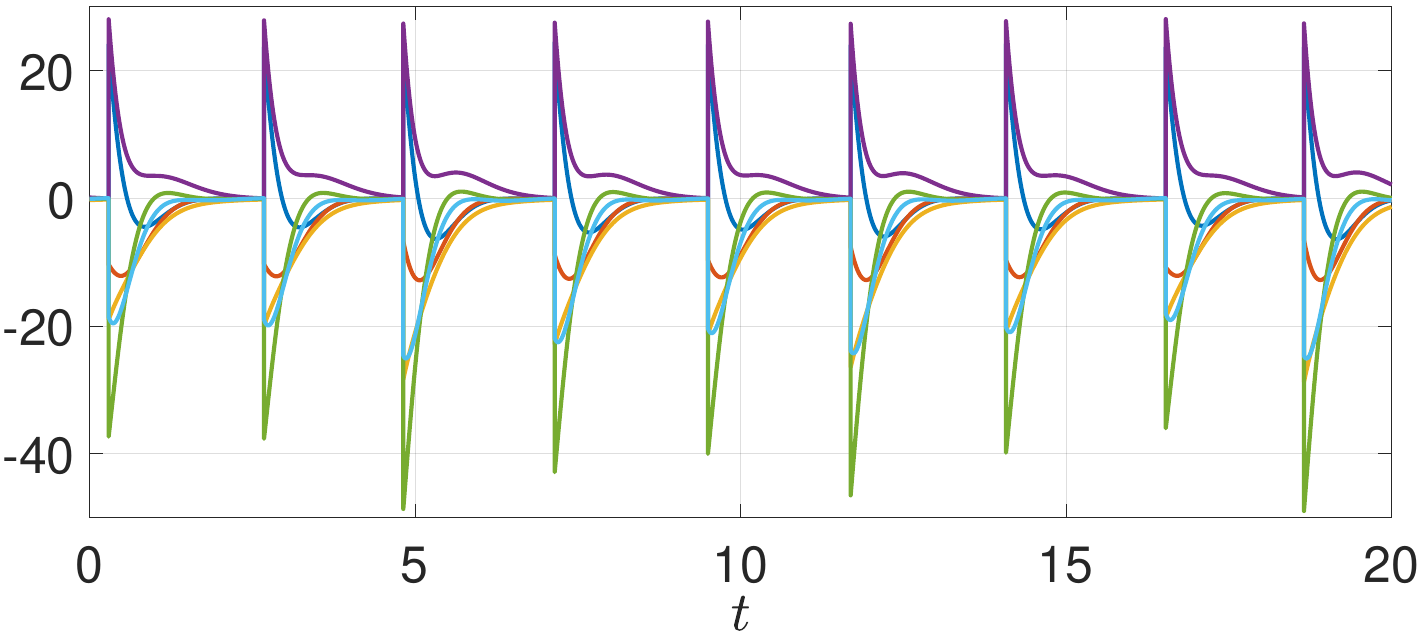}
\caption{Time histories of the six components of the matrix $\hat{\Pi}$.}
\label{fig:Pi}
\end{figurehere}%

\vspace{3mm}
\begin{figurehere}
\centering
\includegraphics[width=\columnwidth]{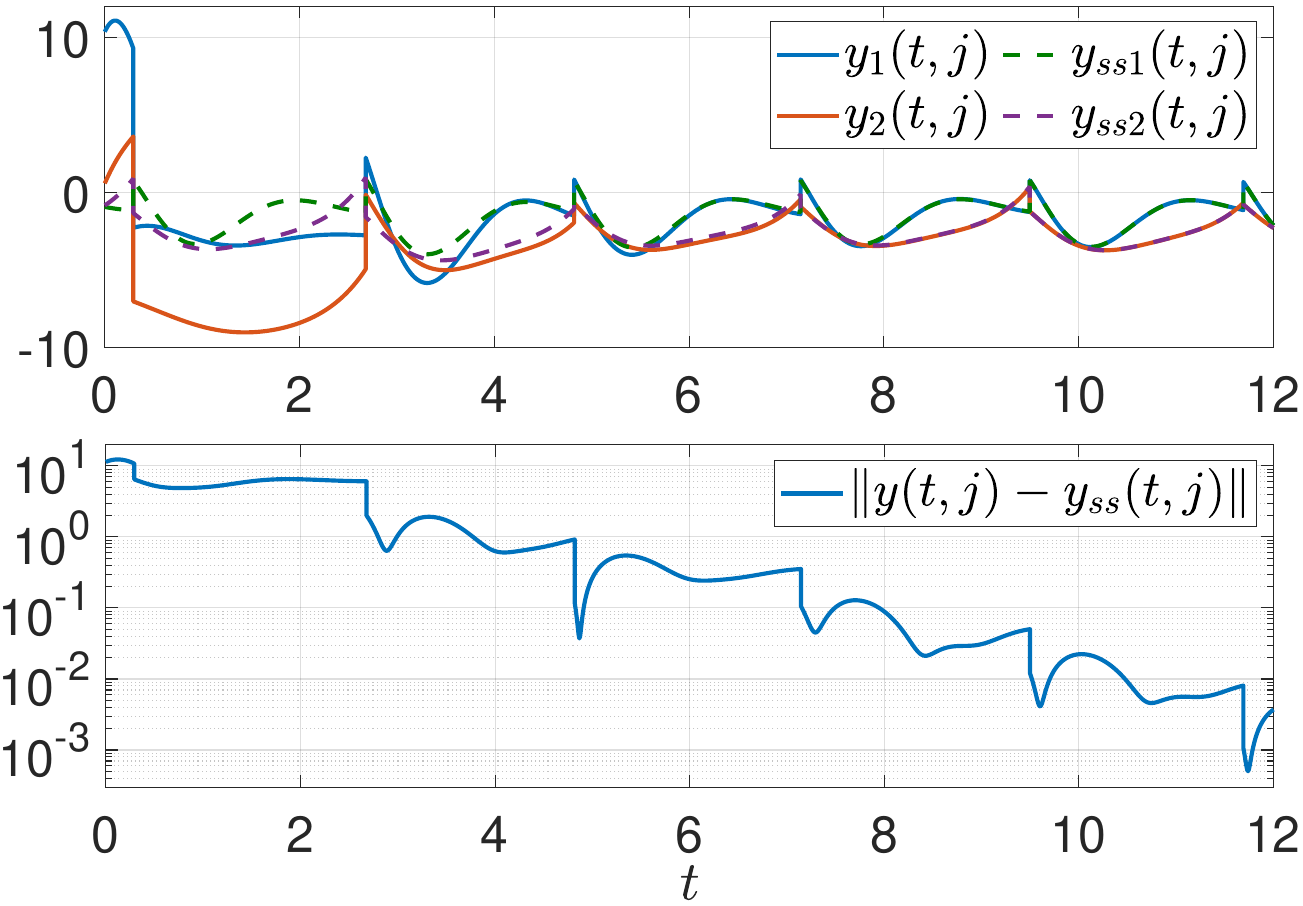}
\caption{Top graph: time histories of the output $y = [y_1, y_2]^\top$ of system~\eqref{equ:HybFOM} in the direct interconnection (solid blue) and time histories of the steady-state output $y_{ss} = [y_{ss1}, y_{ss2}]^\top = C \hat{\Pi} \omega$ (dashed red). Bottom graph: time history of the error $\|y - y_{ss}\|$ in semi-logarithmic scale.}
\label{fig:DirectOutSS}
\end{figurehere}%

We then determine the unique bounded solution to~\eqref{equ:HybUpsilon}. By Remark~\ref{rmk:UpsilonCompute}, the ``backward attractivity'' of $\hat{\Upsilon}$ allows computing its trajectories by solving~\eqref{equ:HybUpsilon} backwards in time, starting with a future time instant at a random final condition. Herein, we solve~\eqref{equ:HybUpsilon} backwards, starting at $t = 100$ with the random final condition $\Upsilon_{f} = [0.1504, -0.8804, -0.5304, -0.2937, 0.6424,$ $ -0.9692]$, resulting in the time history of the unique bounded solution $\hat{\Upsilon}$ as shown in Fig.~\ref{fig:Upsilon}. With this result, we then illustrate $\lim_{t+j \to +\infty} \|d(t, j) - \varpi_{ss}(t, j)\| = 0$ in Theorem~\ref{thm:UpsilonSS}. We numerically compute the state $d$ of system~\eqref{equ:Hybd} and the state $\varpi$ of the filter~\eqref{equ:HybFilt} with initial conditions $d_0 = \varpi_0 = 0$. Fig.~\ref{fig:SwappedOutSS} (top) displays the
time histories of the output $\varpi$ of the swapped interconnection (solid blue) and the state $d$ of system~\eqref{equ:Hybd} (dashed red), and Fig.~\ref{fig:SwappedOutSS} (bottom) shows the time history of their error $\|d - \varpi\|$ in semi-logarithmic scale. The results indicate that the response of $d$, which is characterised by the moment $(\hat{\Upsilon} B_c, \hat{\Upsilon}^+ B_d)$, characterises the steady-state response of the state $\varpi$ of the filter in the swapped connection.

% \vspace{3mm}
\begin{figurehere}
\centering
\includegraphics[width=\columnwidth]{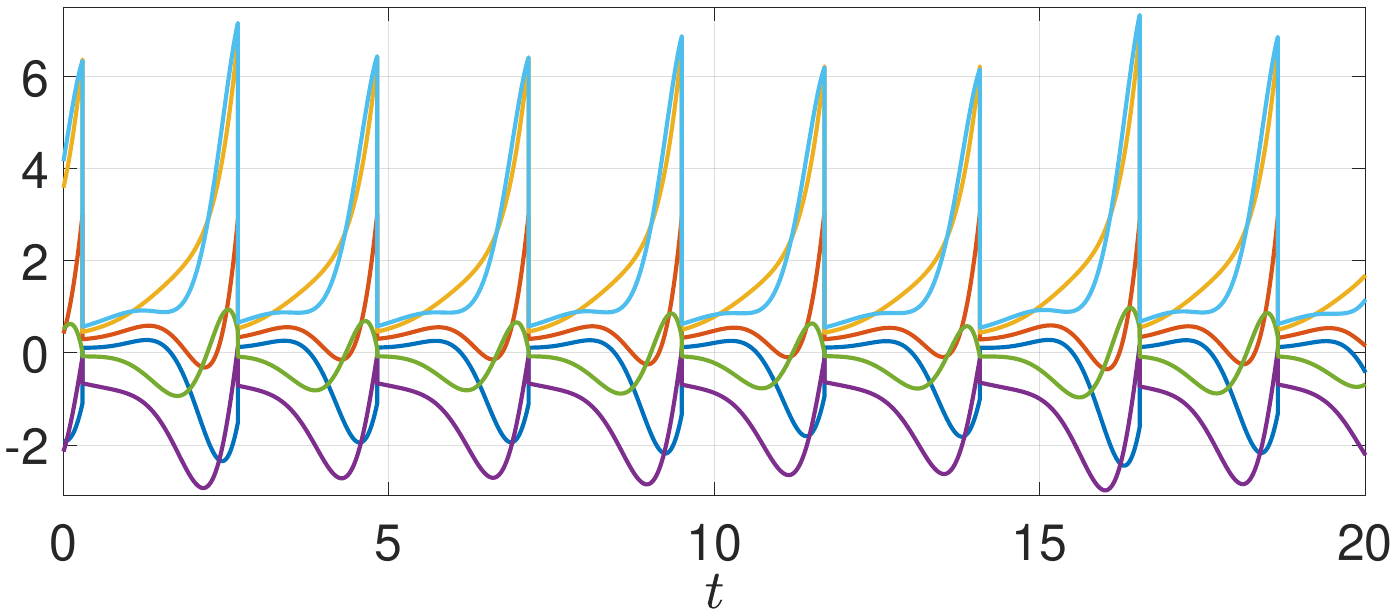}
\caption{Time histories of the six components of the matrix $\Upsilon$.}
\label{fig:Upsilon}
\end{figurehere}%

% \vspace{3mm}
\begin{figurehere}
\centering
\includegraphics[width=\columnwidth]{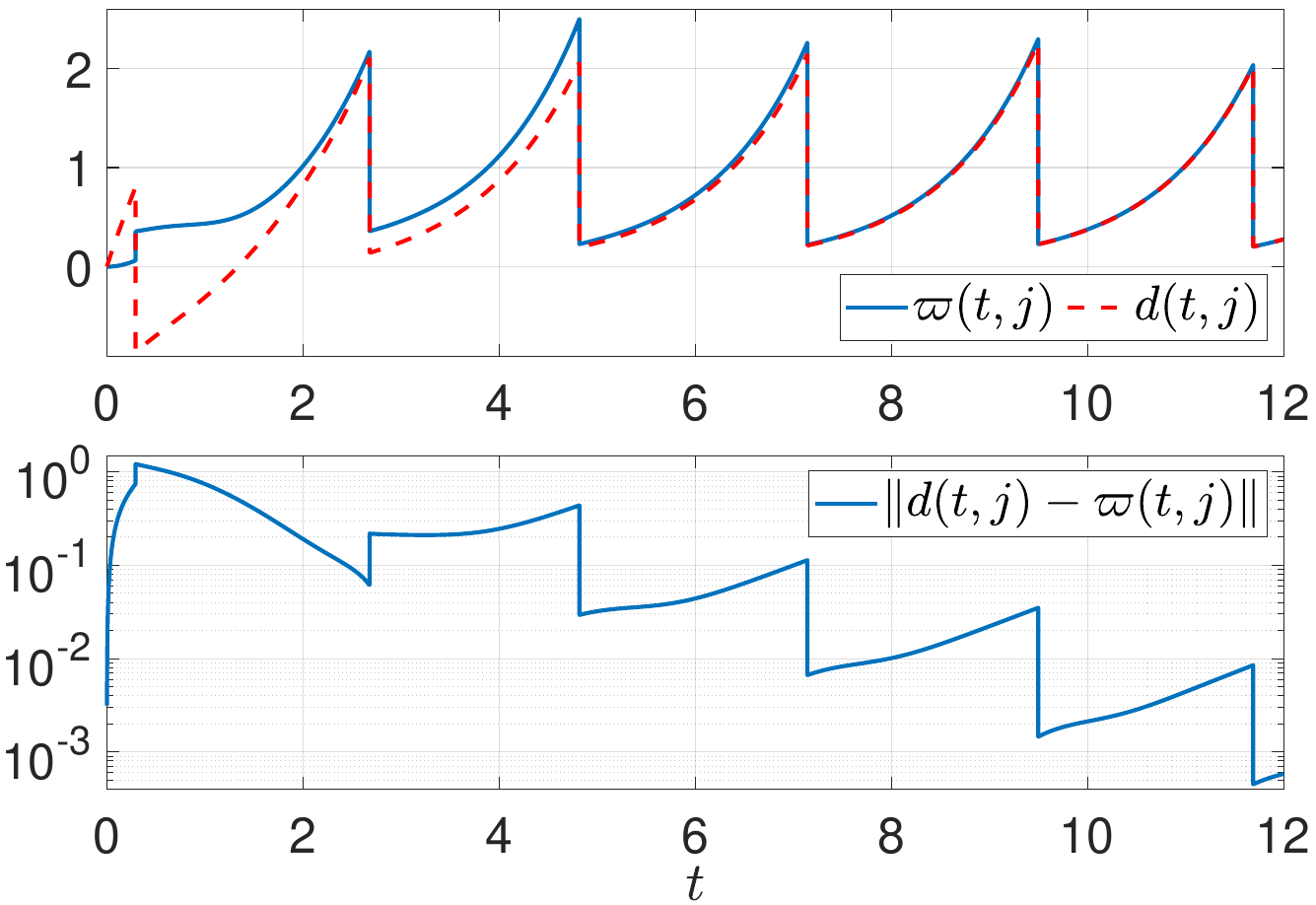}
\caption{Top graph: time histories of the state $\varpi$ of system~\eqref{equ:HybFilt} in the swapped interconnection (solid blue) and the state $d$ of system~\eqref{equ:Hybd} (dashed red). Bottom graph: time history of the error $\|d - \varpi\|$ in semi-logarithmic scale.}
\label{fig:SwappedOutSS}
\end{figurehere}%

\blue{With $\hat{\Pi}$ and $\hat{\Upsilon}$ numerically derived, one can construct reduced-order models achieving the one-sided matching problem in either Problem~\ref{prob:directMM} or~\ref{prob:swappedMM} by following Theorem~\ref{thm:directROM} or~\ref{thm:swapROM}, respectively. In this example, by following Theorem~\ref{thm:2SidedROM}, we show that it is possible to construct a single reduced-order model that solves both matching problems, \textit{i.e.}, the two-sided matching problem formulated in Problem~\ref{prob:twoSidedMM}.} To this end, we follow the design methods (i) and (ii) of Theorem~\ref{thm:2SidedROM} respectively, obtaining two reduced-order models (of order $\nu = 1$). \blue{Note that these two models are exponentially stable, which can be checked by simulating their free responses to see the existence of an exponentially-decaying bound on the state trajectory $\xi$ initialised with arbitrary initial conditions. As a result of the reduced-order model designed by method (i)}, Fig.~\ref{fig:ROM1Result} (top) shows the output responses of the direct (top) and swapped (bottom) interconnections of both system~\eqref{equ:HybFOM} (solid blue) and the reduced-order model (dashed red), where $\varpi_f$ and $\varpi_r$ denote the responses of the filter in the swapped interconnection with system~\eqref{equ:HybFOM} and the reduced-order model, respectively. This result indicates that the responses of the interconnections with system~\eqref{equ:HybFOM} are recovered by the reduced-order model. The same matching results, shown by Fig.~\ref{fig:ROM2Result}, exist in the reduced-order model derived by the design method (ii) of Theorem~\ref{thm:2SidedROM}. Similar to Fig.~\ref{fig:ROM1Result}, Fig.~\ref{fig:ROM2Result} depicts the output responses of the direct (top) and swapped (bottom) interconnections of both system~\eqref{equ:HybFOM} (solid blue) and the reduced-order model (dashed red), showing that the reduced-order model constructed by method (ii) solves the two-sided moment matching formulated in Problem~\ref{prob:twoSidedMM}. 
% \blue{Note that the matching results in Figs.~\ref{fig:ROM1Result} and~\ref{fig:ROM2Result} also indicate that the corresponding reduced-order model is exponentially stable.}
% In fact, by Lemmas~\ref{lem:PiExist} and~\ref{lem:UpsilonExist},  Theorems~\ref{thm:PiSS} and~\ref{thm:UpsilonSS}, and their proofs, the exponential stability is necessary to guarantee the existence of a bounded steady-state output (characterised by $\hat{\Pi}$ and $\hat{\Upsilon}$) of the direct/swapped interconnection and its attractivity. If the reduced-order model was not exponentially stable, the outputs $\psi$ and $\varpi_r$ in Figs.~\ref{fig:ROM1Result} and~\ref{fig:ROM2Result} would not converge to the steady-state outputs of $y$ and $\varpi_f$, respectively.

\vspace{2.5mm}
\begin{figurehere}
\centering
\includegraphics[width=\columnwidth]{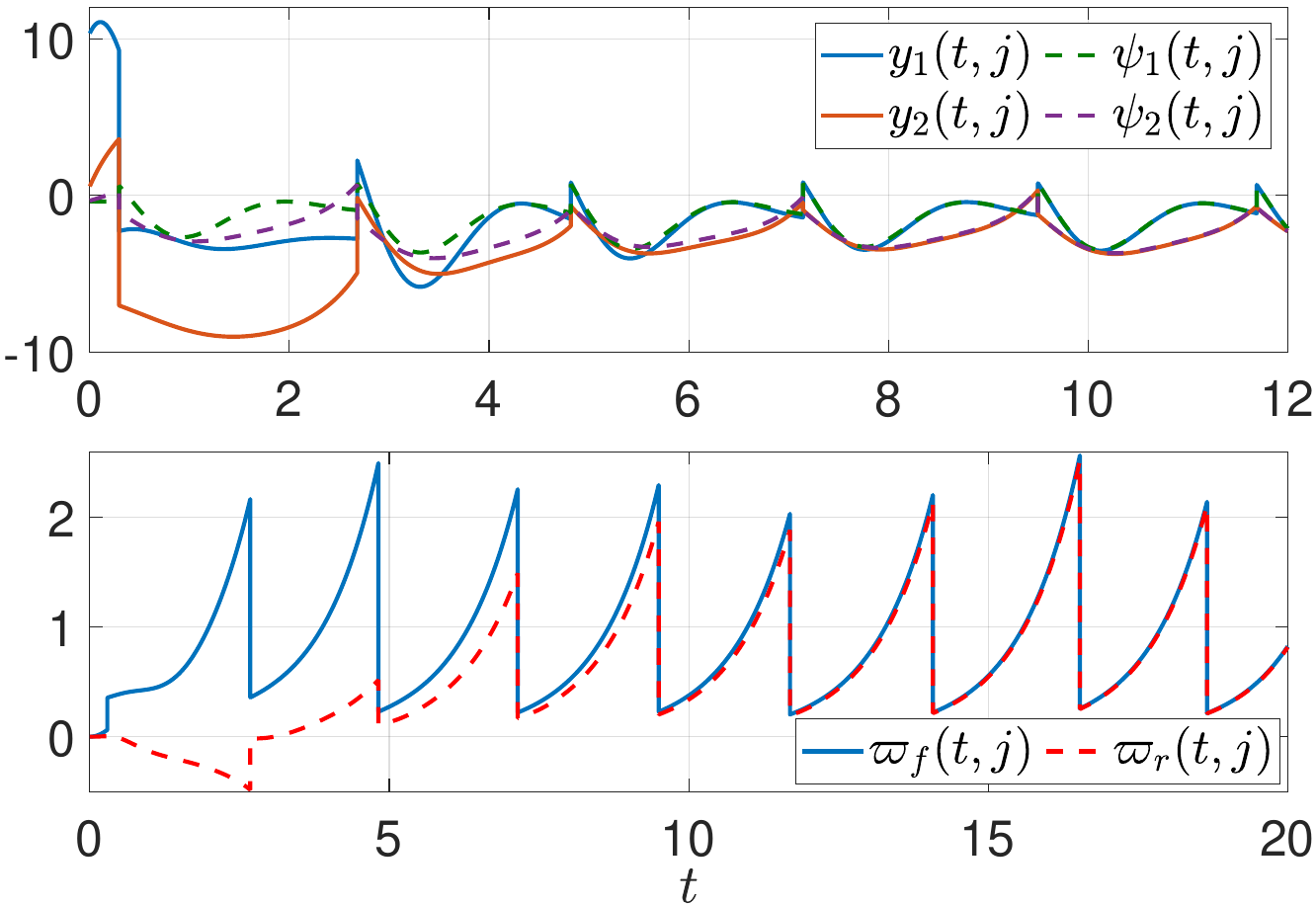}
\caption{Top graph: time histories of the output $y = [y_1, y_2]^\top$ of system~\eqref{equ:HybFOM} (solid blue) and the output $\psi = [\psi_1, \psi_2]^\top$ of the reduced-order model designed by method (i) of Theorem~\ref{thm:2SidedROM} (dashed red) in the direct interconnection. Bottom graph: time histories of the state of the filter when swapped interconnected with system~\eqref{equ:HybFOM} (the state $\varpi_f$ is in solid blue) and the reduced-order model designed by method (i) of Theorem~\ref{thm:2SidedROM} (the state $\varpi_r$ is in dashed red).}
\label{fig:ROM1Result}
\end{figurehere}%

\vspace{-2.5mm}
\begin{figurehere}
\centering
\includegraphics[width=\columnwidth]{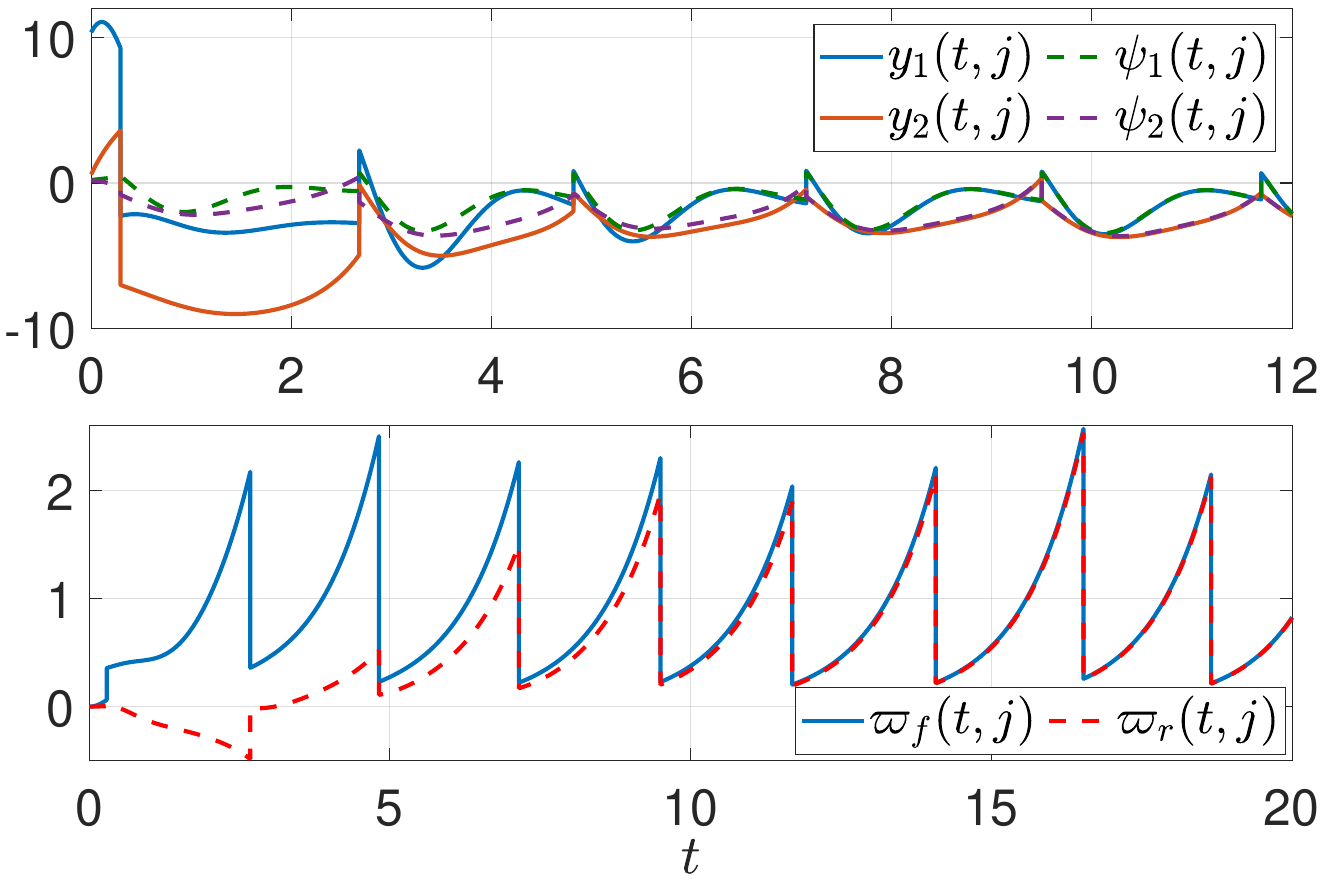}
\caption{Top graph: time histories of the output $y = [y_1, y_2]^\top$ of system~\eqref{equ:HybFOM} (solid blue) and the output $\psi = [\psi_1, \psi_2]^\top$ of the reduced-order model designed by method (ii) of Theorem~\ref{thm:2SidedROM} (dashed red) in the direct interconnection. Bottom graph: time histories of the state of the filter when swapped interconnected with system~\eqref{equ:HybFOM} (the state $\varpi_f$ is in solid blue) and the reduced-order model designed by method (ii) of Theorem~\ref{thm:2SidedROM} (the state $\varpi_r$ is in dashed red).}
\label{fig:ROM2Result}
\end{figurehere}%

\section{Conclusion}
\label{sec:concl}

In this paper we have addressed the model reduction problem for hybrid systems by means of moment matching. More specifically, we have first studied the moment matching problem for direct and swapped interconnections, deriving hybrid characterisations of steady-state responses of the interconnections and proposing families of hybrid reduced-order models. Based on these results, we have then solved the two-sided moment matching problem, showing two possible designs of the reduced-order model. In addition, all the solutions have been finally specialised to the case of hybrid systems with periodic jumps. A numerical simulation, in which the jumps of the hybrid systems are state-dependent, illustrates the results of the paper. Note that all proposed results can be naturally extended to the time-varying case.  

\bibliographystyle{ws-us}
\bibliography{bibdb}

\noindent\includegraphics[width=1in]{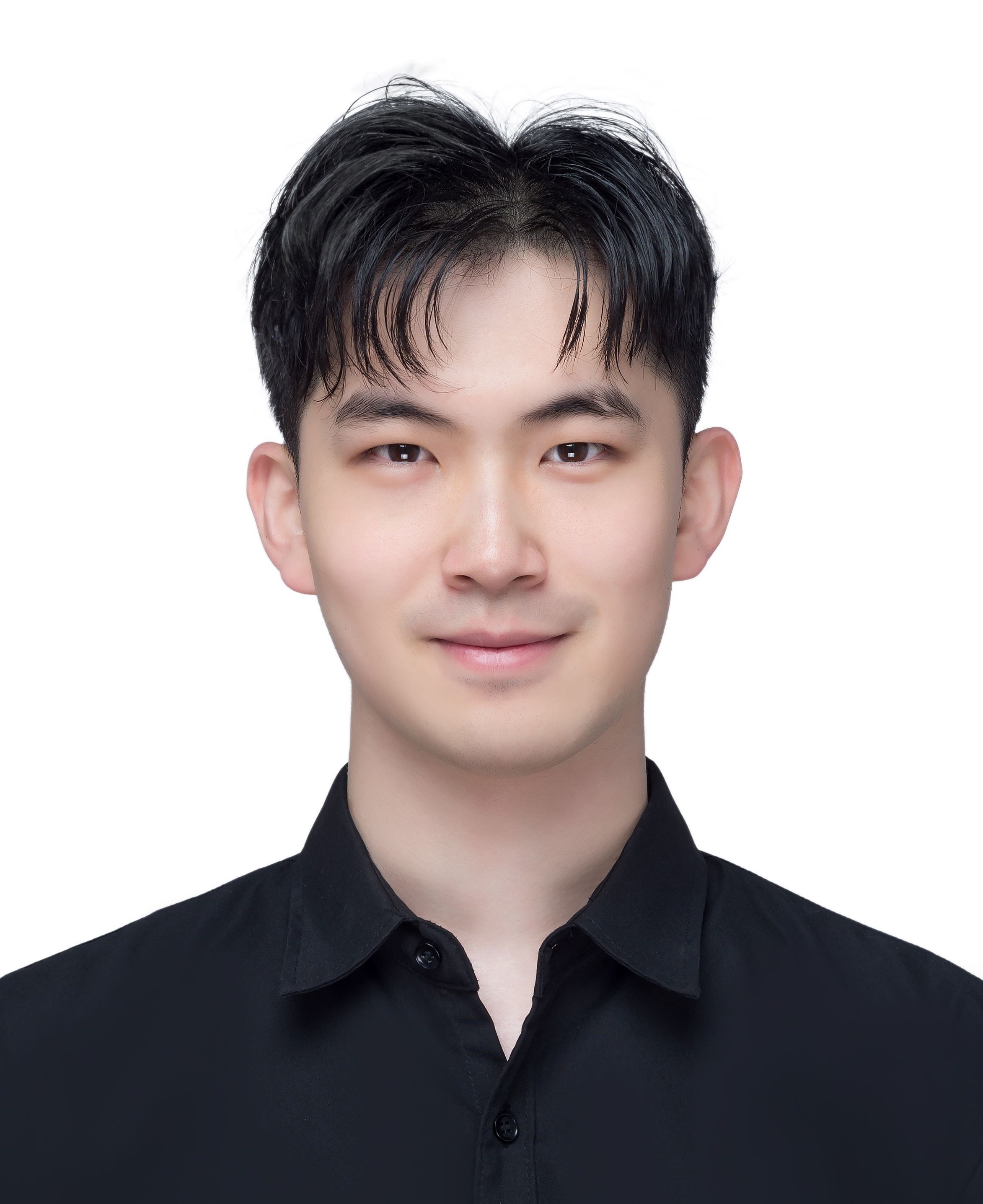}
{\bf Zirui Niu} (Graduate Student Member, IEEE) was born in Shandong, China, in 1997. He received the B.Eng. (Hons) degree in electrical and electronic engineering from the University of Liverpool, UK, in 2020, and the M.Sc. degree in control systems from Imperial College London, UK, in 2021. Since 2022, he has been pursuing a Ph.D. in control theory at Imperial College London, UK, supported by the CSC-Imperial Scholarship. His research interests include output regulation, hybrid systems, model reduction, approximate simulation of complex systems, and data-driven control. He was the recipient of the MSc Control Systems Outstanding Achievement Prize (2021) and the Hertha Ayrton Centenary Prize (2021) for outstanding performance in MSc Control Systems and the outstanding master’s thesis in the Electrical and Electronic Engineering Department at Imperial College London, respectively. \\

\noindent\includegraphics[width=1in]{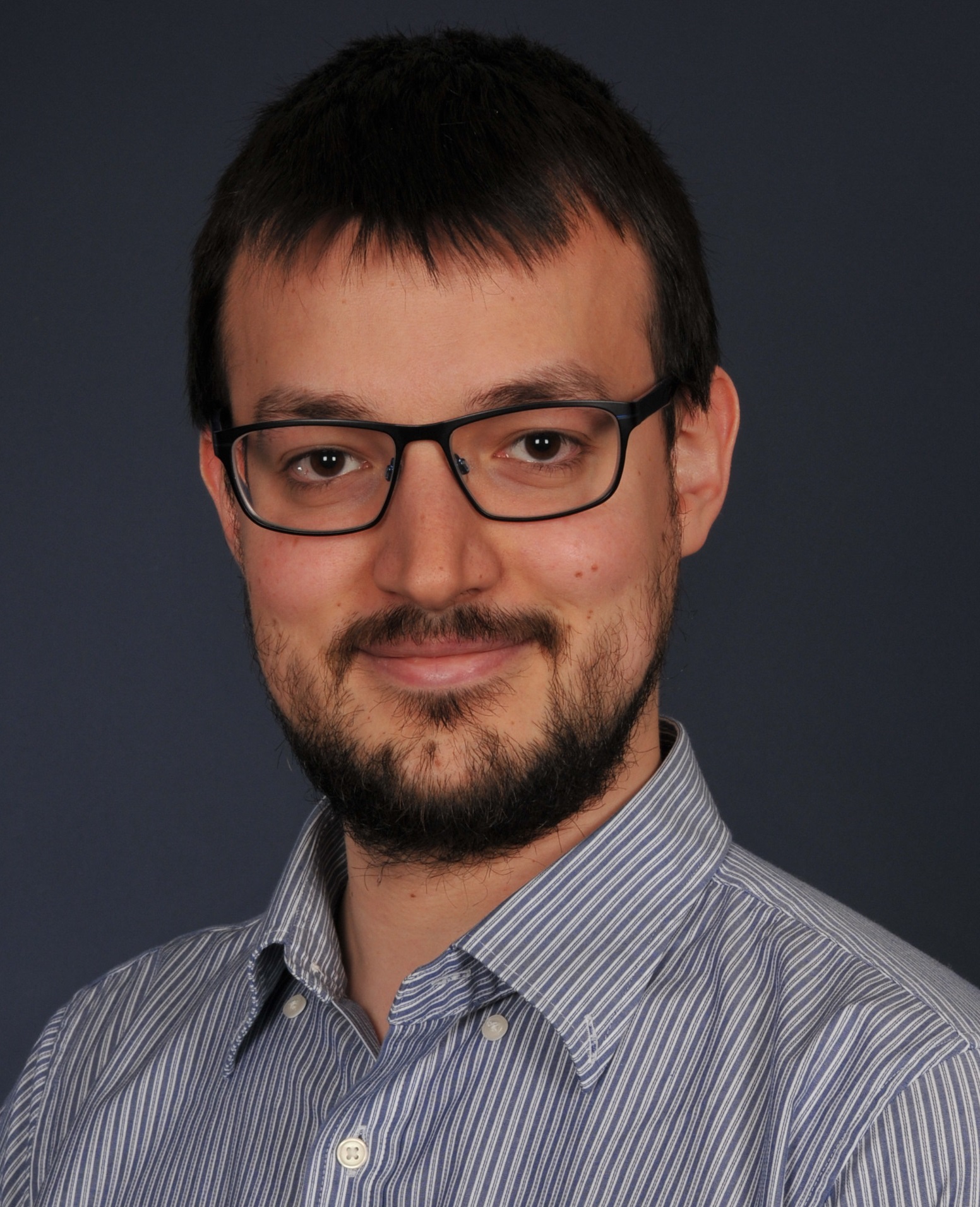}
{\bf Giordano Scarciotti} (Senior Member, IEEE) received his B.Sc. and M.Sc. degrees in Automation Engineering from the University of Rome ``Tor Vergata'', Italy, in 2010 and 2012, respectively. In 2012 he joined the Control and Power Group, Imperial College London, UK, where he obtained a Ph.D. degree in 2016. He also received an M.Sc. in Applied Mathematics from Imperial in 2020. He is currently an Associate Professor in Control Theory at Imperial. He was a visiting scholar at New York University in 2015, at University of California Santa Barbara in 2016, and a Visiting Fellow of Shanghai University in 2021-2022. He is the recipient of an Imperial College Junior Research Fellowship (2016), of the IET Control \& Automation PhD Award (2016), the Eryl Cadwaladr Davies Prize (2017), an ItalyMadeMe award (2017) and the IEEE Transactions on Control Systems Technology Outstanding Paper Award (2023). He is a member of the EUCA Conference Editorial Board, of the IFAC and IEEE CSS Technical Committees on Nonlinear Control Systems and has served in the International Programme Committees of multiple conferences. He is Associate Editor of Automatica and Guest Associate Editor of Nonlinear Analysis: Hybrid Systems. He was the National Organising Committee Chair for the EUCA European Control Conference (ECC) 2022, and of the 7th IFAC Conference on Analysis and Control of Nonlinear Dynamics and Chaos 2024, and the Invited Session Chair and Editor for the IFAC Symposium on Nonlinear Control Systems 2022 and 2025, respectively. He is the General Co-Chair of ECC 2029. \\

\noindent\includegraphics[width=1in]{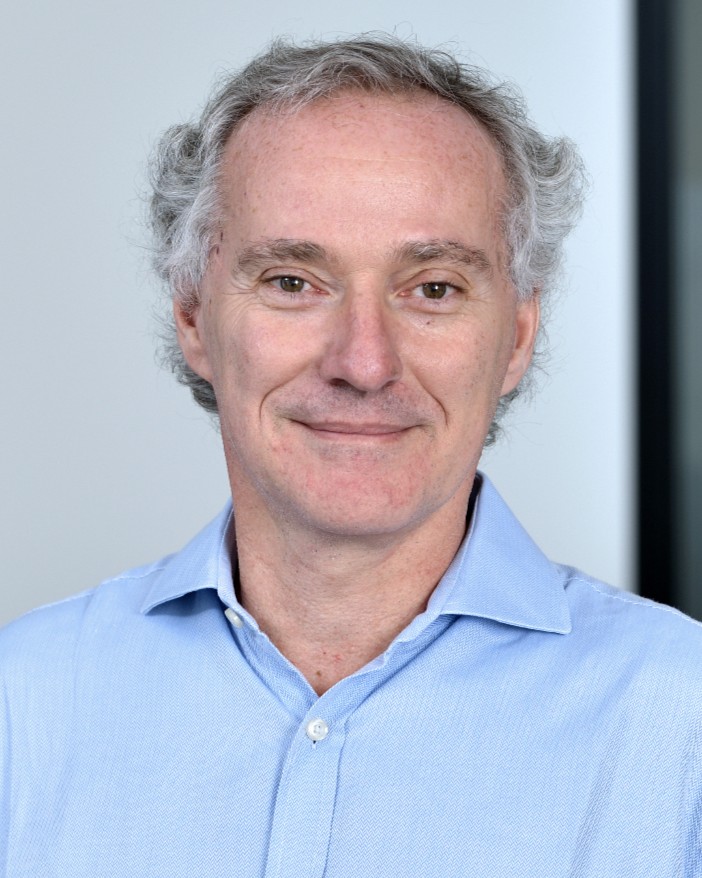}
{\bf Alessandro Astolfi} (Fellow, IEEE) graduated in Electronic engineering from the University of Rome in 1991. In 1992 he joined ETH-Zurich where he obtained a M.Sc. in Information Theory in 1995 and the Ph.D. degree with Medal of Honor in 1995. In 1996 he was awarded a Ph.D. from the University of Rome "La Sapienza". Since 1996 he has been with the Dept. of Electrical and Electronic Engineering of Imperial College London, where he is currently Professor of Nonlinear Control Theory. From 2010 to 2022 he served as Head of the Control and Power Group at Imperial College London and from 1998 to 2003 he was an Associate Professor at the Dept. of Electronics and Information of the Politecnico of Milano. Since 2005 he has also been a Professor at Dipartimento di Ingegneria Civile e Ingegneria Informatica, University of Rome Tor Vergata. His research focuses on mathematical control theory and applications, with particular emphasis on discontinuous stabilization, robust and adaptive control, observer design, and model reduction. He is the recipient of the IEEE CSS A. Ruberti Young Researcher Prize (2007), the IEEE RAS Googol Best New Application Paper Award (2009), the IEEE CSS George S. Axelby Outstanding Paper Award (2012), the Automatica Best Paper Award (2017), and the IEEE Transactions on Control Systems Technology Outstanding Paper Award (2023). He is a "Distinguished Member" of the IEEE CSS, IEEE Fellow, IFAC Fellow, IET Fellow, Member of the Academia Europaea, and of ITATEC. He served as Associate Editor for several journals; as Area Editor for the Int. J. of Adaptive Control and Signal Processing; as Senior Editor for the IEEE Trans. on Automatic Control; and as Editor-in-Chief for the European Journal of Control. He is currently Editor-in-Chief of the IEEE Trans. on Automatic Control (2018-). He served as Chair of the IEEE CSS Conference Editorial Board (2010-2017). He has served as Chair of the IEEE CSS Antonio Ruberti Young Researcher Prize (2015-2021); he is Vice Chair of the IFAC Technical Board (2020-2026).

\end{multicols}
% \end{doublespace}
\end{document}